\DeclareOldFontCommand{\it}{\normalfont\itshape}{\mathit}
\definecolor{darkred}{rgb}{0.6,0.0,0.1}
\definecolor{darkgreen}{rgb}{0,0.5,0}
\definecolor{darkblue}{rgb}{0,0,0.5}
\numberwithin{equation}{section}
\newtheorem{theorem}{Theorem}[section]
\newtheorem{corollary}{Corollary}[section]
\newtheorem{example}{Example}[section]
\newtheorem{A}{Assumption}
\newtheorem{lemma}{Lemma}[section]
\newtheorem{remark}{Remark}[section]
\newtheorem{definition}{Definition}
\theoremstyle{definition}
\DeclareMathOperator{\Evtex}{\mathrm{E}}
\def\Var{\mathop{{\mathbb V}ar}\nolimits}%
\newcommand{\Jad}{J^\circ}
\newcommand{\Kad}{K^\circ}
\newcommand{\llVert}{\|}
\newcommand{\rrVert}{\|}
\newcommand{\llvert}{\|}
\newcommand{\rrvert}{\|}
\DeclareMathOperator*{\argmin}{arg\,min}
\newcommand{\1}{{\mathop{\mathbbm 1}}}
\renewcommand{\cite}{\citet}
\def\Cov{\mathop{ {\mathbb C}ov}\nolimits}%
\begin{document}
\title{Adaptive, Rate-Optimal Hypothesis Testing in Nonparametric IV Models\thanks{We thank three anonymous referees and Denis Chetverikov for very constructive comments. Earlier versions have been presented
at numerous workshops and conferences since February 2019. We thank Don Andrews, Tim Armstrong, Tim Christensen, Giovanni Compiani,
 Enno Mammen, Peter Mathe, and other participants at various meetings for helpful comments. The empirical result reported in Section~\protect\ref{subsec:scanner:data} is our own analyses using data provided through the Nielsen Datasets (from The Nielsen Company (US),
   LLC) at the Kilts Center for Marketing Data Center at The University of Chicago Booth School of
Business. Nielsen is not responsible for, had no role in, and was not involved in analyzing and preparing the empirical findings
 reported herein. Breunig gratefully acknowledges the support of the Deutsche Forschungsgemeinschaft (DFG, German Research Foundation) under Germany's Excellence Strategy -- EXC-2047/1 -- 390685813. Chen's research is supported by Cowles Foundation for Research in Economics.}}

 \author{{\textsc{  Christoph Breunig}}\thanks{Department of Economics,
University of Bonn, Adenauerallee 24-26, 53113 Bonn, Germany. Email: \url{	cbreunig@uni-bonn.de}}
  \and{\textsc{Xiaohong Chen}}\thanks{Cowles Foundation for Research in Economics, Yale University, Box 208281, New Haven, CT 06520, USA. Email: \url{xiaohong.chen@yale.edu}}}
\date{First version: August 2018; Latest revision: \today }
\maketitle
\begin{abstract}
\vskip -1cm
\singlespacing
{\small We propose a new adaptive hypothesis test for inequality
(e.g., monotonicity, convexity) and equality (e.g., parametric, semiparametric)
restrictions on a structural function in a nonparametric instrumental variables
(NPIV) model. Our test statistic is based on a modified leave-one-out sample
analog of a quadratic distance between the restricted and unrestricted
sieve two-stage least squares estimators. We provide computationally simple,
data-driven choices of sieve tuning parameters and Bonferroni adjusted
chi-squared critical values. Our test adapts to the unknown smoothness
of alternative functions in the presence of unknown degree of endogeneity
and unknown strength of the instruments. It attains the adaptive minimax
rate of testing in $L^{2}$. That is, the sum of the supremum of type I
error over the composite null and the supremum of type II error over nonparametric
alternative models cannot be minimized by any other tests for NPIV models
of unknown regularities. Confidence sets in $L^{2}$ are obtained by inverting
the adaptive test. Simulations confirm that, across different strength
of instruments and sample sizes, our adaptive test controls size and its
finite-sample power greatly exceeds existing non-adaptive tests for monotonicity
and parametric restrictions in NPIV models. Empirical applications to test
for shape restrictions of differentiated products demand and of Engel curves
are presented.}
\end{abstract}
\noindent \textbf{Keywords}:
{Sieve two-stage least squares},
{shape restrictions},
{Hilbert projection onto closed convex sets},
{composite hypothesis},
{nonparametric alternatives},
{minimax rate of testing},
{adaptive hypothesis testing},
{power},
{random exponential scan},
{sieve U-statistics}

\section{Introduction}
\label{sec1}

In this paper, we propose computationally simple, optimal hypothesis testing
in a nonparametric instrumental variables (NPIV) model. The maintained
assumption is that there is a nonparametric structural function $h$ satisfying
the NPIV model
%
\begin{align}
\label{npiv:model} \Evtex \bigl[Y-h(X)|W\bigr] = 0, 
\end{align}
where $X$ is a $d_{x}$-dimensional vector of possibly endogenous regressors,
$W$ is a $d_{w}$-dimensional vector of conditional (instrumental) variables
(with $d_{w}\ge d_{x}$), and the joint distribution of $(Y,X,W)$ is unspecified
beyond (\ref{npiv:model}). With the danger of abusing terminology, we call
a function $h$ satisfying model (\ref{npiv:model}) a NPIV function. We
are interested in testing a (composite) null hypothesis that a NPIV function
$h$ satisfies some simplifying economic restrictions, such as parametric
or semiparametric equality restrictions or inequality restrictions (e.g.,
nonnegativity, monotonicity, convexity, supermodularity, quasi-concavity).
Our new test builds on a simple data-driven choice of tuning parameter
that ensures asymptotic size control and non-trivial power uniformly against
a large class of nonparametric alternatives.

Let $L^{2}(X)$ denote the space of square integrable function of $X$. Our
new test is designed to test a composite null hypothesis
$\mathcal{H}_{0}$ that is a closed, convex strict subset of
$L^{2}(X)$ satisfying the NPIV model (\ref{npiv:model}). Before presenting
the theoretical properties of our new test, we derive the
\textit{minimax rate of testing} $r_{n}$ in $L^{2}$, which is the fastest
rate of separation in root-mean squared distance between the null hypothesis
$\mathcal{H}_{0}$ and the class of nonparametric alternative NPIV functions
$\mathcal{H}_{1} (\delta r_{n})$ that enables consistent testing uniformly
over the latter, with the rate $r_{n}$ shrinking to zero as the sample
size $n$ goes to infinity and $\delta >0$ being a finite constant independent
of $n$. We establish the minimax result in two steps: First, we derive,
uniformly over all possible tests, a lower bound for the sum of the supremum
of type I error over $\mathcal{H}_{0}$ and the supremum of type II error
over $\mathcal{H}_{1} (\delta r_{n})$ separated from the null hypothesis
by a rate $r_{n}$. Thus, there exists no other test that provides a better
performance with respect to the sum of those errors. Second, we propose
a test whose sum of the type I and the type II errors is bounded from above
(by the nominal level) at the same separation rate $r_{n}$. This test is
based on a modified leave-one-out sample analog of a quadratic distance
between the restricted and unrestricted sieve NPIV (i.e., sieve two-stage
least squares) estimators of $h$. The test is shown to attain the minimax
rate of testing $r_{n}$ when the sieve dimension is chosen optimally according
to the smoothness of the nonparametric alternative functions and the degree
of the ill-posedness of the NPIV model (that depends on the smoothness
of the conditional density of $X$ given $W$). This test is called minimax
rate-optimal (with known model regularities).

In practice, the smoothness of the nonparametric alternative functions
and the degree of the ill-posedness of the NPIV model are both unknown.
Our new test is a data-driven version of the minimax rate-optimal test
that adapts to the unknown smoothness of the nonparametric alternative
NPIV functions in the presence of the unknown degree of the ill-posedness.
Our test rejects the null hypothesis as soon as there is a sieve dimension
(say the smallest sieve dimension) in an estimated index set such that
the corresponding normalized leave-one-out quadratic distance estimator
exceeds 1; and fails to reject the null otherwise. The normalization builds
on Bonferroni corrected chi-squared critical values. The simple Bonferroni
correction is computed using the cardinality of the estimated index set,
which is in turn determined by a random exponential scan (RES) procedure
that automatically takes into account the unknown degree of ill-posedness.

We show that our new test attains the minimax rate of testing in
$L^{2}$ for severely ill-posed NPIV models, and is within a
$\sqrt{\log \log (n)}$ multiplicative factor of the minimax rate of testing
for mildly ill-posed NPIV models. This extra $\sqrt{\log \log (n)}$ term
is the necessary price to pay for adaptivity to unknown smoothness of nonparametric
alternative functions.\footnote{This is needed even for adaptive minimax
hypothesis testing in nonparametric regressions (without endogeneity);
see \citet{spokoiny1996}, \citet{horowitz2001}, and
\citet{guerre2005}.} A key technical part to establish our adaptive minimax
rate of testing in $L^{2}$ is to derive a sharp upper bound on the convergence
rate of a leave-one-out sieve estimator of a quadratic functional of a
NPIV function, which is proved using an exponential inequality for U-statistics
with increasing dimensions. We show that our adaptive test has asymptotic
size control under a composite null by deriving a tight, slowly divergent
lower bound for Bonferroni corrected chi-squared critical value. By inverting
our adaptive tests, we obtain $L^{2}$ confidence sets on restricted NPIV
functions. These confidence sets are free of additional choices of tuning
parameters. The adaptive minimax rate of testing determines the
$L^{2}$ radius of the confidence sets.

Our adaptive minimax $L^{2}$ rate of testing decreases to zero strictly
faster than the optimal $L^{2}$ rate of estimation (with known smoothness)
for mildly ill-posed NPIV models, and coincides with the optimal
$L^{2}$ rate of estimation for severely ill-posed NPIV models. In the existing
literature on testing for parametric, semiparametric, or shape NPIV restrictions
against nonparametric alternatives, all of the non-adaptive tests achieve
their asymptotic size controls by choosing some deterministic tuning parameters
such that the $L^{2}$ estimation bias for $h$ is of a smaller order than
the $L^{2}$ standard deviation (aka, under-smoothing), which leads to a
$L^{2}$ separation rate of testing shrinking to zero strictly slower than
the optimal $L^{2}$ rate of estimation, and hence strictly slower than
our adaptive minimax $L^{2}$ rate of testing for both mildly and severely
ill-posed NPIV models. In particular, among all of the existing NPIV tests
that have asymptotic size controls, our new adaptive test is asymptotically
more powerful, uniformly over a larger class of nonparametric alternatives.

In Monte Carlo simulations, we analyze the finite-sample properties of
our adaptive test for the null of monotonicity or a parametric hypothesis
using various simulation designs from others' work. The simulations reveal
the following patterns of our adaptive test in comparison to recent non-adaptive
tests: First, while the competing tests can be over-sized at the boundary
of the null hypothesis, our test delivers adequate size control under different
composite null hypotheses, across different sample sizes, and for varying
strengths of instruments. Second, our test is as powerful as the competing
tests when alternative functions are relatively simple, and is more powerful
when alternatives are more nonlinear/complex. The great power gains of
our adaptive test are present even for relatively weak strength of instruments
or small sample sizes. These findings highlight the importance of our data-driven
choice of the sieve dimension to simultaneously ensure size control and
powerful performance uniformly against a larger class of nonparametric
alternative NPIV functions. Finally, unlike many NPIV tests using bootstrapped
critical values, our powerful adaptive test uses simple Bonferroni corrected
chi-squared critical values and hence is fast to compute.

We present two empirical applications of our adaptive test. The first is
to test the connected substitutes shape restrictions in demand for differential
products using market level data (e.g., \citet{berry2014}). The second
is to test for monotonicity, convexity, and parametric forms in Engel curves
(e.g., \citet{BCK07econometrica}).

There is a growing number of papers on testing equality and inequality
(shape) restrictions in NPIV type models. See, for example,
\citet{Horowitz2006}, \citet{Santos12}, \citet{Breunig2015},
\citet{chen2013}, \citet{chernozhukov2015}, \citet{zhu2020},
\citet{fang2019} and references therein.\footnote{There are also papers
on NPIV estimation by directly imposing shape restrictions; see, for example,
\citet{horowitzlee2012}, \citet{blundell2017}, \citet{CW2017}, and
\citet{Freyberger2019}. See \citet{chetverikov2018} for a review on shape
restrictions and \citet{chetverikov2019testing} for adaptive kernel testing
for monotonicity of a regression without endogeneity.} Most of these papers
assume that some non-random sequences of key tuning (regularization) parameters
satisfy some theoretical rate conditions with known smoothness of NPIV
functions. None of the published work achieves the adaptive minimax
$L^{2}$ rate of testing for NPIV models. Our paper makes an important contribution
by providing the first data-driven choice of a key tuning parameter that
leads to a new minimax rate-adaptive and powerful test for equality and
inequality (shape) restrictions in NPIV models. Our paper also complements
a concurrent work by \citet{chen2021}, which constructs honest and near-adaptive
uniform confidence bands for a NPIV function and its partial derivatives
using a bootstrapped Lepski's procedure (in sup-norm).

The rest of the paper is as follows. Section~\ref{sec:test:def} describes
our new hypothesis test. Section~\ref{sec:minimax} establishes the oracle
minimax optimal rate of testing. Section~\ref{sec:adapt:test} shows that
this minimax optimal rate is attained (within a
$\sqrt{\log \log (n)}$ term) by our new test. Section~\ref{sec:mc} presents
simulation studies and Section~\ref{sec:empirical} provides empirical illustrations.
Appendices~\ref{sec:appendix} and \ref{appendix:adapt:ST} present proofs
of Theorems \ref{thm:minimax:test:lower}, \ref{thm:test:upper},
\ref{thm:adapt:test}, and \ref{thm:adapt:est:test}. The Supplemental Material includes Appendix~C for additional
 simulation results, Appendix~D for proofs of Corollaries
\ref{coro:CB} and \ref{thm:diam}, and Appendix~E for additional lemmas and their proofs.

\paragraph{Basic Notation}
For a random variable $X$, we let $L^{2}(X)$ denote the Hilbert space of
real-valued measurable functions $\phi $ of $X$ with finite second moment,
with the norm $\|\phi \|_{L^{2}(X)}:=\sqrt{\Evtex [\phi ^{2}(X)]}$ and the
inner product $\langle \cdot ,\cdot \rangle _{X}$. Let
$\|\phi \|_{\infty}:=\sup_{x} |\phi (x)|$ be the sup-norm and
$L^{\infty}=\{\phi :\|\phi \|_{\infty}<\infty \}$. For a matrix $M$, let
$M'$ be its transpose and $M^{-}$ be its generalized inverse. For a
$J\times J$ matrix $M=(M_{jl})_{1\leq j,l\leq J}$, we define its Frobenius
norm as $\|M\|_{F}=\sqrt{\sum_{j,l=1}^{J}M_{jl}^{2}}$. Let
$\|\cdot \|$ be the Euclidean norm when applied to a vector and the operator
norm induced by the Euclidean norm when applied to a matrix. For sequences
of positive real numbers $\{a_{n}\}$ and $\{b_{n}\}$, we use the notation
$a_{n} \lesssim b_{n}$ if
$\limsup_{n\to \infty}a_{n}/b_{n}<\infty $, and $a_{n}\sim b_{n}$ if
$a_{n}\lesssim b_{n}$ and $b_{n}\lesssim a_{n}$.

\section{Preview of the Adaptive Hypothesis Testing}
\label{sec:test:def}

We first introduce the null and the alternative hypotheses as well as the
concept of minimax rate of testing in Section~\ref{subsec:hyp}. We then
describe our new rate-adaptive test for NPIV type models in Section~\ref{subsec:sst}.

\subsection{Null Hypotheses and Nonparametric Alternatives}
\label{subsec:hyp}

Let $\mathcal H$ denote a closed subset of $L^{2}(X)$ that captures some
unknown degree of smoothness. Let
$\{(Y_{i},X_{i},W_{i})\}_{i=1}^{n}$ denote a random sample from the distribution
$\mathrm{P}_{h}$ of $(Y,X,W)$ satisfying the NPIV model (\ref{NPIV:maintained}):
%
\begin{align}
\label{NPIV:maintained} Y=h(X)+U, \quad \text{where } \Evtex _{h}[U|W]=0 \text{ and } h
\in \mathcal H. 
\end{align}
Here, $\Evtex _{h}$ denotes the (conditional) expectation under
$\mathrm{P}_{h}$. In this paper, we assume that the joint distribution
of $(X,W)$ does not depend on $h\in \mathcal H$ and that the conditional
density of $X$ given $W$ is continuous on its support. The conditional
expectation operator $T: L^{2}(X) \mapsto L^{2}(W)$ given by
$T h(w) := \Evtex [h(X)|W= w]$ is uniquely defined by the conditional density
of $X$ given $W$ and hence does not depend on $h$. We can then equivalently
express the NPIV model (\ref{NPIV:maintained}) as
$\Evtex _{h}[Y|W]=(T h)(W)$ for $h\in \mathcal H$. For ease of presentation,
we mainly consider a nonparametric class of functions as the maintained
hypothesis $\mathcal H$. Nevertheless, our theoretical results allow for
semiparametric structures $\mathcal H$ as well (see Section~\ref{subsec:unknown:null}).

Let $\mathcal H_{0}$ denote the null class of functions in
$\mathcal H$ that satisfies a conjectured restriction in (\ref{NPIV:maintained}).
We assume that $\mathcal H_{0}$ is a
\emph{nonempty, closed and convex, strict subset} of $\mathcal H$. For any
$h\in \mathcal H$, there is a unique element
$\Pi _{\mathcal H_0}h \in \mathcal H_{0}$ such that
$\inf_{\phi \in \mathcal H_{0}}\|h-\phi \|_{L^{2}(X)}=\|h-\Pi _{
\mathcal H_0}h\|_{L^{2}(X)}$ (by the Hilbert projection theorem). In addition
to a simple null $\mathcal H_{0}=\{h_{0}\}$ (with a known function
$h_{0}\in \mathcal H$), we allow for general parametric, semi/nonparametric
equality and inequality composite null restrictions. We present two examples
of composite null restrictions below (see Section~\ref{subsec:unknown:null} for additional examples).

\begin{example}[Nonparametric shape restrictions]%
\label{ex:shape}
$\mathcal H_{0}$ can be a closed convex subset of $\mathcal H$ determined
by inequality restrictions such that
$\mathcal H_{0}= \{h\in \mathcal H:  \partial ^{l} h\geq 0 \}$,
where $\partial ^{l} h$ denotes the $l$th partial derivative of $h$ with
respect to components of $x$. This allows for testing nonnegativity ($l=0$),
monotonicity ($l=1$), or convexity ($l=2$). We can also test for supermodularity
restrictions on NPIV functions corresponding to
$\mathcal H_{0}= \{h\in \mathcal H:  \partial ^{2} h/(\partial x_{1}
\partial x_{2})\geq 0 \}$. Our framework also allows for testing these
restricted function classes simultaneously since intersections of these
are again closed convex subsets of $\mathcal H$.
\end{example}

\begin{example}[Semiparametric restrictions]%
\label{ex:semi}
Let $F(\cdot ;\theta ,g)$ be a known function up to unknown
$(\theta ,g)$ and
$\mathcal H_{0}= \{h\in \mathcal H:  h(\cdot )=F(\cdot ;\theta ,g)
\text{ for some } \theta \in \Theta \text{ and } g\in \mathcal G \}$,
for a finite-dimensional, convex compact parameter space $\Theta $ and
a nonparametric closed and convex function class $\mathcal G$. The known
function $F(\cdot ;\theta ,g)$ could be nonlinear in $\theta $ but is assumed
to be linear (or affine) in $g$ and consequently, $\mathcal H_{0}$ is a
closed convex subset of $\mathcal H$. Examples include null hypotheses
of parametric form, or partially linear form, or partially parametric additive
form.
\end{example}

To analyze the power of any test of the null class $\mathcal H_{0}$ against
nonparametric alternatives, we require some separation in
$\|\cdot \|_{L^{2}(X)}$- distance between the null and the class of nonparametric
alternatives for all $h\in \mathcal H$. Below, we use the notation
$\|h-\mathcal H_{0}\|_{L^{2}(X)} := \inf_{\phi \in \mathcal H_{0}}\|h-
\phi \|_{L^{2}(X)}=\|h-\Pi _{\mathcal H_0}h\|_{L^{2}(X)}$. We consider
the class of nonparametric alternatives
\begin{align*}
\mathcal H_{1}(\delta r_{n}): = \bigl\{ h\in \mathcal H:
\llVert h- \mathcal H_{0} \rrVert _{L^{2}(X)}\geq \delta
r_{n} \bigr\}
\end{align*}
for some constant $\delta >0$ and a
\textit{separation rate of testing} $r_{n}>0$ that decreases to zero as
the sample size $n$ goes to infinity. We say that a test statistic
$\mathtt {T}_{n}$ with values in $\{0, 1\}$ is
\emph{consistent uniformly} over $\mathcal H_{1}(\delta r_{n})$ if
$\sup_{h\in \mathcal H_{1}(\delta r_{n})} \mathrm{P}_{h}(\mathtt {T}_{n}=0)=o(1)$.

\noindent\begin{minipage}{0.4\linewidth}
In Section~\ref{sec:minimax}, we establish the
\textit{minimax (separation) rate of testing} $r_{n}$ in the sense of
\citet{ingster1993}: We propose a test that minimizes the sum of the supremum
of the type I error over $\mathcal{H}_{0}$ and the supremum of the type
II error over $\mathcal{H}_{1}(\delta r_{n})$. Moreover, we show that the
sum of both errors cannot be improved by any other test.
\end{minipage}
  \begin{minipage}{0.6\linewidth}
  \vskip .3cm
  \begin{center}
\begin{tikzpicture}
  \shade[ball color = orange!40, opacity = 0.4] (0,0) circle (1cm);
  \draw (0,0) circle (1cm);
   \draw (0,0) circle (2cm);
  \draw (-1,0) arc (180:360:1 and 0.6);
  \draw[dashed] (1,0) arc (0:180:1 and 0.6);
  \fill[fill=black] (0,0) circle (0pt) node{{\small $\mathcal H_0$}};
  \draw[dashed] (1,0 ) -- node[above]{\footnotesize $\delta r_n$} (2,0);
   \shade [inner color=red, outer color=white, even odd rule] circle (3) circle (2);
   \fill[fill=black] (2.9,0) circle (0pt) node{{\footnotesize $\quad \mathcal H_1(\delta r_n)$}};
\end{tikzpicture}
 \end{center}
\end{minipage}
%

\begin{definition}
\label{defn1}
A separation rate of testing $r_{n}$ is called the minimax (separation)
rate of testing if the following two requirements are met for every level
$\alpha \in (0,1)$:%
\begin{enumerate}[(ii)]
\item[(i)] For some constant $\delta _{*}:=\delta _{*}(\alpha )>0$, it holds that
%
\begin{align}
\label{def:lower:rate} \liminf_{n\to \infty}\inf_{\mathtt {T}_{n}} \Bigl
\{\sup_{h\in
\mathcal H_{0}} \mathrm{P}_{h}(\mathtt
{T}_{n}=1)+ \sup_{h\in
\mathcal H_{1}(\delta _{*} r_{n})} \mathrm{P}_{h}(
\mathtt {T}_{n}=0) \Bigr\}\geq \alpha , 
\end{align}
where $\inf_{\mathtt {T}_{n}}$ is the infimum over all statistics with
values in $\{0, 1\}$.

\item[(ii)] There exists
a test statistic $\mathtt {T}_{n}:=\mathtt {T}_{n}(\alpha )$ with values
in $\{0, 1\}$ such that
%
\begin{align}
\label{def:upper:rate} \limsup_{n\to \infty} \Bigl\{\sup_{h\in \mathcal H_{0}}
\mathrm{P}_{h}( \mathtt {T}_{n}=1)+ \sup
_{h\in \mathcal H_{1}(\delta ^{*} r_{n})} \mathrm{P}_{h}(\mathtt {T}_{n}=0)
\Bigr\}\leq \alpha 
\end{align}
for some constant $\delta ^{*}>0$.
\end{enumerate}\end{definition}
We refer to Part (i) as the lower bound and Part (ii) as the upper bound,
and the test statistic $\mathtt {T}_{n}:=\mathtt {T}_{n}(\alpha )$ in Part
(ii) attaining the matching lower and upper bound as an optimal test. We
use $r_{n}^{*}$ to denote the minimax (separation) rate of testing as the
matching lower and upper bound.

In Section~\ref{sec:minimax}, we first establish a minimax rate of testing
$r_{n}^{*}$ assuming the knowledge of the smoothness of alternative NPIV
functions $h\in \mathcal H$ and the inversion property of the conditional
expectation operator $T: L^{2}(X) \mapsto L^{2}(W)$. Both are unknown in
practice. The minimax rate $r_{n}^{*}$ is attained by a sieve test statistic
using an optimal choice of sieve dimension (a tuning parameter) that depends
on these unknown objects, and hence is infeasible. In Section~\ref{sec:adapt:test}, we provide a data-driven modification of the optimal
sieve test, that is, a feasible testing procedure that adapts to the unknown
smoothness of the unrestricted NPIV function $h\in \mathcal H$ in the presence
of unknown smoothing properties of the inverse of the operator $T$. Precisely,
we propose a feasible test statistic $\widehat{\mathtt {T}}_{n}$ with data-driven
tuning parameters in Section~\ref{subsec:sst}. We show that
$\widehat{\mathtt {T}}_{n}$ attains the minimax rate of testing
$r_{n}^{*}$ within a $\sqrt{\log \log (n)}$ multiplicative factor, has
asymptotic size control over the composite null, and is consistent uniformly
over the class of nonparametric alternatives in Theorem~\ref{thm:adapt:est:test}. We call our test
$\widehat{\mathtt {T}}_{n}$ \textit{adaptive} and
\textit{rate-optimal} (or sometimes simply adaptive).

\subsection{Our Adaptive Test}
\label{subsec:sst}

Our test is based on a consistent estimate of the quadratic distance,
$\|h-\Pi _{\mathcal H_0}h\|^{2}_{L^{2}(X)}=\|h-\mathcal H_{0}\|^{2}_{L^{2}(X)}$,
between the NPIV function $h\in \mathcal H$ and its projection
$\Pi _{\mathcal H_0}h$ onto $\mathcal H_{0}$ under the
$\|\cdot \|_{L^{2}(X)}$. We first introduce some notation. Let
$\{\psi _{j}\}_{j=1}^{\infty}$ and $\{b_{k}\}_{k=1}^{\infty}$ be complete
basis functions for the Hilbert spaces $L^{2}(X)$ and $L^{2}(W)$, respectively.
Let $\psi ^{J}(\cdot )$ and $b^{K}(\cdot )$ be vectors of basis functions
of dimensions $J$ and $K=K(J)\geq J$, respectively. These can be cosine,
power series, spline, or wavelet basis functions. Let
$G=\Evtex [\psi ^{J}(X)\psi ^{J}(X)']$,
$G_{b}=\Evtex [b^{K(J)}(W)b^{K(J)}(W)']$, and
$S=\Evtex [b^{K(J)}(W)\psi ^{J}(X)']$. We assume that $G$, $G_{b}$, and
$S'G_{b}^{-1}S$ have full ranks. Then the $J\times K(J)$ matrix
$ A= G^{1/2} [S'G_{b}^{-1}S]^{-1}S'G_{b}^{-1}$ is well defined. Let
$\Psi _{J}$ denote the closed linear subspace of $L^{2}(X)$ spanned by
$\{\psi _{1},\dots ,\psi _{J}\}$. We define a population 2SLS projection
of $h\in L^{2}(X)$ onto the sieve space $\Psi _{J}$ as
\begin{align*}
Q_{J} h(\cdot ):=\psi ^{J}(\cdot )'G^{-1/2}A
\Evtex \bigl[b^{K}(W)h(X)\bigr] .
\end{align*}
For any NPIV function $h\in \mathcal H$ in (\ref{NPIV:maintained}), we
have $Q_{J} h(\cdot )=\psi ^{J}(\cdot )'G^{-1/2}A\Evtex _{h}[b^{K}(W)Y]$, and
%
\begin{align}
\label{QJ2} \bigl\llVert Q_{J} (h-\Pi _{\mathcal H_{0}}h) \bigr
\rrVert ^{2}_{L^{2}(X)}= \bigl\llVert A\Evtex _{h}
\bigl[b^{K}(W) \bigl(Y-\Pi _{\mathcal H_{0}} h(X)\bigr) \bigr] \bigr
\rrVert ^{2}, 
\end{align}
which approximates $\|h-\Pi _{\mathcal H_{0}}h\|^{2}_{L^{2}(X)}$ well as
$J$ grows large (see Lemma~\ref{lemma:bias}).

For each sieve dimension $J$, we construct a test based on an estimated
quadratic distance
$\|Q_{J} (h-\Pi _{\mathcal H_{0}}h)\|^{2}_{L^{2}(X)}$ between the unrestricted
and restricted NPIV estimators of a function $h$ satisfying (\ref{NPIV:maintained}).
Let $\Psi = (\psi ^{J}(X_{1}),\dots ,\psi ^{J}(X_{n}))'$,
$B = (b^{K}(W_{1}),\dots ,b^{K}(W_{n}))'$, $P_{B}= B(B'B)^{-} B'$, and
$\widehat A= \sqrt n(\Psi ' \Psi )^{1/2}[\Psi 'P_{B}\Psi ]^{-} \Psi ' B(B'B)^{-}$.
Let $\mathrm Y=(Y_{1},\ldots , Y_{n})'$. Our unrestricted sieve NPIV estimator
solves a sample 2SLS problem (\citet{BCK07econometrica}):
%
\begin{align}
\label{def:est:h} \widehat h_{J} &=\argmin _{\phi \in \Psi _{J}} \sum
_{1\leq i, i'
\leq n} \bigl(Y_{i}-\phi (X_{i})
\bigr)b^{K(J)}(W_{i})' \widehat A'
\widehat Ab^{K(J)}(W_{i'}) \bigl(Y_{i'}- \phi
(X_{i'}) \bigr) \notag
\\
&=\psi ^{J}(\cdot )'\bigl[\Psi 'P_{B}
\Psi \bigr]^{-} \Psi ' P_{B} \mathrm Y .
\end{align}
Let $\mathcal H_{0,J}$ denote a nonempty, closed and convex, finite-dimensional
subset of $\mathcal H_{0}$. A restricted NPIV estimator for
$\Pi _{\mathcal H_{0}}h \in \mathcal H_{0}$ is given by
%
\begin{align}
\label{def:est:h^R} \widehat h_{J}^{\text{\textsc r}} =\argmin _{\phi \in \mathcal H_{0,J}}
\sum_{1\leq i, i'\leq n} \bigl(Y_{i}-\phi
(X_{i}) \bigr)b^{K(J)}(W_{i})'
\widehat A'\widehat Ab^{K(J)}(W_{i'})
\bigl(Y_{i'}- \phi (X_{i'}) \bigr). 
\end{align}
The choice of $\mathcal H_{0,J}$ is allowed to depend on the structure
of the null class of NPIV functions $\mathcal H_{0}$. For a general nonparametric
or a semi-nonparametric composite null hypothesis,
$\mathcal H_{0,J}$ depends on sieve dimension $J$ and grows dense in
$\mathcal H_{0}$ as the sample size increases. For instance, we let
$\mathcal H_{0,J}=\Psi _{J}\cap \mathcal H_{0}$ under a nonparametric composite
null whenever $\Psi _{J}\cap \mathcal H_{0}\neq \emptyset $ (which holds
for the nonparametric inequality restrictions in Example~\ref{ex:shape}). We can also let $\mathcal H_{0,J}=\mathcal H_{0}$ under
a simple null ($\mathcal H_{0}=\{h_{0}\}$ for a known function
$h_{0}$), or under a parametric composite null ($\mathcal H_{0}=\{F(
\cdot ;\theta ),  \theta \in \Theta \}$ for some known mapping $F$).

For each sieve dimension $J$, we compute a $J$- dependent test statistic
$n \widehat{D}_{J} /\widehat{V}_{J}$, which is a standardized, centered
(or leave-one-out) version of the sample analog of (\ref{QJ2}):
%
\begin{eqnarray}
\label{whDJ} \widehat{D}_{J} &=& \frac{2}{n(n-1)}\sum
_{1\leq i< i'\leq n} \bigl(Y_{i}- \widehat h_{J}^{\text{\textsc r}}(X_{i})
\bigr)b^{K(J)}(W_{i})' \widehat A'
\widehat Ab^{K(J)}(W_{i'}) \bigl(Y_{i'}- \widehat
h_{J}^{\text{\textsc r}}(X_{i'}) \bigr), 
\\
\widehat{V}_{J}&=& \Biggl\llVert \widehat A \Biggl(\frac{1}{n}\sum
_{i=1}^{n} \bigl(Y_{i}- \widehat
h_{J}(X_{i})\bigr)^{2} b^{K(J)}(W_{i})b^{K(J)}(W_{i})'
\Biggr) \widehat A' \Biggr\rrVert _{F} , \label{eq2.8}
\end{eqnarray}
where $\widehat{ V}_{J}$ estimates the population normalization factor
%
\begin{equation}
\label{v2J} V_{J} = \bigl\llVert A \Evtex _{h}\bigl[
\bigl(Y-h(X)\bigr)^{2} b^{K(J)}(W)b^{K(J)}(W)'
\bigr] A' \bigr\rrVert _{F} , 
\end{equation}
which is the variance of
$\frac{2}{n(n-1)}\sum_{1\leq i< i'\leq n} (Y_{i}- h(X_{i}) )b^{K(J)}(W_{i})'
A' Ab^{K(J)}(W_{i'}) (Y_{i'}- h(X_{i'}) )$.

We compute our adaptive test for the null hypothesis
$\mathcal H_{0}$ against nonparametric alternatives in three simple steps.
%
\paragraph{Step 1.}%
Compute a \textit{random exponential scan} (RES) index set:
%
\begin{align}
\label{def:index_set} \widehat{\mathcal I}_{n}:= \bigl\{J\leq \widehat
J_{\max}: J= \underline J2^{j}\text{ where } j=0,1,\dots
,j_{\max} \bigr\}, 
\end{align}
where $\underline J:=\lfloor \sqrt{\log \log n}\rfloor $,
$j_{\max}:=\lceil \log _{2}(n^{1/3}/\underline J)\rceil $, and the empirical
upper bound
%
\begin{equation}
\label{def:J_max} \widehat J_{\max}:=\min \bigl\{J>\underline J: 1.5
\bigl[\zeta (J)\bigr]^{2} \sqrt{ (\log J)/n}\geq \widehat
s_{J} \bigr\}, 
\end{equation}
where $\widehat s_{J}$ is the minimal singular value of
$(B'B)^{-1/2}B'\Psi (\Psi ' \Psi )^{-1/2}$, and $\zeta (J)=\sqrt{J}$ for
spline, wavelet, or trigonometric sieve basis, and $\zeta (J)=J$ for power
series.

\paragraph{Step 2.}%
Let $\#(\widehat{\mathcal I}_{n})$ be the cardinality of the RES index
set. For a nominal level $\alpha \in (0,1)$, we compute a Bonferroni corrected
chi-squared critical value as
\begin{equation*}
\widehat \eta _{J} (\alpha ):= \bigl(q \bigl(\alpha /\#( \widehat{
\mathcal I}_{n}), J \bigr)-J \bigr)/\sqrt{J},
\end{equation*}
where $q (a, J)$ is the $100(1-a)\%$-quantile of the standard chi-squared
distribution with $J$ degrees of freedom.

\paragraph{Step 3.}%
Let
$\widehat {\mathcal{W}}_{J}(\alpha ) :=
n \widehat{D}_{J}/(\widehat \eta _{J}(\alpha )  \widehat{V}_{J})$
for all $J\in \widehat{\mathcal I}_{n}$. Compute the test
%
\begin{align}
\label{def:test:unknown} \widehat {\mathtt{T}}_{n}&:= \mathbbm{1} \bigl\{
\text{there exists } J\in \widehat{\mathcal I}_{n} \text{ such that }
\widehat{\mathcal W}_{J}(\alpha )>1 \bigr\}, 
\end{align}
where $\mathbbm{1}\{\cdot \}$ is the indicator function. Under
the nominal level $\alpha \in (0,1)$, $\widehat {\mathtt{T}}_{n}=1$ indicates
rejection of the null $\mathcal H_{0}$ and
$\widehat {\mathtt{T}}_{n}=0$ indicates a failure to reject the null.

\begin{remark}[Index set for $J$]
\label{rem2.1}
The RES index set $\widehat{\mathcal I}_{n}$ determines a collection of
candidate sieve dimensions $J$ for our test. The data-dependent upper bound
$\widehat J_{\max}$ ensures that the cardinality of the index set
$\widehat{\mathcal I}_{n}$ is not too large relative to the sampling variability
of unrestricted sieve NPIV estimation, but that $\widehat J_{\max}$ still diverges in probability at a rate much
faster than that of $\underline J$. Therefore, the index set is large enough to detect a large collection
of alternative NPIV functions.
In simulations and empirical applications where we have
used quadratic B-splines, we find that our adaptive test results are not
sensitive to the choice of the constant 1.5, and that the lower bound
$\underline J$ is not binding in most cases. For other sieve bases, one
might need to use a different constant to ensure a sufficiently large index
set.
\end{remark}

\begin{remark}[Choice of $K$]%
\label{remark:K}
Our adaptive testing procedure lets $K:=K(J)$ be any deterministic function
of $J$ satisfying
$\lim_{J\rightarrow \infty} \frac{K(J)}{J}=c\in [1,\infty )$, and simply
optimizes over $J \in \widehat{\mathcal I}_{n}$. Our theoretical results,
including the asymptotic size control, are valid for any finite constant
$c\geq 1$. In simulation studies and real data applications, we let
$K(J)=cJ$. Since a larger $c>1$ implies more over-identification restrictions
in a sieve NPIV (2SLS) regression, we expect that a larger $c>1$ would
lead to better power in finite samples. We have tried
$K(J)\in \{2J,  4J,  8J\}$ in simulation studies in various designs. The
simulation results show that (i) our adaptive test indeed has size control
regardless of sample sizes, strength of instruments, and even when
$K(J)=8J$; (ii) while our adaptive test with $K(J)=8J$ has better empirical
power for small sample sizes and weak instruments, the empirical power curves
are not sensitive to the choice of $K$ for moderate to large sample sizes
or strong instruments. These findings are consistent with our theory that
the choice of $J$ is the key tuning parameter in minimax rate-optimal hypothesis
testing in NPIV models using sieve methods.
\end{remark}

\begin{remark}[Critical values]
\label{rem2.3}
A remarkable feature of our adaptive test is that it provides asymptotic
size control for inequality restrictions without restricting the degree
of freedom of the Bonferroni corrected chi-squared critical values to the
number of binding constraints. This is established by the observation that
our Bonferroni corrected critical values
$\widehat \eta _{J} (\alpha )$ diverge slowly as $n\to \infty $ with probability
approaching 1; see Lemma~\ref{lemma:eta:bounds}. This, along with the cardinality
of $\widehat{\mathcal I}_{n}$ not becoming too large by construction, and
complexity restrictions on the composite null hypotheses, enables us to
establish asymptotic size control.
\end{remark}

\section{The Minimax Rate of Testing}
\label{sec:minimax}

This section derives the minimax separation rate of hypothesis testing
in NPIV models, when $\mathcal H$ is a relative compact subset of
$L^{2}(X)$. For simplicity, we assume in this paper that
$\mathcal H$ is a standard Sobolev ellipsoid of smoothness $p>0$, which
can be expressed as
\begin{equation*}
\mathcal H= \Biggl\{ h\in L^{2}(X): \sum_{j=1}^{\infty}
j^{2p/d_{x}} \langle h,\widetilde\psi _{j}\rangle
_{X}^{2}\leq C_{\mathcal H}^{2} \Biggr\},
\quad \text{for a finite constant } C_{\mathcal H}>0,
\end{equation*}
where $\{\widetilde \psi _{j}\}_{j=1}^{\infty}$ is the orthonormal basis
for $L^{2}(X)$ that is constructed from the basis
$\{\psi _{j}\}_{j=1}^{\infty}$ (using the Gram--Schmidt procedure). Assuming
the smoothness $p$ is known, we first establish a lower bound for the
$L^{2}$ rate of testing in Section~\ref{subsec:lower_bound}, and then
show that the lower bound can be achieved by a sieve test if the sieve
dimension $J$ can be chosen optimally in Section~\ref{subsec:upper_bound}.

\subsection{The Lower Bound}
\label{subsec:lower_bound}

Before we state the lower bound for the rate of testing, we introduce the
main assumptions.

\begin{A}%
\label{A:LB}
(i)
$\inf_{w\in \mathcal W}\inf_{h\in \mathcal H}\Var _{h}(Y-h(X)|W=w)
\geq \underline\sigma ^{2}>0$; (ii) for any $h\in \mathcal H$,
$Th=0$ implies that $\|h\|_{L^{2}(X)}^{2} =0$; (iii) the densities of
$X$ and $W$ are uniformly bounded below from zero and from above on their
supports, which are Cartesian product of bounded intervals; (iv) there
are a finite constant $C>0$ and a positive decreasing function
$\nu $ with $\nu _{j}:=\nu (j)$ such that
$\|T h\|_{L^{2}(W )}^{2} \leq C \sum_{j\geq 1} \nu _{j}^{2}\langle h,
\widetilde \psi _{j}\rangle _{X}^{2}$ for all $h\in \mathcal H$.
\end{A}

Assumptions \ref{A:LB}(i), (ii), (iii) are basic regularity conditions imposed
in the paper. Assumption~\ref{A:LB}(iv) specifies the smoothing property
of the conditional expectation operator $T$ relative to the basis
$\{\widetilde \psi _{j}\}$. The smoother $T$ is (i.e., the smoother the
conditional density of $X$ given $W$ is), the faster the sequence
${\nu _{j}}$ in Assumption~\ref{A:LB}(iv) decreases to zero, and the harder
it is to detect properties of the NPIV function in the $L^{2}(X)$ metric.

In this paper, we call a decreasing sequence $\{\nu _{j}\}$
\textit{regularly varying} if
$\nu _{J}^{-4}J \lesssim \sum_{j=1}^{J} \nu _{j}^{-4}$. The regularly
varying sequence $\{\nu _{j}\}$ allows for very broad decreasing patterns,
and includes two leading special cases: (1) \textit{mildly ill-posed} case
where $\nu _{j}=j^{-a/d_{x}}$ for some $a>0$; and (2)
\textit{severely ill-posed} case where
$\nu _{j}=\exp (-j^{a/d_{x}}/2)$ for some $a>0$.

\begin{theorem}%
\label{thm:minimax:test:lower}
Let Assumption~\ref{A:LB} hold. Consider testing a closed convex null
$\mathcal H_{0}$ versus
$\mathcal H_{1}(\delta r_{n})=  \{ h\in \mathcal H:\|h-\mathcal H_{0}
\|_{L^{2}(X)}\geq \delta r_{n}  \}$ for some constant
$\delta >0$ and a separation rate
%
\begin{equation}
\label{r_star_lower} r_{n}=n^{-1/2} \Biggl(\sum
_{j=1}^{J_{*}} \nu _{j}^{-4}
\Biggr)^{1/4}, \quad \text{with }J_{*}:=\max \Biggl\{J:
n^{-1/2} \Biggl(\sum_{j=1}^{J} \nu
_{j}^{-4} j^{4p/d_{x}} \Biggr)^{1/4} \leq
C_{\mathcal H} \Biggr\}. 
\end{equation}
Then: for any $\alpha \in (0,1)$, there exists a constant
$\delta _{*}:=\delta _{*}(\alpha )>0$ such that
\begin{align*}
\liminf_{n\to \infty}\inf_{\mathtt {T}_{n}} \Bigl\{\sup
_{h\in
\mathcal H_{0}} \mathrm{P}_{h}(\mathtt {T}_{n}=1)+
\sup_{h\in
\mathcal H_{1}(\delta _{*} r_{n})} \mathrm{P}_{h}(\mathtt
{T}_{n}=0) \Bigr\}\geq \alpha ,
\end{align*}
where $\sup_{h\in \mathcal H_{\ell}}\mathrm{P}_{h}(\cdot )$ denotes the
supremum over $h\in \mathcal H_{\ell}$ and distributions of
$(X,W,U)$ satisfying Assumption~\ref{A:LB} for $\ell =0,1$.%

Further, when $\{\nu _{j}\}$ is regularly varying, the separation rate
$r_{n}$ given in \eqref{r_star_lower} simplifies to
%
\begin{equation}
\label{r_star_lower1} r_{n} \sim J_{*}^{-p/d_{x}},\quad  \text{with
}J_{*}\sim \max \bigl\{J: n^{-1/2} J^{1/4} \nu
_{J}^{-1}\leq J^{-p/d_{x}} \bigr\}. 
\end{equation}
\begin{enumerate}[(2)]
\item[(1)] Mildly ill-posed ($\nu _{j}=j^{-a/d_{x}}$) case:
$r_{n}\sim n^{-2p/(4(p+a)+d_{x})}$.
\item[(2)] Severely ill-posed ($\nu _{j}=\exp (-j^{a/d_{x}}/2)$) case:
$r_{n}\sim (\log n)^{-p/a}$.
\end{enumerate}
\end{theorem}

According to Theorem~\ref{thm:minimax:test:lower}, the lower bound of the
$L^{2}$ rate of testing is $n^{-2p/(4(p+a)+d_{x})}$ in the mildly ill-posed
case, which goes to zero faster than the lower bound
$n^{-p/(2(p+a)+d_{x})}$ of the $L^{2}$ rate of estimation (\citet{HallHorowitz2005}
and \citet{ChenReiss2011}). For the severely ill-posed NPIV models, the
lower bound of the $L^{2}$ rate of testing is $(\log n)^{-p/a}$, which
coincides with the lower bound of estimation in both the $L^{2}$ norm (\citet{ChenReiss2011})
and the sup-norm (\citet{ChenChristensen2017}).

In the literature on linear ill-posed inverse problem with a compact operator
$T$, an ``\emph{exact link condition}'' is commonly used to describe the
smoothing (or compact embedding) property of $T$, which can be stated as
follows:
%
\begin{align}
\label{bounds:smooth:T} c\sum_{j\geq 1} \nu _{j}^{2}
\langle h, \widetilde \psi _{j}\rangle _{X}^{2}
\leq \llVert T h \rrVert _{L^{2}(W )}^{2} \leq C \sum
_{j\geq 1} \nu _{j}^{2} \langle h, \widetilde
\psi _{j}\rangle _{X}^{2}\quad  \text{for all } h \in
\mathcal H 
\end{align}
for some finite constants $C\geq c>0$ and a positive decreasing function
$\nu $ with $\nu _{j}:=\nu (j)$. The RHS inequality of
\eqref{bounds:smooth:T} (i.e., Assumption~\ref{A:LB}(iv)) is used for the
lower bound calculation, and the LHS inequality of
\eqref{bounds:smooth:T} is imposed for the upper bound calculation. However,
to have matching lower and upper bound, that is, to establish the rate
is minimax optimal, the exact link condition \eqref{bounds:smooth:T} or
something similar is typically imposed even with a known $T$; see, for
example, \citet{ChenReiss2011}. We note that any compact operator
$T$ has a unique singular value decomposition. If the basis
$\{\widetilde \psi _{j}\}$ is an eigenfunction basis associated with the
operator $T$, then \eqref{bounds:smooth:T} is automatically satisfied with
$C=c=1$ and $\{\nu _{j}\}_{j=1}^{\infty}$ being its singular values in
decreasing order. More generally, \eqref{bounds:smooth:T} is also satisfied
when $\{\widetilde \psi _{j}\}$ is a Riesz basis (see
\citet{BCK07econometrica}). Since the conditional expectation operator
$T$ is compact under very mild conditions (such as when the conditional
density of $X$ given $W$ is continuous), it typically satisfies
\eqref{bounds:smooth:T}, which is an alternative way to express the smoothing
property of the operator $T$.

In our proof of Theorem~\ref{thm:minimax:test:lower}, we reduce the lower
bound calculation for the NPIV model to that for a model with a known operator
$T$. Consequently, Assumption~\ref{A:LB}(iv) is sufficient to establish
the lower bound. However, for the upper bound calculation of the NPIV model,
we need to estimate the unknown operator $T$. Therefore, in addition to
the LHS inequality of \eqref{bounds:smooth:T}, some extra sufficient conditions
will be used to address the error of estimating $T$ nonparametrically.
See the next subsection for details.

\subsection{An Upper Bound Under a Simple Null Hypothesis}
\label{subsec:upper_bound}

For a simple null $\mathcal H_{0}=\{h_{0}\}$, we redefine
$\widehat{D}_{J}$ in \eqref{whDJ} with
$\widehat h_{J}^{\text{\textsc r}} = h_{0}$ as
%
\begin{equation}
\label{whDh0} \widehat{D}_{J}(h_{0})= \frac{2}{n(n-1)}\sum
_{1\leq i< i'\leq n} \bigl(Y_{i}- h_{0}(X_{i})
\bigr) \bigl(Y_{i'}- h_{0}(X_{i'})
\bigr)b^{K(J)}(W_{i})' \widehat A'
\widehat Ab^{K(J)}(W_{i'}). 
\end{equation}
We also redefine our test statistic $\widehat{\mathtt {T}}_{n}$ with a
singleton RES index set $\{J\}$ as
%
\begin{align}
\label{def:test:J} \mathtt {T}_{n,J} = \mathbbm{1} \biggl\{ \frac{n
\widehat{D}_{J}(h_{0})}{\widehat{V}_{J}}>\eta
_{J}(\alpha ) \biggr\} \quad \text{with } \eta _{J}(\alpha )=
\bigl(q(\alpha , J)-J\bigr)/\sqrt{J}. 
\end{align}
The test $ \mathtt{T}_{n,J}$ with optimally chosen $J$ serves as a benchmark
of our adaptive testing procedure (given in \eqref{def:struc_test}) for
the simple null hypothesis.

We define the projections
$\Pi _{J} h(\cdot )=\psi ^{J}(\cdot )'G^{-1} \langle \psi ^{J},h
\rangle _{L^{2}(X)}$ for $h\in L^{2}(X)$ and
$\Pi _{K} m(\cdot )=b^{K}(\cdot )'G_{b}^{-1} \Evtex [b^{K}(W)m(W)]$ for
$m\in L^{2}(W)$. Further, let
$s_{J}=\inf_{h\in \Psi _{J}}\|\Pi _{K} T h \|_{L^{2}(W)} /\|h\|_{L^{2}(X)}$,
that is, $s_{J}$ coincides with the minimal singular value of
$G_{b}^{-1/2}SG^{-1/2}$. Let
$\zeta _{J}=\max (\zeta _{\psi ,J}, \zeta _{b,K})$,
$\zeta _{\psi ,J}=\sup_{x}\|G^{-1/2}\psi ^{J}(x)\|$, and
$\zeta _{b,K}=\sup_{w}\|G_{b}^{-1/2}b^{K}(w)\|$. We assume throughout
the paper that $\zeta _{J}=O(\sqrt{J})$ (which holds for polynomial spline,
wavelet, and cosine bases), or $\zeta _{J}=O(J)$ (which holds for orthogonal
polynomial bases).

\begin{A}%
\label{A:basis}
(i)
$\sup_{w\in \mathcal W}\sup_{h\in \mathcal H}\Evtex _{h}[(Y-
\widetilde h(X))^{2}|W=w]\leq \overline\sigma ^{2}<\infty $, where
$\widetilde h\in \{h,\Pi _{\mathcal H_0}h\}$ and
$\sup_{h\in \mathcal H}\Evtex _{h}[(Y-h(X))^{4}]<\infty $; (ii)
$s^{-1}_{J} \zeta _{J}^{2}\sqrt{(\log J)/ n}=O(1)$; (iii)
$\zeta _{J}\sqrt{\log J}=O( J^{p/d_{x}})$; (iv)
$s_{J}^{-1}\|\Pi _{K} T(\Pi _{J} h-h)\|_{L^{2}(W)} \leq C_{T} \|\Pi _{J}
h-h\|_{L^{2}(X)}$ for a constant $C_{T}>0$, uniformly for
$h\in \mathcal H$.
\end{A}

Let $\Psi _{J,1}:=\{h\in \Psi _{J}: \|h\|_{L^{2}(X)}=1\}$. Then
$\tau _{J}:=  [\inf_{h\in \Psi _{J,1}}\|T h \|_{L^{2}(W)}  ]^{-1}$
is the sieve measure of ill-posedness that has been used in sieve estimation
of NPIV models (see, e.g., \citet{BCK07econometrica}). We have
$s_{J} \leq \tau _{J}^{-1}$ by definition.

\begin{A}%
\label{A:s_J}
(i)
$\sup_{h\in \Psi _{J,1}} \tau _{J} \|(\Pi _{K} T-T)h\|_{L^{2}(W)} = o(1)$;
(ii) the LHS inequality of \eqref{bounds:smooth:T} holds.
\end{A}

Assumption~\ref{A:basis}(i) is an extra condition on the data-generating
process (DGP) since it imposes upper bounds on conditional second moment
and finiteness of unconditional fourth moment. We note that the DGP displayed
in our proof of Theorem~\ref{thm:minimax:test:lower} already satisfies
this assumption; it has no effect on our lower bound result. Assumptions
\ref{A:basis}(ii), (iii), (iv) are imposed since our test statistic involves
linear sieve estimated operator $T$ to achieve the separation rate. Assumptions
\ref{A:basis}(ii), (iii) impose restrictions on the sieve dimension
$J$, which are satisfied by $J_{*}$ given in \eqref{r_star_lower1} of Theorem~\ref{thm:minimax:test:lower}. Assumption~\ref{A:basis}(iv) imposes an upper
bound on the smoothing properties of the conditional expectation operator
$T$. It is akin to the $L^{2}$ stability condition used in sieve NPIV estimation
and is satisfied by Riesz bases (see
\citet [Assumption~6]{BCK07econometrica}). Assumption~\ref{A:s_J}(i) is
a mild condition on the approximation properties of the basis used for
the instrument space (see
\citet [Assumption~4(i)]{ChenChristensen2017}). It implies that
$s_{J}$ and $\tau _{J}^{-1}$ are asymptotically equivalent:
\begin{equation*}
\begin{split}\tau _{J}^{-1}&\geq s_{J}=\inf_{h\in \Psi _{J,1}}
\llVert \Pi _{K} T h \rrVert _{L^{2}(W)}
\\
&\geq \inf
_{h\in \Psi _{J,1}} \llVert T h \rrVert _{L^{2}(W)}-\sup
_{h\in \Psi _{J,1}} \bigl\llVert (\Pi _{K} T -T)h \bigr\rrVert
_{L^{2}(W)}=\tau _{J}^{-1} \bigl(1-o(1)\bigr),
\end{split}\end{equation*}
while Assumption~\ref{A:s_J}(ii) implies
$\tau _{J}^{-1}=\inf_{h\in \Psi _{J,1}}\|T h \|_{L^{2}(W)} \geq
\sqrt c \nu _{J}$ for all $J$. Assumption~\ref{A:s_J} thus implies
\begin{equation*}
s_{J}^{-1} \sim \tau _{J} \leq (\sqrt
c)^{-1} \nu _{J}^{-1}.
\end{equation*}
Further,
$s_{J} \sim \tau _{J}^{-1} \leq \|T \widetilde\psi _{J} \|_{L^{2}(W)}
\leq \sqrt C \nu _{J}$ under Assumption~\ref{A:LB}(iv) and
$\{\widetilde\psi _{j}\}$ being an orthonormal basis in
$L^{2}(X)$, and Assumption~\ref{A:basis}(iv) is satisfied under Assumptions
\ref{A:LB}(iv) and \ref{A:s_J}. Therefore, Assumptions \ref{A:basis} and
\ref{A:s_J} have no effect on the lower bound calculation in Theorem~\ref{thm:minimax:test:lower}.

The next theorem provides an upper bound on the separation rate of testing
in $L^{2}$ under a simple null using the test statistic
$\mathtt {T}_{n,J}$.

\begin{theorem}%
\label{thm:test:upper}
Let Assumptions \ref{A:LB}(i)--(iii) and \ref{A:basis} hold. Consider testing
the simple hypothesis $\mathcal H_{0}=\{h_{0}\} $ (for a known function
$h_{0}$) versus
$\mathcal H_{1}(\delta ^{\circ }r_{n,J})=\{ h\in \mathcal H:\|h-h_{0}
\|_{L^{2}(X)}\geq \delta ^{\circ }r_{n,J}\}$ for a constant
$\delta ^{\circ }>0$ and a separation rate
%
\begin{align}
\label{thm:test:upper:rate} r_{n,J}=\max \bigl\{n^{-1/2}s_{J}^{-1}J^{1/4},J^{-p/d_{x}}
\bigr\}. 
\end{align}
Then, for any $\alpha \in (0,1)$, we have
%
\begin{align}
\label{thm:test:upper:bound} \limsup_{n\to \infty}\mathrm{P}_{h_{0}}(\mathtt
{T}_{n,J}=1)\leq \alpha \quad \text{and}\quad  \lim_{n\to \infty}\sup
_{h\in
\mathcal H_{1}(\delta ^{\circ }r_{n,J})} \mathrm{P}_{h}(\mathtt {T}_{n,J}=0)=0.
\end{align}
In addition, let Assumption~\ref{A:s_J} hold and
$J_{*0}:=\max  \{J: n^{-1/2}\nu _{J}^{-1} J^{1/4}\leq J^{-p/d_{x}}
 \}$. Then: the test statistic $\mathtt {T}_{n,J_{*0}}$ attains the
optimal separation rate of
%
\begin{equation}
\label{r_star_minimax} r_{n,J_{*0}}= (J_{*0} )^{-p/d_{x}} \sim
r_{n} , 
\end{equation}
which is the lower bound rate given in \eqref{r_star_lower1} when
$\{\nu _{j}\}$ is regularly varying.
\begin{enumerate}[(2)]
\item[(1)] Mildly ill-posed case:
$J_{*0}\sim n^{2d_{x}/(4(p+a)+d_{x})}$ and
$r_{n,J_{*0}}\sim n^{-2p/(4(p+a)+d_{x})}$.
\item[(2)] Severely ill-posed case:
$J_{*0}=  (c\log n )^{d_{x}/a}$ for some $c\in (0,1)$ and
$r_{n,J_{*0}}\sim (\log n)^{-p/a}$.
\end{enumerate}
\end{theorem}
Theorem~\ref{thm:test:upper} shows that, under Assumptions \ref{A:LB}(i)--(iii)
and \ref{A:basis}, the test statistic $\mathtt{T}_{n,J}$ given in
\eqref{def:test:J} attains the $L^{2}$ separation rate of testing
$r_{n,J}$ in \eqref{thm:test:upper:rate}. Given a sieve dimension
$J$, this rate consists of a standard deviation term ($n^{-1/2}s_{J}^{-1}J^{1/4}$)
and a bias term ($J^{-p/d_{x}}$). A central step to achieve this rate result
is to establish a rate of convergence of the quadratic distance estimator
$\widehat D_{J}(h_{0})$ (see Theorem~\ref{thm:rate:quad:fctl}), which we
show is sufficient for the consistency of $\mathtt {T}_{n,J}$ uniformly
over $\mathcal H_{1}(\delta ^{\circ }r_{n,J})$. In addition, under Assumption~\ref{A:s_J}, Theorem~\ref{thm:test:upper} implies that the sieve test
$\mathtt{T}_{n,J_{*0}}$ achieves the $L^{2}$
\textit{minimax rate of testing} for a simple null, with known smoothness
$p$ of the nonparametric alternatives and known degree of ill-posedness.

Given a sieve dimension $J$, the $L^{2}$ rate of sieve estimation for any
NPIV function $h\in \mathcal H$ is
$\max \{n^{-1/2}s_{J}^{-1}J^{1/2}, J^{-p/d_{x}}\}$ (see, e.g.,
\citet{ChenReiss2011}). Comparing the $L^{2}$ rate of estimation and of
testing via the sieve NPIV procedures, while both have the same bias term
$J^{-p/d_{x}}$, the $L^{2}$ rate of testing has a smaller ``standard deviation''
term $n^{-1/2}s_{J}^{-1}J^{1/4}$. Intuitively, we may obtain a higher precision
in testing as the $L^{2}$ rate of testing is determined by estimating a
quadratic norm of the unrestricted NPIV function $h \in \mathcal H$. Interestingly,
although this leads to a faster optimal $L^{2}$ rate of sieve testing
$r_{n,J_{*0}}\sim n^{-2p/(4(p+a)+d_{x})}$ than the optimal $L^{2}$ rate
of estimation $n^{-p/(2(p+a)+d_{x})}$ in the mildly ill-posed case, the
optimal $L^{2}$ rate of sieve testing
$r_{n,J_{*0}}\sim (\log n)^{-p/a}$ in the severely ill-posed case is the
same as the optimal rate of sieve estimation in both the $L^{2}$ norm (\citet{ChenReiss2011})
and the sup-norm (\citet{ChenChristensen2017}). This is because, in the
severely ill-posed case, the bias term dominates the standard deviation
term for the optimally chosen sieve dimension in both sieve testing and
estimation.

\section{Adaptive Inference}
\label{sec:adapt:test}

This section establishes theoretical properties of our test
$\widehat {\mathtt{T}}_{n}$ defined in \eqref{def:test:unknown}. We show
that it adapts to the unknown smoothness $p>0$ of the functions in
$\mathcal H$. Section~\ref{sec:adapt:hyp:test} establishes the rate
optimality of our test for simple null hypotheses. Section~\ref{subsec:unknown:null} extends this result to testing for composite
null problems. Section~\ref{subsec:CS} proposes $L^{2}$ confidence sets
by inverting the adaptive test under imposed restrictions on the NPIV function.

\subsection{Adaptive Testing Under a Simple Null Hypothesis}
\label{sec:adapt:hyp:test}

Under the simple null hypothesis $\mathcal H_{0}=\{h_{0}\}$ with a known
function $h_{0}$ satisfying \eqref{npiv:model}, our test
$\widehat{\mathtt{T}}_{n}$ given in \eqref{def:test:unknown} simplifies
to
%
\begin{align}
\label{def:struc_test} \widehat{\mathtt{T}}_{n} &= \mathbbm{1} \biggl\{
\text{there exists } J\in \widehat{\mathcal I}_{n}\text{ such that }
\frac{n\widehat{D}_{J}(h_{0})}{\widehat{V}_{J}}>
\widehat\eta _{J}( \alpha ) \biggr\}, 
\end{align}
where $\widehat{D}_{J}(h_{0})$ is defined in (\ref{whDh0}), and
$\widehat{\mathcal I}_{n}$, $\widehat{V}_{J}$,
$\widehat \eta _{J}(\alpha )$ are given in Section~\ref{subsec:sst}.

Recall that the RES index set $\widehat{\mathcal I}_{n}$, given in
\eqref{def:index_set}, depends on an upper bound $\widehat J_{\max}$ given
in \eqref{def:J_max}. To establish our asymptotic results below, we introduce
a non-random index set $\mathcal I_{n}$ with a deterministic upper bound
$\overline J$ as follows:
%
\begin{align}
\label{det:index:set} \mathcal I_{n}= \bigl\{J\leq \overline J: J=\underline
J2^{j} \text{ where } j=0,1,\dots ,j_{\max} \bigr\}\subset [
\underline J, \overline J], 
\end{align}
with
$\overline J=\sup \{J:  \zeta _{J}^{2} \sqrt{(\log J)/n}\leq
\overline c  s_{J}\}$ for some sufficiently large constant
$\overline c>0$. We show in Lemma~\ref{lemma:J_hat}(i) that
$\widehat J_{\max}\leq \overline J$ (and thus
$\widehat{\mathcal I}_{n}\subset \mathcal I_{n}$) holds with probability
approaching 1 uniformly over all functions $h\in \mathcal H$. Thus,
$\overline J$ serves as a deterministic upper bound for the RES index set
$\widehat{\mathcal I}_{n}$.

\begin{A}%
\label{A:adapt:test}
(i) Assumptions \ref{A:basis}(ii), (iv) hold uniformly for all
$J\in \mathcal I_{n}$; (ii)
$s_{J}^{-4}J \lesssim \sum_{j=1}^{J} s_{j}^{-4}$ uniformly for all
$J\in \mathcal I_{n}$; (iii) $p\geq 3d_{x}/4$ when using cosine, spline,
or wavelet basis functions and $p\geq 7d_{x}/4$ when using power series
basis functions.
\end{A}
Assumptions \ref{A:adapt:test}(i), (iii) strengthen Assumptions
\ref{A:basis}(ii), (iii), (iv) to hold uniformly over the deterministic index
set $\mathcal I_{n}$. They are used to establish Lemma~\ref{lemma:J_hat}. Assumption~\ref{A:adapt:test}(i) restricts the growth
of the deterministic upper bound $\overline J$ of the RES index set
$\widehat{\mathcal I}_{n}$. Assumption~\ref{A:adapt:test}(ii) is satisfied
if $\{s_{j}\}$ is regularly varying, which is implied by Assumptions
\ref{A:LB}(iv) and \ref{A:s_J} with $\{\nu _{j}\}$ regularly varying. We
note that Assumptions \ref{A:adapt:test}(ii), \ref{A:basis}(i), and
\ref{A:LB}(i) together imply that $ V_{J} \sim s_{J}^{-2}\sqrt{J}$ uniformly
for $h\in \mathcal H$ and $J\in \mathcal I_{n}$ (see Lemmas
\ref{upper:bound:v_n} and \ref{lower:bound:v_n}).

\begin{theorem}%
\label{thm:adapt:test}
Let Assumptions \ref{A:LB}(i)--(iii), \ref{A:basis}(i), \ref{A:s_J}, and
\ref{A:adapt:test} hold. Consider
testing the simple null $\mathcal H_{0}=\{h_{0}\}$ (for a known function
$h_{0}$) versus
$\mathcal H_{1}(\delta ^{\circ}\text{\textsf r}_{n})=\{ h\in \mathcal H:\|h-h_{0}
\|_{L^{2}(X)}\geq \delta ^{\circ}\text{\textsf r}_{n}\}$ for a constant
$\delta ^{\circ}>0$ and an adaptive separation rate
%
\begin{align}
\label{thm:adapt:test:rate} \text{\textsf r}_{n}=\bigl(J^\circ
\bigr)^{-p/d_{x}},\quad  \text{where } J^\circ :=\max \bigl\{J:
n^{-1/2} \nu _{J}^{-1}(J\log \log n)^{1/4}
\leq J^{-p/d_{x}} \bigr\}. 
\end{align}
Then, for any $\alpha \in (0,1)$, we have
%
\begin{align}
\label{thm:adapt:test:bound} \limsup_{n\to \infty}\mathrm{P}_{h_{0}}(
\widehat{\mathtt{T}}_{n}=1) \leq \alpha \quad \text{and}\quad  \lim
_{n\to \infty}\sup_{h\in
\mathcal H_{1}(\delta ^{\circ }\text{\textsf r}_{n})} \mathrm{P}_{h}(
\widehat{\mathtt{T}}_{n}=0)=0. 
\end{align}
\begin{enumerate}[(2)]
\item[(1)] Mildly ill-posed case:
$\text{\textsf r}_{n}\sim  (\sqrt{\log \log n}/n )^{2p/(4(p+a)+d_{x})}$.
\item[(2)] Severely ill-posed case:
$\text{\textsf r}_{n}\sim (\log n)^{-p/a}$.
\end{enumerate}
\end{theorem}

Theorem~\ref{thm:adapt:test} establishes an upper bound for the testing
rate of the adaptive test $\widehat {\mathtt{T}}_{n}$ under a simple null
hypothesis. The proof of Theorem~\ref{thm:adapt:test} relies on a novel
exponential bound for degenerate U-statistics based on sieve estimators
(see Lemma~\ref{Lemma:houdre}). In particular, we control the type I error
using tight lower bounds for adjusted chi-squared critical values (see
Lemma~\ref{lemma:eta:bounds}) and establish the consistency of
$\widehat{\mathtt{T}}_{n}$ uniformly over
$\mathcal H_{1}(\delta ^{\circ }\text{\textsf r}_{n})$.

From Theorem~\ref{thm:adapt:test}, we see that the adaptive test attains
the oracle minimax rate of testing within a $\sqrt{\log \log (n)}$ term
in the mildly ill-posed case. For the adaptive testing in regression models
without endogeneity (i.e., when $X=W$), it is well known that the extra
$\sqrt{\log \log (n)}$ term is required (see \citet{spokoiny1996}). In
the severely ill-posed case, our adaptive test attains the exact minimax
rate of testing and hence, there is no price to pay for adaptation. This
is because, in the severely ill-posed case, the bias term dominates the
standard deviation term when the sieve dimension coincides with
$J^{\circ}$, irrespective of the $\sqrt{\log \log (n)}$ term.

\begin{remark}%
\label{rm:no-assumption3}
As is clear from the proof, Result \eqref{thm:adapt:test:bound} of Theorem~\ref{thm:adapt:test}
remains valid with an adaptive rate $\text{\textsf r}_{n}=(J_s^\circ )^{-p/d_{x}}$ when $J_s^\circ := \max \bigl\{J:n^{-1/2} s_{J}^{-1}(J\log \log n)^{1/4}
\leq J^{-p/d_{x}} \bigr\}$, without imposing Assumption \ref{A:s_J}. The extra Assumption \ref{A:s_J}, or its consequence $s_J^{-1} \lesssim \nu_J^{-1}$, is used to establish the optimality of the adaptive rate $\text{\textsf r}_{n}=(J_s^\circ )^{-p/d_{x}}$ only. The same remark
also applies to Result \eqref{thm:adapt:test:est:upper} of Theorem~\ref{thm:adapt:est:test}, Corollary~\ref{coro:CB} and Corollary~\ref{thm:diam} below.
\end{remark}%

\subsection{Adaptive Testing Under Composite Null Hypotheses}
\label{subsec:unknown:null}

We extend the results from Section~\ref{sec:adapt:hyp:test} to adaptive
testing for a general composite null hypothesis $\mathcal H_{0}$, which
is a nonempty, closed and convex strict subset of $\mathcal H$. Without
loss of generality, we assume $0 \in \mathcal H_{0}$. This is satisfied
for the inequality restrictions in Example~\ref{ex:shape} and the semiparametric
equality restrictions considered in Example~\ref{ex:semi} if, for instance,
$F(\cdot ;\theta ,g)= 0$ for some $\theta \in \Theta $ and
$g\in \mathcal G$.

Below, we impose some conditions on the complexity of the closed and convex
null class of functions $\mathcal H_{0}$. Let
$\mathcal S^{K}=\{\mathsf e\in \mathbb R^{K}: \mathsf e_{1}^{2}+
\cdots +\mathsf e_{K}^{2}=1\}$ denote the $(K-1)$-dimensional unit sphere.
Let $K^\circ =K(J^\circ )$,
$\widetilde b^{K}(\cdot )=G_{b}^{-1/2}b^{K}(\cdot )$, and
$Z:=(X',W')'$. For any
$h\in \mathcal H_{1}(\delta ^{\circ}\text{\textsf r}_{n})$, we consider the following
class of functions:
\begin{align*}
\mathcal F_{h,\mathsf e}:= \bigl\{(\phi - \Pi _{\mathcal H_{0}} h) (X)
\widetilde b^{K^\circ }(W)'\mathsf e: \phi \in \mathcal
H_{0,J^\circ } \bigr\}, \quad \mathsf e\in \mathcal S^{K^\circ },
\end{align*}
with its envelope function denoted by $F_{h,\mathsf e}$. Let
$N_{[]}(\epsilon , \mathcal F, L^{2}(Z))$ be the $L^{2}(Z)$ covering number
with bracketing for $\mathcal F$, which is the minimal number of
$\epsilon $-brackets, in $L^{2}(Z)$ sense, needed to cover
$\mathcal F$. We let
$\mathcal C_{h} :=\max_{\mathsf e\in \mathcal S^{K^\circ }}\int _{0}^{1}
  (1+\log N_{[]}  (\epsilon \|F_{h,\mathsf e}\|_{L^{2}(Z)},
\mathcal F_{h,\mathsf e}, L^{2}(Z)  )  )^{1/2}\,d\epsilon $.
%
\begin{A}%
\label{h0:est}
(i) For any $\varepsilon >0$, it holds that
$\sup_{h\in \mathcal H_{0}}\mathrm{P}_{h} (\max_{J\in \mathcal I_{n}}(
\zeta _{J}\|\widehat h_{J}^{\text{\textsc r}}-h\|_{L^{2}(X)}/c_{J} ) >
\varepsilon  )\to 0$ with $c_{J} =\max \{1,(\log \log J)^{1/4}\}$; (ii)
for some constant $C>0$, it holds that
$\sup_{h\in \mathcal H_{1}(\delta ^{\circ}\text{\textsf r}_{n})}\mathrm{P}_{h}
 (\zeta _{J^\circ }\mathcal C_{h} \|\widehat h_{J^\circ }^{
\text{\textsc r}}-\Pi _{\mathcal H_0}h\|_{L^{2}(X)}> C )\to 0$ and
$\sup_{h\in \mathcal H_{1}(\delta ^{\circ}\text{\textsf r}_{n})}\mathcal C_{h}
\lesssim (J^\circ )^{1/4}$.
\end{A}

Assumption~\ref{h0:est} restricts the complexity of the composite null
hypothesis $\mathcal H_{0}$. Assumption~\ref{h0:est}(i) implies that
$\widehat {\mathtt{T}}_{n}$ has size control uniformly over the composite
null $\mathcal H_{0}$. Assumption~\ref{h0:est}(ii) ensures the consistency
of $\widehat {\mathtt{T}}_{n}$ uniformly over
$\mathcal H_{1}(\delta ^{\circ}\text{\textsf r}_{n})$. Note that Assumption~\ref{h0:est} imposes estimation rate conditions on
$\widehat h_{J}^{\text{\textsc r}}$ under the composite null and the nonparametric
alternatives, which can be viewed as NPIV extensions of the parametric
estimation rate conditions imposed in
\citet [Assumption~2]{horowitz2001} for testing for a parametric regression
against nonparametric regressions.
%
\begin{remark}[Sufficient conditions for Assumption~\ref{h0:est}(i)]%
\label{rem:A1}
Assumption~\ref{h0:est}(i) is a very mild condition on the estimation rate
(in $L^{2}$) of the restricted sieve NPIV estimator under
$\mathcal H_{0}$.%

(1) In the case of parametric restrictions, where
$\|\widehat h_{J}^{\text{\textsc r}}-h\|_{L^{2}(X)}\leq \text{const}. \times n^{-1/2}$
with probability approaching 1 uniformly over $h\in \mathcal H_{0}$, Assumption~\ref{h0:est}(i) is automatically satisfied by Assumption~\ref{A:adapt:test}(i).%

(2) Under nonparametric restrictions, we note that
$\|\widehat h_{J}^{\text{\textsc r}}-h\|_{L^{2}(X)}\leq \|\widehat h_{J}-h\|_{L^{2}(X)}$
for all $h\in \mathcal H_{0}$, and that
%
\begin{align}\label{rate:a1}
\max_{J\in\mathcal I_n}\frac{\zeta_J\|\widehat h_J-h\|_{L^2(X)}}{c_J}\leq const.\times \max_{J\in\mathcal I_n}\Bigg\{\frac{\zeta_J \sqrt{J}}{\sqrt ns_Jc_J}
+\frac{\zeta_J\|\Pi^{\mathcal I_n}_Jh-h\|_{L^2(X)}}{c_J}\Bigg\}
\end{align}
with probability approaching 1 uniformly for $h\in \mathcal H_{0}$, where
$\Pi ^{\mathcal I_{n}}_{J}$ denotes the projection onto the closed linear
subspace of $L^{2}(X)$ spanned by
$\{\psi _{J}:J\in \mathcal I_{n}\}$. The first summand on the right-hand
side of \eqref{rate:a1} converges to zero by the definition of
$\overline J=\overline J(n)$. For the bias part, we assume that the index
set has sufficient information to approximate the NPIV function
$h\in \mathcal H_{0}$. Let $p_{0}$ denote the smoothness and $d_{0}$ the
dimension of the nonparametric component under $\mathcal H_{0}$. If
$\|\Pi ^{\mathcal I_{n}}_{J}h-h\|_{L^{2}(X)} =O(J^{-p_{0}/d_{0}})$ and
$\zeta _{J}=O(\sqrt J)$, the second summand of the right-hand side of
\eqref{rate:a1} uniformly converges to zero if
$p_{0}/d_{0} \geq 1/2$. Since the class $\mathcal H_{0}$ is a less complex
subset of $\mathcal H$, it is reasonable to assume that
$p_{0}/d_{0} \geq p/d_{x}$ and thus $p_{0}/d_{0} \geq 1/2$ is automatically
satisfied given Assumption~\ref{A:adapt:test}(iii).
\end{remark}

\begin{remark}[Sufficient conditions for Assumption~\ref{h0:est}(ii)]%
\label{rem:smooth}
Assumption~\ref{h0:est}(ii) restricts the complexity of
$\mathcal H_{0}$ to have no effect on the adaptive minimax rate of testing
asymptotically. Note that for any $\epsilon >0$ and
$\mathsf e\in \mathcal S^{K^\circ }$, we have
%
\begin{align*}
\Evtex\Big[\sup_{\phi_1,\phi_2\in\mathcal H_{0,\Jad}:\, \|\phi_1-\phi_2\|_\infty\leq\epsilon}\big|(\phi_1-\phi_2)(X)\widetilde b^{\Kad}(W)'\mathsf e\big|^2\Big]\leq \epsilon^2,
\end{align*}
using that $\Evtex (\widetilde b^{K^\circ }(W)'\mathsf e)^{2}=1$. Thus,
$\log N_{[]}  (\epsilon , \mathcal F_{h,\mathsf e}, L^{2}(Z)
  )\leq \log N_{[]}  (\epsilon ,\mathcal H_{0,J^\circ }, L^{
\infty}  )\lesssim \epsilon ^{-d_{x}/p}$ if the functions in
$\mathcal H_{0}$ have uniformly bounded partial derivatives with highest
order derivatives being Lipschitz; see
\citet [Theorem~2.7.1]{Vaart2000}. We obtain
$\mathcal C_{h}\lesssim 1$ under the condition $2p\geq d_{x}$, which is
satisfied given Assumption~\ref{A:adapt:test}(iii). In this case, a sufficient
condition for Assumption~\ref{h0:est}(ii) is given by
$\mathrm{P}_{h} (\zeta _{J^\circ }\|\widehat h_{J^\circ }^{
\text{\textsc r}}-\Pi _{\mathcal H_0}h\|_{L^{2}(X)}> C )\to 0$ uniformly for
$h\in \mathcal H_{1}(\delta ^{\circ}\text{\textsf r}_{n})$, which is less restrictive
than Assumption~\ref{h0:est}(i) since the sieve dimension is fixed at
$J^\circ $. When the basis functions in $\widetilde b^{K^\circ }$ are uniformly
bounded, such as for trigonometric bases, we immediately obtain
$\mathcal C_{h}\lesssim 1$. If $\mathcal H_{0}$ consists of convex functions
that are Lipschitz and map a compact and convex set in $\mathbb R$ to
$[0,1]$, then $\mathcal C_{h}\lesssim 1$ by
\citet [Corollary~2.7.10]{Vaart2000}.
\end{remark}

The next result establishes an upper bound for the rate of testing under
a composite null hypothesis using the test statistic
$\widehat {\mathtt{T}}_{n}$ given in \eqref{def:test:unknown}.

\begin{theorem}%
\label{thm:adapt:est:test}
Let Assumptions \ref{A:LB}(i)--(iii), \ref{A:basis}(i), \ref{A:s_J},
\ref{A:adapt:test}, and \ref{h0:est} hold. Consider testing the composite null $\mathcal H_{0}$ versus
$\mathcal H_{1}(\delta ^{\circ}\text{\textsf r}_{n})=\{ h\in \mathcal H: \|h-
\mathcal H_{0}\|_{L^{2}(X)}\geq \delta ^{\circ }\text{\textsf r}_{n}\}$ for a
constant $\delta ^{\circ}>0$ and the adaptive (separation) rate
$\text{\textsf r}_{n} =(J^\circ )^{-p/d_{x}}$ given in Theorem~\ref{thm:adapt:test}. Then, for any $\alpha \in (0,1)$, we have
%
\begin{align}
\label{thm:adapt:test:est:upper} \limsup_{n\to \infty}\sup_{h\in \mathcal H_{0}}
\mathrm{P}_{h}( \widehat{\mathtt{T}}_{n}=1)\leq \alpha
\quad \text{and} \quad \lim_{n
\to \infty}\sup_{h\in \mathcal H_{1}(\delta ^{\circ }\text{\textsf r}_{n})}
\mathrm{P}_{h}(\widehat{\mathtt{T}}_{n}=0)=0. 
\end{align}
\begin{enumerate}[(2)]
\item[(1)] Mildly ill-posed case:
$\text{\textsf r}_{n}\sim  (\sqrt{\log \log n}/n )^{2p/(4(p+a)+d_{x})}$.
\item[(2)] Severely ill-posed case:
$\text{\textsf r}_{n}\sim (\log n)^{-p/a}$.
\end{enumerate}
\end{theorem}

Theorem~\ref{thm:adapt:est:test} states that
$\widehat {\mathtt{T}}_{n}$ attains the same adaptive rate of testing
$\text{\textsf r}_{n}$ for a composite null as that for a simple null. Moreover,
\eqref{thm:adapt:test:est:upper} shows that
$\widehat{\mathtt {T}}_{n}$ simultaneously has asymptotic size control
over the composite null, and is consistent uniformly over the largest class
of nonparametric alternatives
$\mathcal H_{1}(\delta ^{\circ }\text{\textsf r}_{n})$. The asymptotic size control
is established by controlling the sieve approximation error uniformly over
the index set $\widehat{\mathcal I}_{n}$ under the null, due to a projection
property built in the construction of our test
$\widehat {\mathtt{T}}_{n}$. See Lemma~\ref{Lemma:adapt:est:step1}, in which we utilize the convergence of von Neumann's alternating projection algorithm.
Theorem~\ref{thm:adapt:est:test} is applicable to any composite null hypothesis
$\mathcal H_{0}$ that is a closed convex strict subset of
$\mathcal H$, including closed convex cone null restrictions as special
cases.

Theorem~\ref{thm:adapt:est:test} shows that our adaptive test has asymptotic
size control and non-trivial power against a large class of nonparametric
NPIV alternatives without using under-smoothed choice of sieve dimensions
in testing. This is different from the existing non-adaptive tests for
semiparametric or shape NPIV restrictions, which achieve asymptotic size
controls via under-smoothed choice of tuning parameters in $L^{2}$ estimation.
For instance, in their bootstrap test for convex cone restrictions of a
NPIV function, \citet{fang2019} estimated the unrestricted NPIV function
by a sieve 2SLS estimator assuming known smoothness, and chose the sieve
dimension $J$ deterministically such that the estimation bias
$J^{-p/d_{x}}$ is of a smaller order than the standard deviation
$n^{-1/2}s_{J}^{-1}J^{1/2}$ in $L^{2}$ estimation, which leads to a non-adaptive
rate of testing $n^{-1/2}s_{J}^{-1}J^{1/2}$ that is suboptimal for
$L^{2}$ testing of NPIV models.

\begin{remark}%
\label{rm:other-test}
Our adaptive minimax $L^{2}$ rate of testing
$ (\sqrt{\log \log n}/n )^{2p/(4(p+a)+d_{x})}$ decreases to zero
strictly faster than the optimal $L^{2}$ rate of estimation
$n^{-p/(2(p+a)+d_{x})}$ (even assuming known smoothness) for mildly ill-posed
NPIV models, and coincides with the optimal $L^{2}$ rate of estimation
$(\log n)^{-p/a}$ for severely ill-posed NPIV models. Therefore, any test
statistic based on a tuning parameter chosen for the under-smoothed
$L^{2}$ rate of NPIV estimation will not be as powerful as our new test
uniformly over a large class of nonparametric alternatives.
\end{remark}
\subsubsection*{Adaptive Testing in Semiparametric Models}
Partially parametric models are often used in empirical work and can be
easily incorporated in our framework either as restricted models or as
the maintained models. Let
$\Theta \oplus \mathcal G =\{h(x_{1},x_{2})= x_{1}'\theta +g(x_{2}):
\theta \in \Theta , g\in \mathcal G\}$, where $\Theta $ denotes a finite-dimensional
parameter space, and $\mathcal G$ denotes a class of nonparametric functions.

Let the NPIV model \eqref{NPIV:maintained} be the maintained hypothesis.
We can test inequality restrictions as in Example~\ref{ex:shape} and a
semiparametric structure simultaneously. For example, we can test for a
partial linear structure with a nondecreasing function $g$ by setting
$\mathcal H_{0}=\{h\in \Theta \oplus \mathcal G: \partial _{x_{2}} g
\geq 0\}$. The class of alternative functions can then be written as
$\mathcal H_{1}(r_{n}): = \{ g\in \mathcal G:  \|g-\mathcal G_{0}
\|_{L^{2}(X_{2})}\geq r_{n} \}$, where
$\mathcal G_{0}=\{g\in \mathcal G: \partial _{x_{2}} g\geq 0\}$ and the
rate of testing $r_{n}$ does not depend on the dimensionality of
$X_{1}$. We can also test for the nonnegativity of the coefficient
$\theta $ and a partial linear restriction by setting
$\mathcal H_{0}=\{h\in \Theta \oplus \mathcal G: \partial _{x_{1}} h
\geq 0\}$. As in Example~\ref{ex:semi}, we can test semiparametric equality
restriction by taking $\mathcal H_{0}=\Theta \oplus \mathcal G$.

Let the partial linear IV model be the maintained hypothesis in model
\eqref{NPIV:maintained} with $\mathcal H=\Theta \oplus \mathcal G$. The
maintained partial linear structure can be easily enforced in the sieve
space used to estimate the unconstrained NPIV function. For instance, we
impose a partial linear structure $\mathcal H$ in our empirical illustration
on demand for differential products in Section~\ref{subsec:scanner:data}. Monotonicity in all arguments of $h$ can be
imposed by
$\mathcal H_{0}=\{h\in \Theta \oplus \mathcal G: \theta \geq 0,
\partial _{x_{2}} g\geq 0\}$. We also allow for second or higher order
derivatives in the hypotheses considered above.

\subsection{Confidence Sets in $L^{2}$}
\label{subsec:CS}

One can construct $L^{2}$ confidence sets for a NPIV function by inverting
our adaptive test. For any small $\alpha >0$, the $(1-\alpha )$ confidence
set for a NPIV function $h$ belonging to a restricted nonparametric class
$\mathcal{H}_{0}$ is given by
%
\begin{align}
\label{def:cs} \mathcal C_{n}(\alpha ) = \biggl\{h\in \mathcal
H_{0}: \frac{n\widehat{D}_{J}(h)}{ \widehat{V}_{J}}
\leq \widehat \eta _{J}( \alpha ) \text{ for all } J\in \widehat{\mathcal
I}_{n} \biggr\}. 
\end{align}
This confidence set does not depend on additional tuning parameters. The
following corollary exploits our previous results to characterize the asymptotic
size and power properties of our procedure.

\begin{corollary}%
\label{coro:CB}
Let Assumptions \ref{A:LB}(i)--(iii), \ref{A:basis}(i), \ref{A:s_J}, and
\ref{A:adapt:test} hold. Let
$\text{\textsf r}_{n} =(J^\circ )^{-p/d_{x}}$ be the adaptive rate of testing
given in Theorem~\ref{thm:adapt:test}. Then, for any
$\alpha \in (0,1)$, it holds that
%
\begin{align}
\label{coro:CB:size} \limsup_{n\to \infty}\sup_{h\in \mathcal H_{0}}
\mathrm{P}_{h} \bigl( h\notin \mathcal C_{n}(\alpha ) \bigr)
\leq \alpha 
\end{align}
and there exists a constant $\delta ^{\circ}>0$ such that
%
\begin{align}
\label{coro:CB:power} \lim_{n\to \infty}\inf_{h\in \mathcal H_{1}(\delta ^{\circ}
\text{\textsf r}_{n})}
\mathrm{P}_{h} \bigl( h\notin \mathcal C_{n}(\alpha )
\bigr)=1. 
\end{align}
\end{corollary}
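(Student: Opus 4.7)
The plan is to deduce both parts from the inversion duality between the confidence set $\mathcal{C}_n(\alpha)$ and the adaptive simple-null test, combined with the geometric observation that $\mathcal{H}_1(\delta^\circ,\textsf{r}_n)$ is separated from $\mathcal{H}^{\textsc{r}}$. The key identity I will use is that, for any candidate $h\in\mathcal{H}^{\textsc{r}}$, the event $\{h\notin\mathcal{C}_n(\alpha)\}$ literally equals $\{\exists\, J\in\widehat{\mathcal{I}}_n:\,n\widehat{D}_J(h)/\widehat{\textsl{v}}_J>\widehat{\eta}_J(\alpha)\}$, which is exactly the rejection event of the adaptive simple-null test \eqref{def:struc_test} run with $h_0:=h$.

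For \eqref{coro:CB:size}, I would first fix $h\in\mathcal{H}_0$ and note that $\|h-\mathcal{H}^{\textsc{r}}\|_\mu=0$ forces (up to $\mu$-null modifications) $h\in\mathcal{H}^{\textsc{r}}$, so $h$ is an admissible candidate for $\mathcal{C}_n(\alpha)$. Under $\Pr_h$ the function $h$ is the true NPIV function, so by the identification above, $\Pr_h(h\notin\mathcal{C}_n(\alpha))$ equals the type-I-error probability of the adaptive simple-null test of Theorem~\ref{thm:adapt:test} with $h_0=h$. Theorem~\ref{thm:adapt:test} therefore gives $\Pr_h(h\notin\mathcal{C}_n(\alpha))\leq\alpha+o(1)$. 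To upgrade this to the uniform bound \eqref{coro:CB:size}, I would inspect the size half of the proof of Theorem~\ref{thm:adapt:test}: it rests on (i) Lemma~\ref{lemma:eta:bounds}'s lower bound for the Bonferroni-corrected chi-square critical values $\widehat{\eta}_J(\alpha)$, which is completely free of $h$; and (ii) an exponential inequality for the degenerate U-statistic $\widehat{D}_J(h)$ whose constants are pinned down by the moment controls in Assumptions~\ref{A:LB}(i) and \ref{A:basis}(i) and by the operator-smoothing condition in Assumption~\ref{A:adapt:test}, all of which hold uniformly on $\mathcal{H}$. Collecting these, the $o(1)$ remainder is uniform in $h\in\mathcal{H}_0$, yielding \eqref{coro:CB:size}.

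For \eqref{coro:CB:power}, I would fix $h\in\mathcal{H}_1(\delta^\circ,\textsf{r}_n)$ and use that every $\phi\in\mathcal{H}^{\textsc{r}}$ trivially belongs to $\mathcal{H}_0$, so $\mathcal{H}^{\textsc{r}}\subseteq\mathcal{H}_0$. Consequently
$$\|h-\mathcal{H}^{\textsc{r}}\|_\mu\;\geq\;\|h-\mathcal{H}_0\|_\mu\;\geq\;\sqrt{\delta^\circ}\,\textsf{r}_n\;>\;0,$$
so $h\notin\mathcal{H}^{\textsc{r}}$. Since $\mathcal{C}_n(\alpha)\subseteq\mathcal{H}^{\textsc{r}}$ by construction, the event $\{h\notin\mathcal{C}_n(\alpha)\}$ occurs deterministically, and hence $\Pr_h(h\notin\mathcal{C}_n(\alpha))=1\geq 1-\alpha$, uniformly in $h\in\mathcal{H}_1(\delta^\circ,\textsf{r}_n)$.

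The principal obstacle is therefore the uniformity over $\mathcal{H}_0$ in Step 1: I expect to have to make sure that the constants appearing in the exponential tail bounds for $n\widehat{D}_J(h)/\widehat{\textsl{v}}_J$ depend on the true $h$ only through quantities that are uniformly bounded on $\mathcal{H}_0$ (residual variance, sieve approximation errors for the restricted null, and the sieve measure of ill-posedness $\tau_J$). The power statement, by contrast, requires no probabilistic argument and simply records that the conjectured restriction is hard-coded into $\mathcal{C}_n(\alpha)$, so alternative NPIV functions separated from $\mathcal{H}^{\textsc{r}}$ are automatically excluded.
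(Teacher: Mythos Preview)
Your size argument for \eqref{coro:CB:size} is essentially the paper's: both identify $\Pr_h(h\notin\mathcal C_n(\alpha))$ with the rejection probability of the adaptive simple-null test at $h_0=h$ and invoke Step~1 of the proof of Theorem~\ref{thm:adapt:test} (together with Step~3 of the proof of Theorem~\ref{thm:adapt:est:test} to handle $\widehat{\textsl v}_J$ and $\widehat J_{\max}$), with uniformity over $\mathcal H_0$ carried by the moment bounds in Assumptions~\ref{A:LB}(i) and~\ref{A:basis}(i).

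Your power argument for \eqref{coro:CB:power} is correct but genuinely different from the paper's. You exploit the purely set-theoretic fact that $\mathcal C_n(\alpha)\subset\mathcal H^{\textsc r}\subset\mathcal H_0$, so any $h\in\mathcal H_1(\delta^\circ,\textsf r_n)$ satisfies $\|h-\mathcal H^{\textsc r}\|_\mu\geq\|h-\mathcal H_0\|_\mu>0$, whence $h\notin\mathcal H^{\textsc r}$ and $\Pr_h(h\notin\mathcal C_n(\alpha))=1$ deterministically. The paper instead writes $\Pr_h(h\notin\mathcal C_n(\alpha))=\Pr_h\big(\max_{J\in\widehat{\mathcal I}_n} n\widehat D_J(h)/(\widehat\eta_J(\alpha)\widehat{\textsl v}_J)>1\big)$ and appeals to Step~2 of Theorem~\ref{thm:adapt:test}. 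That step, however, bounds the type~II error of testing a fixed $h_0$ against alternatives with $\|h-h_0\|_\mu$ large; with $h_0=h$ one has $\|h-h_0\|_\mu=0$, so the premise of Step~2 is not met, and in fact $n\widehat D_J(h)/\widehat{\textsl v}_J$ is asymptotically centered under $\Pr_h$. Your route sidesteps this and is the clean way to obtain \eqref{coro:CB:power} as stated; it also makes transparent that the substantive power content of the confidence set lies in Corollary~\ref{thm:diam} on the diameter rather than in \eqref{coro:CB:power} itself.
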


Corollary~\ref{coro:CB} result (\ref{coro:CB:size}) shows that the
$L^{2}$ confidence set $\mathcal C_{n}(\alpha )$ controls size uniformly
over the class of functions $\mathcal H_{0}$. Moreover, result (\ref{coro:CB:power})
establishes power uniformly over the class
$\mathcal H_{1}(\delta ^{\circ }\text{\textsf r}_{n})$. We immediately see from
Corollary~\ref{coro:CB} that the diameter of the $L^{2}$ confidence ball,
$\textsf{diam}(\mathcal C_{n}(\alpha ))=\sup  \{\|h_{1}-h_{2}\|_{L^{2}(X)}:
 h_{1},h_{2}\in \mathcal C_{n}(\alpha ) \}$, depends on the degree
of ill-posedness and the unknown smoothness $p$ of $\mathcal H$.

\begin{corollary}%
\label{thm:diam}
Let Assumptions \ref{A:LB}(i)--(iii), \ref{A:basis}(i), \ref{A:s_J}, and
\ref{A:adapt:test} hold. Then, for any
$\alpha \in (0,1)$, we have
$\sup_{h\in \mathcal H_{0}}\mathrm{P}_{h}  (\textsf{diam}(
\mathcal C_{n}(\alpha ))\geq C \text{\textsf r}_{n}   )=o(1)$, for some constant
$C>0$ and the adaptive rate $\text{\textsf r}_{n}=(J^\circ )^{-p/d_{x}}$ given
in Theorem~\ref{thm:adapt:test}.
\end{corollary}

Corollary~\ref{thm:diam} yields a confidence set whose diameter shrinks
to zero at the adaptive optimal testing rate (of the order
$ (J^\circ )^{-p/d_{x}}$) and whose implementation does not require specifying
the values of any unknown regularity parameters. Our confidence set
$\mathcal C_{n}(\alpha )$ thus adapts to the unknown smoothness $p$ of
$\mathcal H$ (the class of unrestricted NPIV functions).
\section{Monte Carlo Studies}
\label{sec:mc}

This section presents Monte Carlo performance of our adaptive test for
monotonicity and parametric form of an NPIV function using simulation designs
based on \citet{chernozhukov2015}. See Supplemental Appendix~C for additional simulation results using other
designs. All the simulation results reported here are based on
$5000$ Monte Carlo replications for each experimental design and at
$\alpha =0.05$ nominal level. The simulation results clearly indicate that
our simple adaptive test has size-control and finite-sample non-trivial
power uniformly against a large class of NPIV alternatives, even for models
with relatively weak instruments. In addition, simulation and real data
application results reported in \citet{BC2020}, but not here due to the
lack of space, have demonstrated that our adaptive test and its bootstrapped
version perform similarly well in both finite-sample size and power.

For all the designs in this section, $Y$ is generated according to the
NPIV model \eqref{NPIV:maintained} for scalar-valued random variables
$X$ and $W$. We let $X_{i} =\Phi (X_{i}^{*})$ and
$ W_{i} =\Phi (W_{i}^{*})$, where $\Phi $ denotes the standard normal distribution
function, and generate the random vector
$(X_{i}^{*},W_{i}^{*}, U_{i})$ according to
%
\begin{align}\label{design:I}
\begin{pmatrix}
X_i^*\\
W_i^*\\
U_i
\end{pmatrix}
\sim
\mathcal N\left(\begin{pmatrix}
0\\
0\\
0
\end{pmatrix},
\begin{pmatrix}
1 &\xi &0.3\\
\xi &1 &0\\
0.3 &0&1
\end{pmatrix}\right)~.
\end{align}
The parameter $\xi $ captures the strength of instruments and varies in
the experiments below. As $\xi $ increases, the instrument becomes stronger
(or the ill-posedness gets weaker). While \citet{chernozhukov2015} fixed
$\xi = 0.5$ in their design, we let $\xi \in \{0.3, 0.5, 0.7\}$ in our
simulation studies. The functional form of $h$ varies in different Monte
Carlo designs below.

\subsection{Adaptive Testing for Monotonicity}
\label{sim:monoton}

We generate $Y$ using \eqref{NPIV:maintained} and \eqref{design:I} with
$h$ from the \citet{chernozhukov2015} design:
%
\begin{align}
\label{design:h:null} h(x)=c_{0} \biggl[1-2\Phi \biggl(\frac{x-1/2}{c_{0}}
\biggr) \biggr]\quad  \text{for some constant } c_{0} \in [0, 1]. 
\end{align}
This function $h(x)$ is decreasing in $x$, where $c_{0}$ captures the degree
of monotonicity. We note that $c_{0}=0$ corresponds to
$h(x)\equiv 0$ (the boundary case); $h(x) \approx 0$ for $c_{0}$ close
to zero and $h(x)\approx \phi (0)(1-2x)$ for $c_{0}$ close to 1, where
$\phi $ denotes the standard normal probability density function. The null
hypothesis is that the NPIV function $h$ is weakly decreasing on the support
of $X$.

We implement our adaptive test statistic $\widehat {\mathtt{T}}_{n}$ given
in \eqref{def:test:unknown} using quadratic B-spline basis functions with
varying number of knots for $h$. Due to piecewise linear derivatives, monotonicity
constraints are easily imposed on the restricted function at the derivative
at $J-1$ points. For the instrument sieve $b^{K(J)}(W)$, we also use quadratic
B-spline functions with a larger number of knots with
$K(J)\in \{2 J,  4J,  8J\}$. Implementation of the restricted sieve NPIV
estimator $\widehat h_{J}^{\text{\textsc r}}$ is straightforward using the R package
\texttt{coneproj}. We compare our adaptive test to the nonadaptive test
of \citet{fang2019}, which involves approximately computing
$[n^{-1/2}s_{J}^{-1}J^{1/2}]^{-1}\min_{h\in \mathcal H_{0}}\|
\widehat h_{J}- h\|_{L^{2}(X)}$ for a deterministic choice of sieve dimensions
$J$ and $K\geq J$ in their B-spline 2SLS estimate $\widehat h_{J}$. Their
2019 arXiv preprint presents a simulation study with $J=3$,
$K\in \{3, 4, 5\}$, and other tuning parameter choices
$c_{n} = (\log n)^{-1}$ and $\gamma _{n} = 0.01 / \log n$, such that their
test achieves approximately empirical size control with a sample size
$n=500$. Below, we use FS to denote their test with $J=3$ and $K=5$ (as
$K=5$ yields the best empirical power in their simulation), which is computed
using R language translation of their Matlab program code. To study the
sensitivity to the choice of $K$, we also implement their test with
$K = 12, 24$. In our simulations, we implement their test using 200 bootstrap
iterations.

\begin{table}[h!]
 \begin{center}
 \renewcommand{\arraystretch}{1}
{\footnotesize  \begin{tabular}{c|cc||cc|cc|cc|c|c|c}
\hline
\textit{$n$}&$c_0$& $\xi$&{\it $\widehat {\mathtt{T}}_n$}&{\it $\widehat J$}&{\it $\widehat {\mathtt{T}}_n$}&{\it $\widehat J$} &{\it $\widehat {\mathtt{T}}_n$}&{\it $\widehat J$} &FS&FS&FS\\
 &  & &  \multicolumn{2}{c|}{$K(J)=2J$}&\multicolumn{2}{c|}{$K(J)=4J$}&\multicolumn{2}{c|}{$K(J)=8J$}&$K=5$&$K=12$&$K=24$\\
    \hline
 		$500$	&boundary &$0.3$ 	& 0.007& 3.00		 	& 0.023  &  3.03 & 0.040  	& 3.25	& 0.009 &0.045 &0.113\\
 &	& $0.5$   															& 0.020& 3.29			&  0.025 &  3.35 & 0.039  	& 3.41&0.041  &0.059  &0.095\\
 &	 &   $0.7$	    												& 0.030 & 3.56			& 0.035  	& 3.56 & 0.040  	& 3.73&0.057  &0.066&0.093\\
  \hline
 		&$0.01$&$0.3$ 									& 0.006& 3.00		 	& 0.021  &  3.03 & 0.038  	& 3.25	& 0.008 &0.040 &0.103\\
 &	& $0.5$   															& 0.019& 3.30			&  0.023 &  3.36 & 0.036  	& 3.41&0.039  &0.055  &0.086\\
 &	 &   $0.7$	    												& 0.029 & 3.57			& 0.033  	& 3.58 & 0.037	  	& 3.75&0.046  &0.057&0.080\\
  \hline
 &$0.1$&$0.3$  										& 0.005 &  3.00			& 0.016 	& 3.03 & 0.022 	& 3.25&0.004  &0.023& 0.050 \\
 &	& $0.5$   														& 0.013 &  3.33			& 0.018  	& 3.38 & 0.025 	& 3.43&0.019 	&0.026&0.038 \\
 &	 &   $0.7$	    											& 0.019 &  3.65			& 0.023		& 3.65 & 0.026  	& 3.82 &0.014 & 0.017 &0.022\\
 \hline
$1000$ &boundary &$0.3$  	& 0.009&  3.01				& 0.019 	& 3.06		& 0.032 	& 3.30&0.013 & 0.037 & 0.079\\
 &	& $0.5$   															& 0.017 &  3.47		  & 0.023		& 3.44		& 0.031  	& 3.44&0.040 &0.049 & 0.066\\
 &	 &   $0.7$	    												& 0.029 &  3.84		 	& 0.034	 	& 3.93		& 0.040  	& 3.95&0.052 & 0.058	&0.075 \\
  \hline
 &$0.01$&$0.3$  										& 0.009&  3.01				& 0.019  	& 3.06		& 0.029 	& 3.30&0.014 & 0.033 & 0.075\\
 &	& $0.5$   													 & 0.017 &  3.48		  & 0.023	& 3.45		& 0.030  	& 3.44&0.038 &0.045 & 0.060\\
 &	 &   $0.7$	    											& 0.026 &  3.88		 	& 0.030	 	& 3.96		& 0.036  	& 3.98&0.041 & 0.050	&0.061 \\
  \hline
  &$0.1$&$0.3$  										& 0.006 &  3.02		 	& 0.013 	& 3.06	& 0.019 & 3.30&0.008	&	0.019 & 0.038 \\
 &	& $0.5$   														& 0.012  & 3.54		 	& 0.016		& 3.49	& 0.022 	& 3.48 &0.016 	& 0.018&0.022 \\
 &	 &   $0.7$	    											& 0.017  & 4.02			& 0.019 	& 4.09		& 0.024	& 4.10&0.008 	&	0.008 &0.010  \\
  \hline
$5000$ &boundary &$0.3$  		&0.021&  3.36				& 0.025  	& 3.38		& 0.029 	& 3.38&0.038 & 0.046 & 0.056\\
 &	& $0.5$   													 			& 0.033 &  3.54		  & 0.034		& 3.60		& 0.041  	& 3.79&0.051 &0.055 & 0.057\\
 &	 &   $0.7$	    													& 0.041 &  4.11		 	& 0.044	 	& 4.10		& 0.044  	& 4.07&0.052 & 0.055	&0.057 \\
  \hline
&$0.01$&$0.3$  									&0.020 &  3.36		 &0.024		&  3.39		& 0.028  	& 3.39			&0.037 &0.043		&0.052 \\
 &	& $0.5$   												&0.031 &  3.56		 	&0.033 	&   3.62	& 0.038  	& 3.80			&0.038 	&0.044 &0.046 \\
 &	 &   $0.7$	    									&0.038  &  4.18		 &0.039 	&  4.17		& 0.039 		& 4.14			&0.034 		&0.036 &0.039 \\
  \hline
  &$0.1$&$0.3$  								&0.016 &  3.39		 &0.018 	&   3.41			& 0.020 	&3.39		&0.019 &0.021 &0.025 \\
 &	& $0.5$   											&0.022 &  3.68			  &0.022	&   3.73			& 0.027  	& 3.91			&0.008 &0.009 	&0.008  \\
 &	 &   $0.7$	    								&0.023 &  4.46			  &0.025 	&   4.44		& 0.026  	& 4.40			&0.001		&0.001 &0.001 \\
  \hline
   \end{tabular}}
 \end{center}
 \caption{{\small Testing monotonicity---empirical size of our adaptive
test $\widehat {\mathtt{T}}_{n}$ and of the FS test (with $J=3$). Monte Carlo average value
$\widehat J$. Nominal level $\alpha =0.05$. True DGP from Section~\protect\ref{sim:monoton} using NPIV function \protect\eqref{design:h:null}. Instrument
strength increases in $\xi $.}}\label{table:size}
 \end{table}

\paragraph{Size.} Table~\ref{table:size} presents the average data-driven choice
of tuning parameter $J$, denoted by $\widehat J$. Specifically,
$\widehat J$ is the average choice of $J$ that maximizes
$\widehat{\mathcal{W}}_{J} (\alpha )$ over the RES index set
$\widehat{\mathcal I}_{n}$ when the null is not rejected; and is the smallest
$J\in \widehat{\mathcal I}_{n}$ such that
$\widehat{\mathcal{W}}_{J} (\alpha )>1$ when the null is rejected. This
data-driven choice of $J$ corresponds to \textit{early stopping} when the
null is rejected. Table~\ref{table:size} shows that, for the same sample
size $n$, the average data-driven choice $\widehat J$ increases as the
instrument strength (captured by the parameter $\xi $) increases; while
for the same instrument strength $\xi $, $\widehat J$ weakly increases
as the sample size $n$ increases. Table~\ref{table:size} also reports empirical
rejection probabilities under the null hypothesis using our adaptive test
$\widehat {\mathtt{T}}_{n}$ and the FS test. Our adaptive test is slightly
under-sized across different sample sizes $n\in \{500, 1000, 5000\}$, different
instrument strength $\xi \in \{0.3,0.5,0.7\}$, different degrees of monotonicity
$c_{0}\in \{0,0.01,0.1\}$ with $c_{0}=0$ being the ``boundary'' case. Table~\ref{table:size} shows that our adaptive test has empirical size control
for all $K(J) \in \{2J, 4J, 8J\}$, which is in line with our theoretical
results establishing asymptotic size control for any deterministic relation
of $K\sim cJ$ for some fixed constant $c\geq 1$. The difference between
the empirical size (of our adaptive test) for different choice of
$K(J)$ is small when $n$ is large or $\xi = 0.7$. While the FS test with
$K=5$ has empirical size control, the FS test with $K=12$ can be slightly
over-sized, and with $K=24$ can be heavily over-sized for all $\xi $ and
$n=500$, especially so for functions at or close to the boundary.\footnote{In
our previous version (arxiv:2006.09587v4), we implemented what we called
a \emph{nonadaptive bootstrap test} $T^{B}_{n,3}$ of \citet{fang2019}, which
is essentially their test, but uses empirical root-mean squared metric
instead of their trapezoid rule approximated
$\|\widehat h_{J}- h\|_{L^{2}(X)}$, and a cone projection onto a $J$-dimensional
sieve space instead of their optimization over grid points (for $x$). The
``nonadaptive bootstrap test'' $T^{B}_{n,3}$ has an empirical size closer
to that of our adaptive test.}

\begin{figure}[h!]
\begin{center}
    \includegraphics[scale=.7]{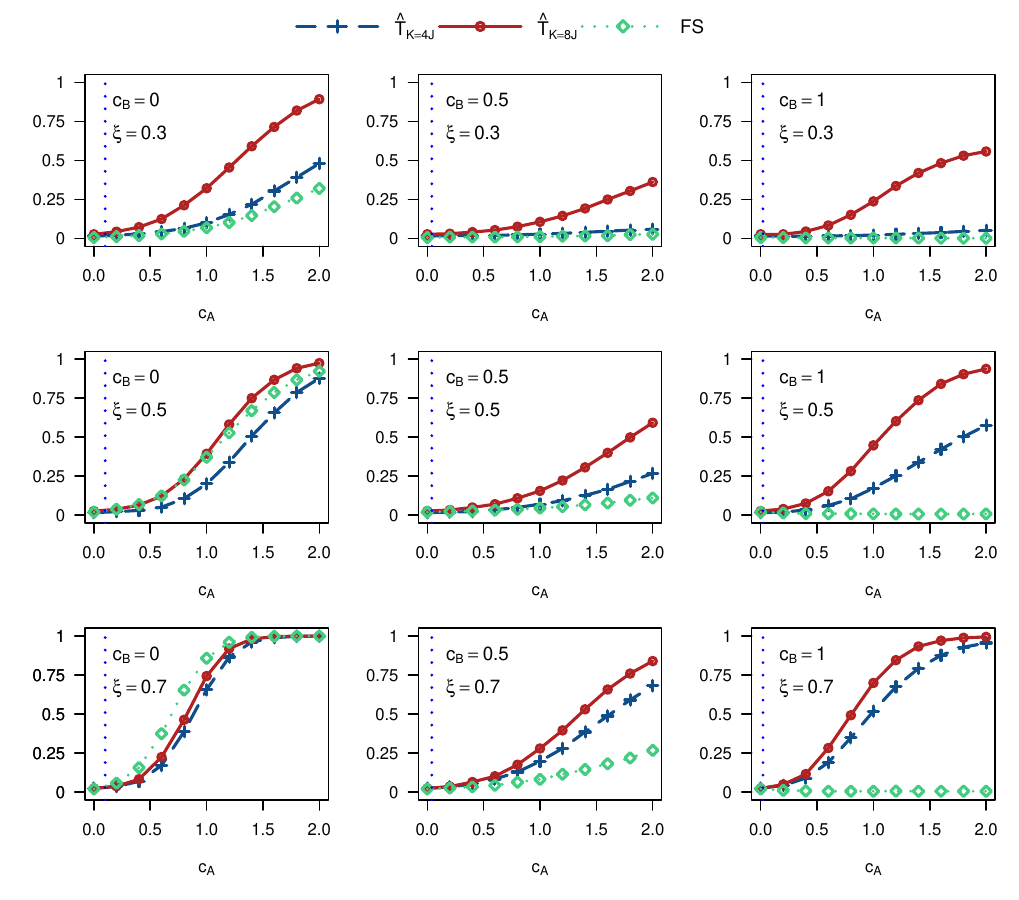}
         \vspace*{-5mm}
\caption{Testing monotonicity---empirical power of our adaptive
test $\widehat {\mathtt{T}}_{n}$ with $K(J)=4J$ (dashed plus lines) and
$K(J)=8J$ (solid circle lines) and the FS test (with $J=3$, $K=5$, dotted
square lines). True DGP from Section~\protect\ref{sim:monoton} using NPIV function
\protect\eqref{sim:h:sinus} with $n=500$. The vertical dotted line indicates when
the null hypothesis is violated. Alternatives are quadratic when
$c_{B}=0$ and become more complex as $c_{B}>0$ increases.}%
\label{fig:power-new}
\end{center}
\end{figure}

\paragraph{Power.} We next examine the rejection probabilities of our adaptive
test when the data are generated according to \eqref{NPIV:maintained} and
\eqref{design:I} using the NPIV function
%
\begin{align}
\label{sim:h:sinus} h(x)= -x/5+c_{A} \bigl(x^{2}+c_{B}
\sin (2\pi x) \bigr), 
\end{align}
where $c_{A} \in [0, 2]$ and $c_{B}\in \{0, 0.5, 1\}$. The null hypothesis
is that the NPIV function $h(\cdot )$ is weakly decreasing over the support
of $X$. When $c_{B}=0$, the null is satisfied only if
$c_{A}\leq 0.1$. When $c_{B}=0.5$, the null hypothesis is satisfied only
if $c_{A}\leq 0.1/(1+\pi /2)\approx 0.04$. When $c_{B}=1$, the null is
satisfied only if $c_{A}\leq 0.1/(1+\pi )\approx 0.02$.

Figure~\ref{fig:power-new} depicts the empirical power function of our
adaptive test $\widehat {\mathtt{T}}_{n}$ (dashed plus lines for
$K(J)=4J$ and solid circle lines for $K(J)=8J$), and of the FS test (dotted
square lines, $J=3$, $K=5$), under the 5\% nominal level for different instrument
strengths $\xi \in \{0.3, 0.5,0.7\}$, and sample size $n=500$.\footnote{The
finite-sample power of our adaptive test with $K(J)=2J$ is slightly smaller
than that with $K(J)=4J$ when $n=500$, but the power difference disappears
when $n$ becomes larger.} Figure~\ref{fig:power-new-5000} shows these power
curves for a larger sample size $n=5000$. From both figures, we see that
our adaptive test becomes more powerful for $c_{A}>0.1$ as the instrument
strength $\xi $ and the sample size $n$ increase. For weak instrument strength
$\xi =0.3$ and a small sample size (i.e., $n=500$), our adaptive test with
a larger $K(J)=8J$ is more powerful.

\begin{figure}[h!]
\begin{center}
    \includegraphics[scale=.7]{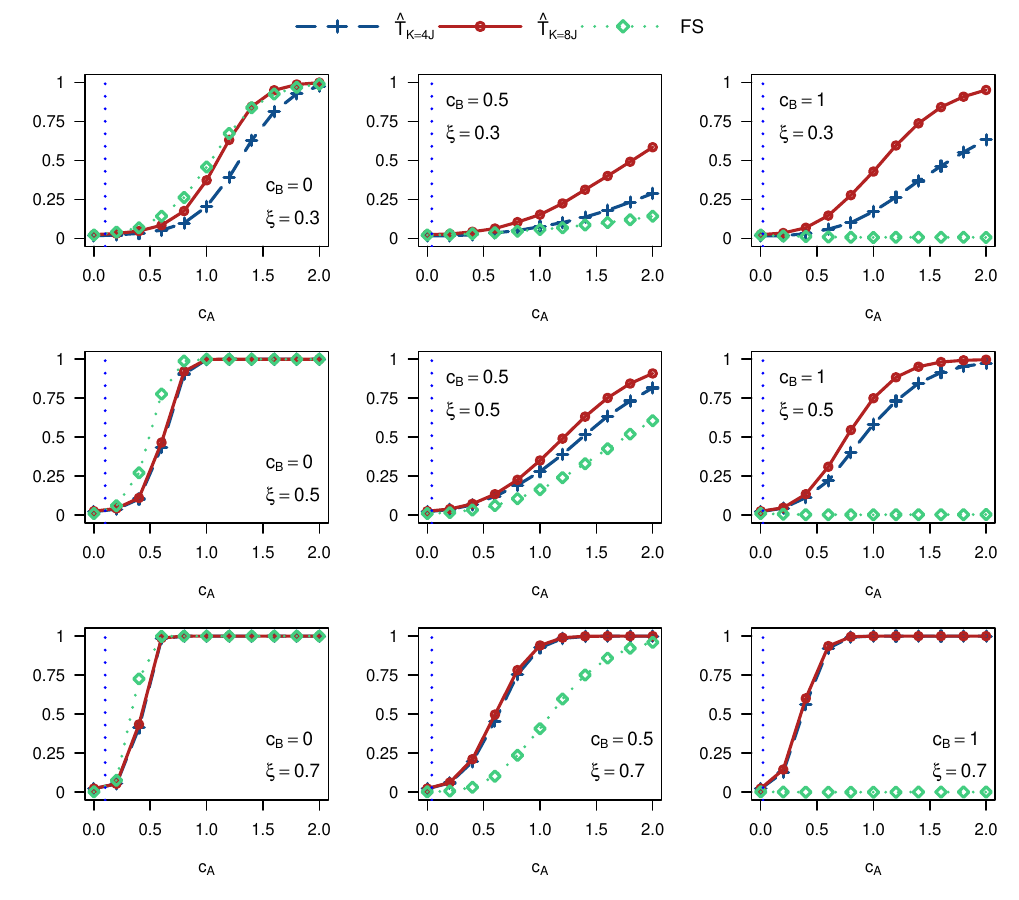}
         \vspace*{-5mm}
\caption{Testing monotonicity---replication of Figure~\protect\ref{fig:power-new} with $n=5000$.}%
\label{fig:power-new-5000}
\end{center}
\end{figure}

Figures~\ref{fig:power-new} and \ref{fig:power-new-5000} highlight the
importance of adaptation for the power of nonparametric monotonicity tests.
When the alternative is of a simple quadratic form (i.e., $c_{B} =0$),
there is little difference between our adaptive test
$\widehat {\mathtt{T}}_{n}$ and the FS test. But, as the alternative becomes
more nonlinear when $c_{B}>0$ increases, the FS test becomes much less
powerful than our adaptive test. This shows that a test with a tuning parameter
$J$ that is a deterministic nondecreasing function of $n$ can be powerful
in a certain direction but not for other nonlinear deviations.

In Supplemental Appendix~C, we present another
simulation design, which is based on an NPIV monotonicity design of
\citet{CW2017}. Simulation results using that design reveal that the empirical
size and power of our adaptive test have patterns very similar to the ones
reported in this subsection.

\subsection{Testing for Parametric Restrictions}
\label{sim:par}

We now test for a parametric specification. We assume that the data are
generated according to the design \eqref{NPIV:maintained} and
\eqref{design:I} with the NPIV function $h$ given by
\eqref{sim:h:sinus} with $c_{A} \in [0,4]$ and
$c_{B}\in \{0, 0.5\}$. The null hypothesis is $h$ being linear (i.e.,
$c_{A}=c_{B}=0$).

We implement our adaptive test $\widehat {\mathtt{T}}_{n}$ given in
\eqref{def:test:unknown} using quadratic B-spline basis functions with
varying number of knots and where the constrained function coincides with
the parametric 2SLS estimator. The number of knots varies within the RES
index set $\widehat{\mathcal I}_{n}$ as implemented in the last subsection,
with $K(J)\in \{2J, 4J, 8J\}$. We compare our adaptive test to the asymptotic
$t$-test and the test by \citet{Horowitz2006} (denoted by JH).\footnote{\citet{Horowitz2006}
already demonstrated in his simulation studies, with a sample size
$n=500$ and $1000$ Monte Carlo replications, that his test is more powerful
than several existing tests including \citet{bierens1990}'s.} To compute
the JH test that involves kernel density estimation, we follow
\citet{Horowitz2006} to estimate the joint density $f_{XW}$ using the kernel
$K(v)=(15/16)(1 - v^{2})^{2}\mathbbm{1}\{|v|\leq 1\}$, with the
kernel bandwidth chosen via cross-validation minimizing mean squared error
of estimating $f_{XW}$.


\begin{table}[h!]
 \begin{center}
 \renewcommand{\arraystretch}{1.05}
{\footnotesize  \begin{tabular}{c|c||cc|cc|cc|c|c}
\hline
 \textit{$n$}  & $\xi$ & $\widehat {\mathtt{T}}_n$, $K(J)=2J$& $\widehat J$ &$\widehat {\mathtt{T}}_n$, $K(J)=4J$&$\widehat J$&$\widehat {\mathtt{T}}_n$, $K(J)=8J$&$\widehat J$&$t$-test & JH test  \\
    \hline
     $500$
      	& $0.3$   										& 0.008& 	 3.00 		& 0.021		& 	 3.03&  0.040		& 	 3.29& 	0.001&0.049\\
 	& $0.5$   													& 0.022& 	 3.32 		& 0.024		& 	 3.40	& 0.037		& 	 3.46& 0.027&0.054\\
  &   $0.7$	   												& 0.036	&  3.61 		& 0.037		& 	 3.63	& 0.035		& 	 3.81&		0.045&0.058\\
 \hline
 $1000$& $0.3$   							& 0.014& 	3.01 		& 0.024			& 	 3.08	& 0.032		& 	 3.33& 0.006 &0.060\\
 	& $0.5$   													& 0.025& 	3.52 		& 0.033			& 	 3.49	& 0.033		& 	 3.48& 0.043&0.060\\
  &   $0.7$	   												& 0.036	&  3.91  	& 0.039			& 	 4.03	& 0.042		& 	 4.06& 0.046&0.053\\
 \hline
 $5000$  	& $0.3$   					& 0.022 & 3.38  	& 0.029			& 	 3.41	& 0.037		& 	 3.43& 0.032 & 0.057 \\
  	& $0.5$   												& 0.043 & 3.58 		& 0.048			& 	 3.65	& 0.045		& 	 3.85& 0.050 & 0.061\\
  &   $0.7$	   												& 0.050	&  4.17 		& 0.051				& 	 4.15	& 0.050		& 	4.14& 0.049 &0.055\\
  \hline
 \end{tabular}}
 \end{center}
  \vspace*{-5mm}
 \caption{{\small Testing parametric form---empirical size of our adaptive
test $\widehat {\mathtt{T}}_{n}$, the $t$-test, and JH test. Monte Carlo
average value $\widehat J$. Nominal level $\alpha =0.05$. True DGP from
Section~\protect\ref{sim:par} using NPIV function \protect\eqref{sim:h:sinus} with
$c_{A}=c_{B}=0$. Instrument strength increases in $\xi $.}}\label{table:size:par}
 \end{table}

\paragraph{Size.} Table~\ref{table:size:par} reports empirical rejection probabilities
of several tests under the null hypothesis of linearity of $h$. Results
are presented under different sample sizes
$n\in \{500, 1000, 5000\}$ and instrument strength
$\xi \in \{0.3, 0.5,0.7\}$. It also reports our adaptive test with different
$K(J)$ and $\widehat J$ (which is defined the same way as that in Table~\ref{table:size}). We note that $\widehat J$ is again weakly increasing
with sample size and with instrument strength. While the JH test can be
slightly over-sized, our adaptive test $\widehat {\mathtt{T}}_{n}$ provides
adequate size control across different sample size $n$, different instrument
strength $\xi $, and different $K(J)$. The difference in empirical size
of our adaptive test with different $K(J)$ is again small for large
$n$, which is consistent with our theory.

\begin{figure}[h!]
    \includegraphics[scale=.7]{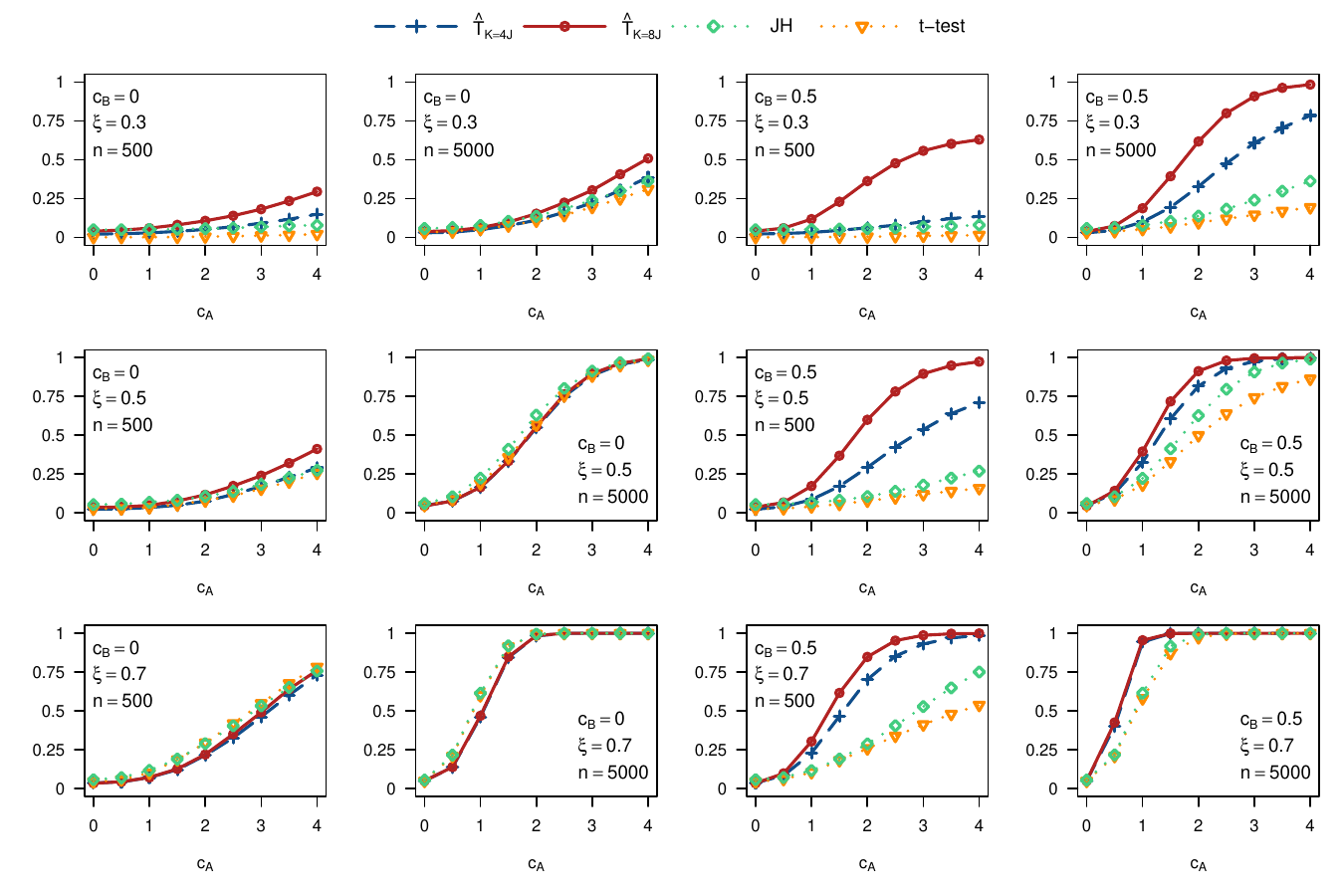}
         \vspace*{-5mm}
\caption{Testing parametric form---empirical power of our adaptive
test $\widehat {\mathtt{T}}_{n}$ with $K(J)=4J$ (dashed plus lines) and
$K(J)=8J$ (solid circle lines), of JH test (dotted square lines), and of
$t$-test (dotted triangle lines). True DGP from Section~\protect\ref{sim:par} using
NPIV function \protect\eqref{sim:h:sinus}. Alternatives are quadratic when
$c_{B}=0$ and more complex for $c_{B}=0.5$.}%
\label{fig:power:par}
\end{figure}

\paragraph{Power.}
 Figure~\ref{fig:power:par} provides empirical power curves
for the 5\% level tests with sample sizes $n\in \{500,5000\}$. From this
figure, we see that our adaptive test $\widehat {\mathtt{T}}_{n}$ (dashed
plus lines with $K(J)=4J$ and solid circle lines with $K(J)=8J$) has power
similar to the asymptotic $t$-test (dotted triangle lines) and the JH test
(dotted square lines) for a simple quadratic alternative with
$c_{B}=0$. When the alternative function in \eqref{sim:h:sinus} becomes
more nonlinear/complex with $c_{B}= 0.5$, our adaptive test becomes more
powerful than the JH test. This is theoretically sensible since the
\citet{Horowitz2006} test is designed to have power against
$n^{-1/2}$ smooth alternative only. Since our adaptive test is slightly
under-sized for small sample sizes or for weak instrument strength, the
size-adjusted empirical power of our test is even better (see our arxiv:2006.09587v3
version). To sum up, our adaptive minimax test not only controls size,
but also has very good finite-sample power uniformly against a large class
of nonparametric alternatives.

Finally, in Supplemental Appendix~C, we present
additional simulation comparisons of our adaptive test against our adaptive
version of \citeauthor{bierens1990}'s (\citeyear{bierens1990}) type test when the dimension of conditional
instrument $W$ is larger than the dimension of the endogenous variables
$X$. We observe that our adaptive test $\widehat {\mathtt{T}}_{n}$ again
has size control and even better finite-sample power when
$d_{w} > d_{x}$.

\section{Empirical Applications}
\label{sec:empirical}

We present two empirical applications of our adaptive test for NPIV models.
The first one tests for connected substitutes restrictions in differentiated
products demand using market level data. The second one tests for monotonicity,
convexity, or parametric specification of Engel curves for non-durable
good consumption using household level data. The applications demonstrate
that our simple adaptive test is powerful to detect economic shape restrictions.

In both empirical applications, we implement our adaptive test
$\widehat {\mathtt{T}}_{n}$ given in \eqref{def:test:unknown} with
$K(J)=4J$. The null hypothesis is rejected at the nominal level
$\alpha =0.05$ whenever $\widehat {\mathcal{W}}_{J} (\alpha )>1$ for some
$J\in \widehat{\mathcal I}_{n}$ (the RES index set). Let
$\widehat{\mathcal J}$ be
$\{J\in \widehat{\mathcal I}_{n}: \widehat {\mathcal{W}}_{J}(\alpha )>1
\}$ when our test rejects the null, and be
$\arg \max_{J\in \widehat{\mathcal I}_{n}} \widehat {\mathcal{W}}_{J}(
\alpha )$ when our test fails to reject the null. Let $\widehat J$ be the
minimal integer of
$\widehat{\mathcal J} \subset \widehat{\mathcal I}_{n}$. Tables in this
section report $\widehat{\mathcal J} $ and
$\widehat {\mathcal{W}}_{\widehat J}$. We also report the corresponding
$p$ value, which should, by Bonferroni correction, be compared to the nominal
level $\alpha =0.05$ divided by the cardinality of
$\widehat{\mathcal I}_{n}$. Finally, since our test is based on a leave-one-out
version, the value of $\widehat {\mathcal{W}}_{\widehat J}$ could be negative.

\subsection{Adaptive Testing for Connected Substitutes in Demand for Differential Products}
\label{subsec:scanner:data}

\citet{berry2014} provided conditions under which a nonparametric demand
system for differentiated products can be inverted to NPIV equations using
market level data. A key restriction is what they called ``connected substitutes.''
\citet{compiani2018} applied their nonparametric identification results
and estimated the system of inverse demand by directly imposing the connected
substitutes restrictions in his implementation of sieve NPIV estimator,
and obtained informative results as an alternative to BLP demand in simulation
studies and a real data application.

We revisit \citet{compiani2018}'s empirical application using the 2014
Nielsen scanner data set that contains market (store/week) level data of
consumers in California choosing from organic strawberries, non-organic
strawberries, and an outside option. While \citet{compiani2018} directly
imposed ``connected substitutes'' restriction in his sieve NPIV estimation
of inverse demand, we want to test this restriction. Following
\citet{compiani2018}, we consider
\begin{align*}
X_{o}+U=h(\textbf{P}, S_{o}, S_{no},\mathit{In}),\qquad
\Evtex [U|\textbf{W}_{p}, X_{o}, X_{no}, \mathit{In}]=0,
\end{align*}
where $h$ denotes the inverse of the demand for organic strawberries,
$X_{o}$ denotes a measure of taste for organic products, $X_{no}$ denotes
the availability of other fruit, $S_{o}$ and $S_{no}$ denote the endogenous
shares of the organic and non-organic strawberries, respectively.
$(X_{o}, X_{no})$ are the two included instruments for the two endogenous
shares $(S_{o}, S_{no})$. $In$ denotes store level (zip code) income and
$U$ unobserved shocks for organic produce. The vector
$\textbf{P}=(P_{o},P_{no},P_{\mathrm{out}})$ denotes the endogenous prices of organic
strawberries, non-organic strawberries, and non-strawberry fresh fruit,
respectively. We follow \citet{compiani2018} and let
$\textbf{W}_{p}=(W_{o},W_{no},W_{\mathrm{out}},W_{s1},W_{s2})$ be a five-dimensional
vector of conditional instruments for the price vector $\textbf{P}$, including
three Hausman-type instrumental variables $(W_{o},W_{no},W_{\mathrm{out}})$ and
two shipping-point spot prices $(W_{s1},W_{s2})$ (as proxies for the wholesale
prices faced by retailers).

As shown by \citet [Lemma~1]{compiani2018}, the connected substitutes assumption
of \citet{berry2014} implies the following shape restrictions on the function
$h$: First, $h$ is weakly increasing in the organic product price
$P_{o}$. Second, $h$ is weakly increasing in the organic product share
$S_{o}$. Third, $h$ is weakly increasing in the non-organic product share
$S_{no}$. Fourth,
$\partial h/\partial s_{o}\geq \partial h/\partial s_{no}$ (the so-called
diagonal dominance). Below, we test for these inequality restrictions.

We use the data set of \citet{compiani2018},\footnote{For details on the
construction of the data and descriptive statistics, see
\citet [Appendix F]{compiani2018}.} where income ranges from the first
and to the third quartile of its distribution and prices for organic produces
are restricted to be above its 1st and below its 99th percentile. The resulting
sample has size $n= 11910$. We implement our adaptive test
$\widehat {\mathtt{T}}_{n}$ by making use of a semiparametric specification
of the function $h$: we consider the tensor product of quadratic B-splines
$\psi ^{J_{1}}(P_{o})$ and the vector
$(1,In, P_{no}, \psi ^{3}(S_{o}))$, where we use a cubic B-spline transformation
of $S_{o}$ without knots and without intercept, hence $J=6J_{1}$. The variables
$( P_{\mathrm{out}}, S_{no}, S_{no}P_{no}, S_{no}S_{o})$ are included additively
and we set $K(J)=4J$. We obtain the RES index set
$\widehat{\mathcal I}_{n}=\{22,28,34\}$.
%
\begin{table}[h]
\renewcommand{\arraystretch}{1.2}
\centering
{\footnotesize \begin{tabular}{c|cccc}
\hline
 $H_0$& $\widehat {\mathcal{W}}_{\widehat J}$ & $p$ val. & reject $H_0$?&$\widehat{\mathcal J}$
\\ \hline\hline
$\partial h/\partial p_o\geq 0$ &0.854 &  0.031 & no &$\{34\}$ 	\\
$\partial h/\partial p_o\leq 0$ &3.154 &  0.000 & yes &$\{28,34\}$ 	\\
\hline
$\partial h/\partial s_o\geq 0$ &0.661 &  0.057 & no &$\{34\}$ 	\\
$\partial h/\partial s_o\leq 0$ &2.022 &  0.001 & yes &$\{22, 28,34\}$ 	\\
\hline
$\partial h/\partial s_{no}\geq 0$ &-0.115 & 0.471 & no &$\{22\}$ 	\\
$\partial h/\partial s_{no}\leq 0$ &-0.238 &  0.734 & no &$\{22\}$ 	\\
\hline
$\partial h/\partial s_o\geq \partial h/\partial s_{no}$ &0.663 &  0.057 & no &$\{34\}$ 	\\
$\partial h/\partial s_o\leq \partial h/\partial s_{no}$ &2.022 &  0.001 & yes &$\{22, 28,34\}$ 	\\
\hline
\end{tabular}}%
  \vspace*{-2mm}
\caption{Adaptive testing for the shape of $h$ (the inverse demand for organic produce).}
\label{tab:scanner:data}
\end{table}

According to Table~\ref{tab:scanner:data}, at the nominal level
$\alpha =0.05$, our adaptive test fails to reject that $h$ is weakly increasing
in the own price (but rejects $\partial h/\partial p_{o}\leq 0$), and fails
to reject that $h$ is weakly increasing in the own share (but rejects
$\partial h/\partial s_{o}\leq 0$). Our test fails to reject that
$h$ is weakly increasing or decreasing in the non-organic share (i.e.,
fails to reject a constant partial effect of $h$ with respect to the non-organic
share). Our test also fails to reject the diagonal dominance (but rejects
$\partial h/\partial s_{o}\leq \partial h/\partial s_{no}$). In summary,
our adaptive test provides strong empirical evidence for the connected
substitutes restriction.

\subsection{Adaptive Testing for Engel Curves}
\label{sec:emp}

The system of Engel curves plays a central role in the analysis of consumer
demand for non-durable goods. It describes the $i$th household's budget
share $Y_{\ell ,i}$ for non-durable goods $\ell $ as a function of its
log-total expenditure $X_{i}$ and other exogenous characteristics such
as family size and age of the head of the $i$th household. The most popular
class of parametric demand systems is the almost ideal class, pioneered
by \citet{deaton1980}, where budget shares are assumed to be linear in
log-total expenditure. \citet{banks1997} proposed a popular extension of
this system of linear Engel curves to include a squared term in log-total
expenditure, and their parametric Student $t$-test rejects linear form
in favor of quadratic Engel curves.

\citet{BCK07econometrica} estimated a system of nonparametric Engel curves
as functions of endogenous log-total expenditure and family size, using
log-gross earnings of the head of household as a conditional instrument
$W$. We use a subset of their data from the 1995 British Family Expenditure
Survey, with the head of household aged between 20 and 55 and in work,
and household with one or two children. This leaves a sample of size
$n = 1027$. As an illustration we consider Engel curves
$h_{\ell} (X)$ for four non-durable goods $\ell $: ``food in,'' ``fuel,''
``travel,'' and ``leisure'': $\Evtex [Y_{\ell} - h_{\ell} (X) | W]=0$. We use
the same quadratic B-spline basis with up to three knots to approximate
all the Engel curves and set $K(J)=4J$. Hence, the RES index set
$\widehat{\mathcal I}_{n}=\{3,4,5\}$ is the same for the different Engel
curves.

\begin{table}[h]
\renewcommand{\arraystretch}{1.2}
\centering
{\footnotesize \begin{tabular}{c|cccc|cccc}
\hline
& \multicolumn{4}{c|}{$H_0$: $h$ is increasing}& \multicolumn{4}{c}{$H_0$: $h$ is decreasing}\\
\hline
Goods&  $\widehat {\mathcal{W}}_{\widehat J}$ & $p$ value &reject $H_0$? & $\widehat{\mathcal J}$& $\widehat {\mathcal{W}}_{\widehat J}$ & $p$ value&reject $H_0$? & $\widehat{\mathcal J}$
\\ \hline\hline
``food in''& 2.871 &0.000	 & yes& $\{3\}$			&-0.324&0.852&no 	&$\{4\}$	\\
``fuel''& 8.192 & 0.000 & yes& 		$\{3,4,5\}$			& 0.547&0.072 &no		&$\{3\}$	\\
``travel''&2.527 & 0.000 & yes &		$\{3,4\}$			&0.381&0.124&no		&$\{3\}$	\\
``leisure''& 0.299 & 0.165 &no & 		$\{4\}$			&4.552 &0.000&yes&$\{3,4\}$\\
\hline
\end{tabular}}%
  \vspace*{-2mm}
\caption{Adaptive testing for monotonicity of Engel curves.}
\label{tab:test:data}
\end{table}
\begin{table}[h]
\renewcommand{\arraystretch}{1.2}
\centering
{\footnotesize \begin{tabular}{c|cccc|cccc}
\hline
& \multicolumn{4}{c|}{$H_0$: $h$ is convex}& \multicolumn{4}{c}{$H_0$: $h$ is concave}\\
\hline
Goods&  $\widehat {\mathcal{W}}_{\widehat J}$ & $p$ value &reject $H_0$? & $\widehat{\mathcal J}$& $\widehat {\mathcal{W}}_{\widehat J}$ & $p$ value&reject $H_0$? & $\widehat{\mathcal J}$
\\ \hline\hline
``food in''& -0.287 &0.791	 & no& $\{4\}$			&-0.324&0.853&no 	&$\{3\}$	\\
``fuel''&-0.325 & 0.844 & no& 		$\{3\}$			&1.621&0.001&yes		&$\{3\}$	\\
``travel''& 1.188 & 0.007 & yes &		$\{3\}$			&-0.322&0.837&no		&$\{5\}$	\\
``leisure''&-0.197 & 0.656 &no & 		$\{5\}$			&0.691 &0.047&no&$\{4\}$\\
\hline
\end{tabular}}%
  \vspace*{-2mm}
\caption{Adaptive testing for convexity/concavity of Engel curves.}
\label{tab:test:data:convex}
\end{table}
\begin{table}[h]
\renewcommand{\arraystretch}{1.2}
\centering
{\footnotesize \begin{tabular}{c|cccc|cccc}
\hline
& \multicolumn{4}{c|}{$H_0$: $h$ is linear}& \multicolumn{4}{c}{$H_0$: $h$ is quadratic}\\
\hline
Goods&  $\widehat {\mathcal{W}}_{\widehat J}$ & $p$ value&reject $H_0$?&$\widehat{\mathcal J}$ &  $\widehat {\mathcal{W}}_{\widehat J}$ & $p$ value&reject $H_0$?&$\widehat{\mathcal J}$
\\ \hline\hline
``food in''& -0.273 & 0.781&no &		$\{3\}$				&  0.125&0.272&no&$\{3\}$	\\
``fuel''&1.623 & 0.001 & yes&					$\{3\}$					&-0.120&0.540&no&$\{5\}$	\\
``travel''&1.210 & 0.006 &yes&				$\{3\}$					& -0.014&0.407&no&$\{4\}$	\\
``leisure''&0.691 & 0.047 &no &			$\{4\}$				&  0.513 &0.086&no&$\{4\}$	\\
\hline
\end{tabular}}%
  \vspace*{-2mm}
\caption{Adaptive testing for linear/quadratic specification of Engel curves.}
\label{tab:test:data:parametric}
\end{table}
Table~\ref{tab:test:data} reports our adaptive test for weak monotonicity
of Engel curves. It shows that our test rejects increasing Engel curves
for ``food in,'' ``fuel,'' and ``travel'' categories, and also rejects
decreasing Engel curve for ``leisure'' at the $0.05$ nominal level. Previously,
to decide whether the Engel curves are strictly monotonic, estimated derivatives
of these functions together with their non-adaptive 95\% uniform confidence
bands were also provided in \citet [Figure~4]{ChenChristensen2017}. Those
uniform confidence bands are constructed using sieve score bootstrapped
critical values with non-data-driven choice of sieve dimension $J$, and
contain zero almost over the whole support of household expenditure. It
is interesting to see that our adaptive test is more informative about
monotonicity in certain directions that are not obvious from their 95\%
uniform confidence bands. Table~\ref{tab:test:data:convex} reports our
adaptive test for convexity and concavity of these Engel curves. At the
5\% nominal level, we reject convexity of travel goods and reject concavity
of Engel curves for fuel consumption. These are in line with
\citet [Figure~4]{ChenChristensen2017}, but again, statistically significant
statements about the convexity/concavity of Engel curves are only possible
using our adaptive testing procedure. Finally, Table~\ref{tab:test:data:parametric} presents our adaptive tests for linear or
quadratic specifications (against nonparametric alternatives) of the Engel
curves for the four goods. At the nominal level $\alpha =0.05$, this table
shows that our adaptive test fails to reject a quadratic form for all the
goods, while it rejects a linear Engel curve for fuel and travel goods.
Our results are consistent with the conclusions obtained by
\citet{banks1997} using Student $t$-test for linear against quadratic forms
of Engel curves.


\begin{appendix}
\section{Proofs of Theorems \protect\ref{thm:minimax:test:lower} and \protect\ref{thm:test:upper} in Section~\protect\ref{sec:minimax}}
\label{sec:appendix}

\begin{proof}[{Proof of Theorem~\ref{thm:minimax:test:lower}.}]
We first derive the lower bound for testing a simple null hypothesis
$\mathcal H_{0} = \{h_{0}\}$. Let $\mathrm{P}_{\theta}$ denote the joint
distribution of $(Y,X,W)$ satisfying $Y= T h_{\theta}+V$ with known operator
$T$ and $V|W\sim \mathcal N(0,\sigma ^{2})$, the so-called reduced-form
nonparametric indirection regression (NPIR) model as in
\citet{ChenReiss2011} with fixed variance $\sigma ^{2}>0$. We may assume that $\{\lambda_j, \widetilde\psi_j, \widetilde b_j\}$ forms a singular value decomposition of the compact operator $T$. To establish
the lower bound, a consideration of the NPIR model is sufficient, as we
show in the first inequality of \eqref{ineq:lower} below.

By \citet{reiss2008}, the reduced-form NPIR is asymptotic equivalent to
the Gaussian white noise model
$dY(w)= Th_{\theta}(w)\,dw + \frac{\sigma}{\sqrt n}\,dB(w)$ where $dB$ is
a Gaussian white noise in
$L_{\mathcal W}^{2}:=\{\phi :\int _{\mathcal W} [\phi (w)]^{2}\,dw<
\infty \} $ and, in particular, to the Gaussian sequence model
$y_{k}= \int Th_{\theta}(w) \widetilde b_{k}(w)\,dw+
\frac{\sigma}{\sqrt n} \xi _{k}$,
$y_{k}:= \int \widetilde b_{k}(w)\,dY(w)$ and
$\xi _{k}\sim \mathcal N(0,1)$. Without loss of generality, we let
$h_{0}=0$ and $\mathcal H_{0} = \{0\}$. We introduce
$\theta =(\theta _{j})_{j\geq 1}$ with $\theta _{j}\in \{-1,1\}$ and introduce
the test function
\begin{align}
\label{test:fct} h_{\theta}(\cdot )=\frac{\delta _{*}}{ \sqrt{n}}
\sum_{j=1}^{J_{*}} \nu _{j}^{-2}
\theta _{j}\widetilde \psi _{j}(\cdot ) \Biggl(\sum
_{j=1}^{J_{*}} \nu _{j}^{-4}
\Biggr)^{-1/4}, 
\end{align}
for some sufficiently small $\delta _{*}:=\delta _{*}(\alpha )>0$. Here,
$\{\widetilde \psi _{j}\}_{j\geq 1}$ forms an orthonormal basis in
$L^{2}(X)$ and the dimension parameter $J_{*}$ satisfies the inequality
restriction
\begin{align}
\label{lower:bound:J} \frac{1}{ n} \Biggl(\sum_{j=1}^{J_{*}}
\nu _{j}^{-4} j^{4p/d_{x}} \Biggr)^{1/2}\leq
C_{\mathcal H}^{2}. 
\end{align}
Therefore, orthonormality of the basis functions
$\{\widetilde \psi _{j}\}_{j\geq 1}$ in $L^{2}(X)$ together with the Cauchy--Schwarz
inequality implies for any $\theta \in \{\pm 1\}^{J}$ with any
$J\geq J_{*}$:
\begin{equation*}
\sum_{j=1}^{\infty }\langle h_{\theta},
\widetilde\psi _{j}\rangle _{X}^{2}j^{2p/d_{x}}=
\frac{\delta _{*}^{2}}{n}\sum_{j=1}^{J_{*}}
\nu _{j}^{-4} j^{2p/d_{x}} \Biggl(\sum
_{l=1}^{J_{*}} \nu _{l}^{-4}
\Biggr)^{-1/2} \leq \frac{\delta _{*}^{2}}{n} \Biggl(
\sum_{j=1}^{J_{*}} \nu _{j}^{-4}
j^{4p/d_{x}} \Biggr)^{1/2}\leq C_{\mathcal H}^{2}
\end{equation*}
for all $\delta _{*}\in (0,1]$, and thus, we conclude that
$h_{\theta}\in \mathcal H$ by the definition of the Sobolev ellipsoid
$\mathcal H$. For any $\theta \in \{\pm 1\}^{J_{*}}$, we have
\begin{equation}
\label{test:in:alt} \llVert h_{\theta}-\mathcal H_{0} \rrVert
_{L^{2}(X)}= \llVert h_{\theta} \rrVert _{L^{2}(X)} = \frac{\delta
_{*} }{ \sqrt n} \Biggl(\sum_{j=1}^{J_{*}}
\nu _{j}^{-4} \Biggr)^{1/4} =\delta _{*}
r_{n}, 
\end{equation}
and hence, $h_{\theta}\in \mathcal H_{1}(\delta _{*} r_{n})$.

Let $\mathrm{P}^{*}$ denote the probability distribution obtained of the
NPIR model by assigning the uniform distribution on
$\{\pm 1\}^{J_{*}}$ and $\mathrm{P}_{0}$ the probability distribution when
$h_{\theta}=0$.
From the proof of \citet [Lemma~3]{collier2017}, we infer
the following reduction to testing between two probability measures under
a simple null hypothesis. Using that
$h_{\theta}\in \mathcal H_{1}(\delta _{*} r_{n})$ for all
$\theta \in \{\pm 1\}^{J_{*}}$, we thus evaluate
\begin{align}
&\inf_{\mathtt{T}_{n}} \Bigl\{\sup_{h\in \mathcal H_{0}}
\mathrm{P}_{h}( \mathtt{T}_{n}=1)+ \sup_{h\in \mathcal H_{1}(\delta _{*} r_{n})}
\mathrm{P}_{h}(\mathtt{T}_{n}=0) \Bigr\}\nonumber
\\
&\quad \geq \inf
_{\mathtt{T}_{n}} \Bigl\{\mathrm{P}_{0}(\mathtt{T}_{n}=1)+
\sup_{\theta \in \{\pm 1\}^{J_{*}}} \mathrm{P}_{\theta} (\mathtt{T}_{n}=0)
\Bigr\}
\nonumber
\\
&\quad \geq \inf_{\mathtt{T}_{n}} \bigl\{\mathrm{P}_{0}(
\mathtt{T}_{n}=1)+ \mathrm{P}^{*} (\mathtt{T}_{n}=0)
\bigr\}\nonumber
\\
&\quad \geq 1-\mathcal V\bigl(\mathrm{P}^{*}, \mathrm{P}_{0}
\bigr)\geq 1-\sqrt{\chi ^{2}\bigl(\mathrm{P}^{*},
\mathrm{P}_{0}\bigr)} \label{ineq:lower} , 
\end{align}
where $\mathcal V (\cdot , \cdot )$ denotes the total variation distance
and $\chi ^{2}(\cdot , \cdot )$ denotes the $\chi ^{2}$ divergence.

Since $T\widetilde \psi _{k}=\lambda_k \widetilde b _{k}$, we have
$y_{k}= \gamma _{k} \theta _{k}+\frac{\sigma}{\sqrt n} \xi _{k}$, where
$\gamma _{k}:=\delta _{*} n ^{-1/2}\lambda_k \nu _{k}^{-2} (\sum_{j=1}^{J_{*}} \nu _{j}^{-4} )^{-1/4}$. Consequently,
by the derivation of equation (2.106) in
\citet{tsybakov2009introduction}, the $\chi ^{2}$ divergence between
$\mathrm{P}^{*}$ and $\mathrm{P}_{0}$ satisfies
\begin{align*}
\chi ^{2}\bigl(\mathrm{P}^{*}, \mathrm{P}_{0}
\bigr)&=
\int \biggl( \frac{d\mathrm{P}^{*}}{ d\mathrm{P}_{0}}
\biggr)^{2}\,d\mathrm{P}_{0}-1= \prod
_{k=1}^{J_{*}} \frac{\exp \bigl(-n\gamma
_{k}^{2}/\sigma ^{2}\bigr)+\exp \bigl(n\gamma
_{k}^{2}/\sigma ^{2}\bigr)}{2}-1.
\end{align*}
By \citet [Section~2.7.5]{tsybakov2009introduction}, there exists a constant
$c_{1}>0$ such that
$\exp (-n\gamma _{k}^{2}/\sigma ^{2})+\exp (n\gamma _{k}^{2}/\sigma ^{2})
\leq 2\exp   (c_{1}n^{2}\gamma _{k}^{4}  )$.
Assumptions~\ref{A:LB}(iii), (iv) imply for a finite constant $c>0$ that $\lambda_j^2\leq c \nu _{j}^{2}$ for all $j$.
Consequently,
$\sum_{k= 1}^{J_{*}} \gamma_k^{4} \leq c^2\delta _{*}^{4} n^{-2}$, and we obtain:
\begin{align*}
\chi ^{2}\bigl(\mathrm{P}^{*}, \mathrm{P}_{0}
\bigr)\leq \exp \Biggl(c_{1} n^{2} \sum
_{k=1}^{J_{*}}\gamma _{k}^{4}
\Biggr) -1\leq \exp \bigl(\delta _{*}^{4} c_{1}
c^2 \bigr) -1\leq 1-\alpha ,
\end{align*}
for $\delta _{*}=\delta _{*}(\alpha )>0$ sufficiently small. Consequently,
the result follows by making use of inequality \eqref{ineq:lower}.

In the regularly varying case ($ \nu _{J_{*}}^{-4}J_{*} \lesssim
\sum_{j=1}^{J_{*}} \nu _{j}^{-4}$) for
$J_{*}\sim \max   \{J: n^{-1/2} J^{1/4} \nu _{J}^{-1}\leq J^{-p/d_{x}}
  \}$, we note that inequality \eqref{lower:bound:J} holds within a
constant and we have that
$r_{n}= n^{-1/2} (\sum_{j=1}^{J_{*}} \nu _{j}^{-4} )^{1/4}
\sim n^{-1/2}J_{*}^{1/4}\nu _{J_{*}}^{-1}\sim J_{*}^{-p/d_{x}}$. Consider
the mildly ill-posed case ($\nu _{j}=j^{-a/d_{x}}$). The choice of
$J_{*}\sim n^{2d_{x}/(4(p+a)+d_{x})}$ ensures constraint
\eqref{lower:bound:J} within a constant and implies
$r_{n} \sim n^{-2p/(4(p+a)+d_{x})}$. Consider the severely ill-posed case
($\nu _{j}=\exp (-j^{a/d_{x}}/2)$). The choice of
$J_{*}=  (c\log n  )^{d_{x}/a}$ satisfies
\eqref{lower:bound:J} within a constant and implies
$r_{n} \sim (\log n)^{-p/a}$, which completes the proof for the simple
null $\mathcal H_{0} =\{0\}$ case.

We now turn to the lower bound for testing a closed convex composite null
hypothesis. Consider the test function given in equation
\eqref{test:fct}. Since $\mathcal H_{0}$ is a nonempty, closed and convex,
strict subset of $\mathcal H$, there exists a unique element
$\Pi _{\mathcal H_0}h \in \mathcal H_{0}$ (by the Hilbert projection theorem)
such that
\begin{align}
\label{ineq:LB:CT} \llVert h_{\theta}-\mathcal H_{0} \rrVert
_{L^{2}(X)}= \llVert h_{\theta}-\Pi _{
\mathcal H_{0}} h_{\theta}
\rrVert _{L^{2}(X)} \geq \llVert h_{\theta _{*}}-\Pi _{
\mathcal H_{0}}
h_{\theta _{*}} \rrVert _{L^{2}(X)} 
\end{align}
for some $\theta _{*}\in \{\pm 1\}^{J_{*}}$. As above, we may assume
$\Pi _{\mathcal H_{0}} h_{\theta _{*}}=0$ without loss of generality (otherwise,
consider $\widetilde Y=Y-T\Pi _{\mathcal H_{0}} h_{\theta _{*}}$ in the
reduced-form NPIR model). Given the inequality \eqref{ineq:LB:CT}, we thus
conclude
$\|h_{\theta}-\mathcal H_{0}\|_{L^{2}(X)} \geq \|h_{\theta _{*}}\|_{L^{2}(X)}
\geq \delta _{*} r_{n}$, by following inequality \eqref{test:in:alt}. Therefore,
we may proceed with the proof of the lower bound as for the simple null
case.
\end{proof}

\begin{lemma}%
\label{lemma:normal}
Let Assumptions \ref{A:LB}(i)--(iii) and \ref{A:basis} hold. Then, under
the simple hypothesis $\mathcal H_{0}=\{h_{0}\}$ for a known function
$h_{0}$, we have
$\mathrm{P}_{h_{0}}  (n\widehat{D}_{J}(h_{0})/\widehat{V}_{J}>
\eta _{J}(\alpha )  )=\alpha +o(1)$.
\end{lemma}
A proof of Lemma~\ref{lemma:normal} is given in Supplemental Appendix~E.

\begin{proof}[{Proof of Theorem~\ref{thm:test:upper}.}]
First, by Lemma~\ref{lemma:normal}, we control the type I error of the
test $\mathtt{T}_{n,J}$ given in \eqref{def:test:J}:
$\limsup_{n\to \infty}\mathrm{P}_{h_{0}}(\mathtt{T}_{n,J} =1)=
\limsup_{n\to \infty}\mathrm{P}_{h_{0}}  (n\widehat{D}_{J}(h_{0})>
\eta _{J}(\alpha )\widehat{V}_{J}  )\leq \alpha $. To control the
type II error, we have uniformly for
$h\in \mathcal H_{1}(\delta ^{\circ }r_{n,J})$,
\begin{align*}
\mathrm{P}_{h} (\mathtt{T}_{n,J}=0 )&\leq
\mathrm{P}_{h} \bigl(n\widehat{D}_{J}(h_{0}) \leq
\eta _{J}(\alpha )\widehat{V}_{J}, \widehat{V}_{J}
\leq (1+c_{0})V_{J} \bigr)+\mathrm{P}_{h} \bigl(
\widehat{V}_{J}> (1+c_{0}) V_{J} \bigr)
\\
&\leq \mathrm{P}_{h} \bigl(n\widehat{D}_{J}(h_{0})
\leq (1+c_{0}) \eta _{J}(\alpha )V_{J}
\bigr)+o(1)= o(1),
\end{align*}
where the second equation is due to Lemma~\ref{lem:est:var}(i) and the
last equation is due to Lemma~\ref{lemma:type2}(i) in Appendix~\ref{appendix:adapt:ST}. We thus obtain Result \eqref{thm:test:upper:bound}. Note that $\nu _{J}^{-2}\geq c s_{J}^{-2}$ by
Assumption~\ref{A:s_J}, with the definition of $J_{*0}$, the final rate
results for the mildly ill-posed case ($\nu _{j}= j^{-a/d_{x}}$) and for
the severely ill-posed case ($\nu _{j}= \exp (-j^{a/d_{x}}/2)$) follow
from $r_{n,J_{*0}}= (J_{*0})^{-p/d_{x}}$ directly.
\end{proof}

\section{Proofs of Theorems \protect\ref{thm:adapt:test} and \protect\ref{thm:adapt:est:test} in Section~\protect\ref{sec:adapt:test}}
\label{appendix:adapt:ST}

We first introduce additional notation. For a $r\times c$ matrix $M$ with
$r \leq c$ and full row rank $r$, we let $M_{l}^{-}$ denote its left pseudoinverse,
namely $(M'M)^{-}M'$. The $J\times K$ matrices $\widehat A$ and $A$ defined
in Section~\ref{subsec:sst} can be written as
$\widehat A=(\widehat G_{b}^{-1/2}\widehat S \widehat G^{-1/2})^{-}_{l}
\widehat G_{b}^{-1/2}$ and
$A=(G_{b}^{-1/2}S G^{-1/2})^{-}_{l}G_{b}^{-1/2}$. Then
$ \| AG_{b}^{1/2} \|= \|  (G_{b}^{-1/2}SG^{-1/2} )^{-}_{l}
 \|=s_{J}^{-1}$ with $s_{J}=s_{\min}(G_{b}^{-1/2}S G^{-1/2})>0$. Let
$\widetilde b^{K}(\cdot )=G_{b}^{-1/2}b^{K}(\cdot )$ and
$\widetilde \psi ^{J}(\cdot )=G^{-1/2}\psi ^{J}(\cdot )$. For any
$h\in L^{2}(X)$, its population 2SLS projection onto the sieve space
$\Psi _{J}$ is
\begin{equation}
\label{def:QJ} Q_{J} h(\cdot )=\widetilde \psi ^{J}(\cdot
)'A\Evtex \bigl[b^{K}(W)h(X)\bigr] = \widetilde \psi
^{J}(\cdot )'\bigl(G_{b}^{-1/2}S
G^{-1/2}\bigr)_{l}^{-}\Evtex \bigl[ \widetilde
b^{K}(W) h(X)\bigr] . 
\end{equation}
We next present Theorem~\ref{thm:rate:quad:fctl} and eight lemmas (Lemma~\ref{lemma:bias}--Lemma~\ref{lemma:J_hat}) that are used to establish our
adaptive testing upper bounds. The proofs of these results are postponed
to Supplemental Appendix~E. Below, we shorten ``with
probability $\mathrm{P}_{h}$ approaching 1 uniformly for
$h\in \mathcal H$'' to ``wpa1 uniformly for $h\in \mathcal H$.''
\begin{theorem}%
\label{thm:rate:quad:fctl}
Let Assumptions \ref{A:LB}(ii)--(iii) and \ref{A:basis} hold. Then, wpa1
uniformly for $h\in \mathcal H$:
\begin{equation*}
\widehat{D}_{J}(\Pi _{\mathcal H_0}h)- \bigl\llVert Q_{J}(h-
\Pi _{\mathcal H_0}h) \bigr\rrVert _{L^{2}(X)}^{2} \lesssim
n^{-1}s_{J}^{-2} \sqrt{J}+n^{-1/2}s_{J}^{-1}
\bigl( \llVert h-\Pi _{\mathcal H_0}h \rrVert _{L^{2}(X)}+J^{-p/d_{x}}
\bigr).
\end{equation*}
\end{theorem}
Theorem~\ref{thm:rate:quad:fctl} provides an upper bound for quadratic
distance estimation, which is essential for our upper bound on the minimax
rate of testing in $L^{2}$.

\begin{lemma}%
\label{lemma:bias}
Let Assumption~\ref{A:basis}(iv) hold. Then we have uniformly for
$h\in \mathcal H$: (i)
$\|Q_{J}(h-\Pi _{\mathcal H_0}h)\|_{L^{2}(X)}=\|h-\Pi _{\mathcal H_0}h
\|_{L^{2}(X)}+O(J^{-p/d_{x}})$ and (ii)
$\|Q_{J}h-h\|_{L^{2}(X)}=O(J^{-p/d_{x}})$.
\end{lemma}

\begin{lemma}%
\label{upper:bound:v_n}
Let Assumption~\ref{A:basis}(i) hold. Then:
$V_{J}\leq \overline\sigma ^{2}s_{J}^{-2}\sqrt{J}$ uniformly for
$h\in \mathcal H$ and $J\in \mathcal I_{n}$.
\end{lemma}

\begin{lemma}%
\label{lower:bound:v_n}
Let Assumption~\ref{A:LB}(i) hold. Then:
$J\leq \sum_{j=1}^{J}s_{j}^{-4}\leq \underline\sigma ^{-4} V_{J}^{2}$
uniformly for $h\in \mathcal H$ and $J\in \mathcal I_{n}$.
\end{lemma}

\begin{lemma}%
\label{lem:est:var}
Let Assumption~\ref{A:LB}(i)--(iii) be satisfied.
\begin{enumerate}[(ii)]

\item[(i)] If, in addition, Assumption~\ref{A:basis} holds, then for any
$c>0$, we have
\begin{align*}
\sup_{h\in \mathcal H}\mathrm{P}_{h} \bigl( \llvert 1-
\widehat{V}_{J}/V_{J} \rrvert >c \bigr)=o(1).
\end{align*}

\item[(ii)] If, in addition, Assumptions \ref{A:basis}(i) and
\ref{A:adapt:test}(i) hold, then for any $c>0$, we have
\begin{align*}
\sup_{h\in \mathcal H}\mathrm{P}_{h} \Bigl(\max
_{J\in \mathcal I_{n}} \llvert 1-\widehat{V}_{J}/V_{J}
\rrvert >c \Bigr)=o(1).
\end{align*}
\end{enumerate}%
\end{lemma}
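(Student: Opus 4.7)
Introduce the scaled sample matrices $\widehat G=n^{-1}\Psi'\Omega_\mu\Psi$, $\widehat G_b=n^{-1}B'B$, $\widehat S=n^{-1}B'\Psi$, and the operators $\widehat A=\widehat G^{1/2}[\widehat S'\widehat G_b^{-1}\widehat S]^{-1}\widehat S'\widehat G_b^{-1}$, $A=G^{1/2}[S'G_b^{-1}S]^{-1}S'G_b^{-1}$, together with $\widehat\Sigma=n^{-1}\sum_i(Y_i-\widehat h_J(X_i))^2 b^K(W_i)b^K(W_i)'$ and $\Sigma=\E_h[(Y-h(X))^2b^K(W)b^K(W)']$. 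Then $\widehat{\textsl v}_J=\|\widehat A\widehat\Sigma\widehat A'\|_F$ and $\textsl v_J=\|A\Sigma A'\|_F$, so by the reverse triangle inequality it suffices to prove $\|\widehat A\widehat\Sigma\widehat A'-A\Sigma A'\|_F=o_p(\textsl v_J)$, uniformly in $h\in\mathcal H$ for part (i) and additionally uniformly in $J\in\mathcal I_n$ for part (ii). The lower bound $\textsl v_J\geq\underline\sigma^2\sqrt{\sum_{j\leq J}s_j^{-4}}$ from Lemma \ref{lower:bound:v_n} will be the benchmark that the error has to beat.

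The plan is to decompose $\widehat A\widehat\Sigma\widehat A'-A\Sigma A'=(\widehat A-A)\widehat\Sigma\widehat A'+A(\widehat\Sigma-\Sigma)\widehat A'+A\Sigma(\widehat A-A)'$ and bound each piece separately. For the operator factor $\widehat A-A$, I would invoke \cite[Lemma F.10(a)]{ChenChristensen2017} (as already used in the proofs of Theorems \ref{thm:adapt:test}--\ref{thm:adapt:est:test}, see inequality \eqref{unif:matrix:bound}) to obtain $\|(\widehat A-A)G_b^{1/2}\|=O_p(s_J^{-2}\zeta_J\sqrt{(\log J)/n})$ and $\|\widehat A G_b^{1/2}\|=s_J^{-1}(1+o_p(1))$. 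For the variance factor, I would write $\widehat\Sigma-\Sigma=(\Sigma^*-\Sigma)+(\widehat\Sigma-\Sigma^*)$ where $\Sigma^*=n^{-1}\sum_i U_i^2b^K(W_i)b^K(W_i)'$ is the oracle analog using the true residuals $U_i=Y_i-h(X_i)$. The rescaled oracle error $\|G_b^{-1/2}(\Sigma^*-\Sigma)G_b^{-1/2}\|=O_p(\zeta_J^2\sqrt{(\log J)/n})$ follows from a matrix Bernstein / Rudelson-type bound using the conditional variance bound in Assumption \ref{A:basis}(i). For $\widehat\Sigma-\Sigma^*$, I would exploit the identity $(Y_i-\widehat h_J(X_i))^2-U_i^2=(\widehat h_J-h)^2(X_i)-2U_i(\widehat h_J-h)(X_i)$ and control the cross and quadratic terms using the sup-norm rate $\|\widehat h_J-h\|_\infty\leq \zeta_J\|\widehat h_J-\Pi_J h\|_\mu+\|\Pi_J h-h\|_\infty=O_p(\zeta_J(n^{-1/2}s_J^{-1}\sqrt J+J^{-p/d_x}))$, which holds uniformly in $h\in\mathcal H$ by the sieve NPIV results behind Remark \ref{rem:relation:est} combined with Assumption \ref{A:basis}(v) (and the smoothness floor in Assumption \ref{A:adapt:test}(iii) for part (ii)).

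To assemble the three pieces in Frobenius norm, I would use $\|MNP\|_F\leq\|M\|\,\|N\|_F\,\|P\|$ and, crucially, compare the scaled middle factor $\|G_b^{-1/2}\widehat\Sigma G_b^{-1/2}\|_F$ to $\|G_b^{-1/2}\Sigma G_b^{-1/2}\|_F$ by writing the difference through the decomposition above; this keeps the Frobenius norm of the middle factor tied to $\textsl v_J$ through the identity $A_0\widetilde\Sigma A_0'=M_J$ with $A_0=AG_b^{1/2}$ and $\widetilde\Sigma=G_b^{-1/2}\Sigma G_b^{-1/2}$, together with the coercivity $\widetilde\Sigma\geq\underline\sigma^2 \E[\widetilde b^K\widetilde b^{K\prime}]=\underline\sigma^2 I$ that is already used in Step 1 of the proof of Theorem \ref{thm:adapt:est:test}. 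Each resulting bound is of the form $o_p(1)\cdot\textsl v_J$ under Assumption \ref{A:basis}(ii)--(iii).

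For part (ii), everything is then promoted to a uniform statement by a union bound over $J\in\mathcal I_n$; the cardinality of $\mathcal I_n$ is only logarithmic in $n$ by construction \eqref{det:index:set}, and the concentration bounds used above all come with exponential tails that absorb this factor, while the rate conditions of Assumption \ref{A:adapt:test}(i) are designed to hold uniformly over $\mathcal I_n$. The main technical obstacle will be the step that replaces $\widehat h_J$ by $h$ inside $\widehat\Sigma$: obtaining a sup-norm estimation rate that is both sharp enough to be absorbed by $\textsl v_J$ and valid uniformly in $h\in\mathcal H$ and $J\in\mathcal I_n$ requires a careful combination of the sieve NPIV $L^2$-rate with the basis-dependent Lebesgue bound $\|\Pi_J h-h\|_\infty=O(J^{-(p-d_x/2)/d_x})$, which is where Assumption \ref{A:adapt:test}(iii) plays its role.
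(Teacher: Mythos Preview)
Your overall strategy matches the paper's: same reverse triangle inequality, same three-term telescoping of $\widehat A\widehat\Sigma\widehat A'-A\Sigma A'$, same reliance on \cite[Lemma~F.10]{ChenChristensen2017} for the $\widehat A-A$ perturbation, the same splitting of $\widehat\Sigma-\Sigma$ into the $(\widehat h_J-h)^2$ and cross terms, and the same benchmark $\textsl v_J\ge\underline\sigma^{2}\big(\sum_{j\le J}s_j^{-4}\big)^{1/2}$ from Lemma~\ref{lower:bound:v_n}. (Your explicit oracle piece $\Sigma^*-\Sigma$ is actually more complete than what the paper writes down.)

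There is, however, one concrete error that would make the argument fail at the boundary of $\mathcal I_n$. You propose $\|MNP\|_F\le\|M\|\,\|N\|_F\,\|P\|$ with the Frobenius norm on the \emph{middle} factor $N=G_b^{-1/2}\widehat\Sigma G_b^{-1/2}$. But $\|G_b^{-1/2}\Sigma G_b^{-1/2}\|_F$ is of order $\overline\sigma^{2}\sqrt{K}$, not $\textsl v_J$; the identity $\textsl v_J=\|A_0\widetilde\Sigma A_0'\|_F$ does \emph{not} let you transfer a Frobenius bound from $\widetilde\Sigma$ to $A_0\widetilde\Sigma A_0'$---the $A_0$'s are doing all the work. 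With your placement, the $(\widehat A-A)$ pieces are bounded by $s_J^{-3}\zeta_J\sqrt{K(\log J)/n}$, and for $\zeta_J\sim\sqrt J$, $K\sim J$ and $J$ near $\overline J$ (or in the severely ill-posed regime where $\big(\sum_j s_j^{-4}\big)^{1/2}\sim s_J^{-2}$), this is only $O_p(\textsl v_J)$, not $o_p(\textsl v_J)$.

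The paper's fix is to put the Frobenius norm on the $A$-side and the \emph{operator} norm on $\widetilde\Sigma$. For the dominant piece $T_2=\|A(\widehat\Sigma-\Sigma)A'\|_F$ one uses the elementary inequality $\|A_0\Delta A_0'\|_F\le\|\Delta\|\,\|A_0A_0'\|_F$ (diagonalize $A_0$ via its SVD), which gives exactly $\|G_b^{-1/2}(\widehat\Sigma-\Sigma)G_b^{-1/2}\|\cdot\big(\sum_j s_j^{-4}\big)^{1/2}\le \underline\sigma^{-2}\textsl v_J\cdot o_p(1)$. For the cross pieces, the analogous bound $\|B_1\widetilde\Sigma B_2'\|_F\le\|B_1\|\,\|\widetilde\Sigma\|\,\|B_2\|_F$ with $B_2=AG_b^{1/2}$ (so $\|B_2\|_F=\big(\sum_j s_j^{-2}\big)^{1/2}$) combined with Assumption~\ref{A:basis}(ii) yields $o_p(\textsl v_J)$ in both ill-posed regimes. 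Once you reroute the Frobenius norm this way, the rest of your plan---including the sup-norm control of $\widehat h_J-h$ and the union bound over $\mathcal I_n$---goes through essentially as in the paper.
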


\begin{lemma}%
\label{lemma:eta:bounds}
For all $\alpha \in (0,1)$ and $J\in \widehat{\mathcal I}_{n}$, we have
for $n$ sufficiently large and almost surely that
\begin{equation*}
\frac{\sqrt{\log \log (J)-\log (\alpha )}}{4}\leq \widehat \eta _{J}(
\alpha )\leq 4\sqrt{\log \log (n)-\log (\alpha )}.
\end{equation*}
\end{lemma}

For any $h\in \mathcal{H}$, let
$U_{i}^{J}:=Ab^{K}(W_{i})(Y_{i}-\Pi _{\mathcal H_0}h(X_{i}))$ with
$U_{ij}$ as its $j$th entry, $1\leq j\leq J$. Then
$Q_{J}(h-\Pi _{\mathcal H_0}h)=\Evtex _{h}[U^{J}]'\widetilde \psi ^{J}$ and
$\|\Evtex _{h}[U^{J}]\|^{2} =\|Q_{J}(h-\Pi _{\mathcal H_0}h)\|_{L^{2}(X)}^{2}$
for any NPIV function $h\in \mathcal{H}$. Let
$Z_{i}=(Y_{i},X_{i}',W_{i}')'$. For any set $D_{i}$, we define
\begin{align*}
R(Z_{i}, Z_{i'},D_{i}):=\bigl(U_{i}^{J}{
\mathbbm{1}}_{D_{i}}\bigr)'\bigl(U_{i'}^{J}{
\mathbbm{1}}_{D_{i'}}\bigr) - \Evtex _{h}\bigl(U_{i}^{J}{
\mathbbm{1}}_{D_{i}}\bigr)'\Evtex _{h}
\bigl(U_{i}^{J}{\mathbbm{1}}_{D_{i}}\bigr) ,
\end{align*}
$R_{1}(Z_{i}, Z_{i'}):=R(Z_{i}, Z_{i'},M_{i})$ and
$R_{2}(Z_{i}, Z_{i'}):=R(Z_{i}, Z_{i'},M_{i}^{c})$, where
$M_{i}=\{|Y_{i}-\Pi _{\mathcal H_0}h(X_{i})|\leq M_{n}\}$ and
$M_{n}=\sqrt{n} \zeta _{\overline J}^{-1} (\log \log \overline J)^{-3/4}$.
Let
\begin{align*}
\Lambda _{1}&:= \biggl(\frac{n(n-1)}{2}\Evtex \bigl[R_{1}^{2}(Z_{1},Z_{2})
\bigr] \biggr)^{1/2},
\\
\Lambda _{2}&:=n\sup_{ \llVert \nu  \rrVert _{L^{2}(Z)}\leq 1,
 \llVert \kappa  \rrVert _{L^{2}(Z)}
\leq 1}
\Evtex \bigl[R_{1}(Z_{1},Z_{2})\nu
(Z_{1})\kappa (Z_{2})\bigr],
\\
\Lambda _{3}&:= \Bigl(n\sup_{z} \bigl\llvert
\Evtex \bigl[R_{1}^{2}(Z_{1},z)\bigr] \bigr\rrvert
\Bigr)^{1/2}, \quad \text{and}\quad  \Lambda _{4}:=\sup
_{z_{1},z_{2}} \bigl\llvert R_{1}(z_{1},z_{2})
\bigr\rrvert .
\end{align*}
\begin{lemma}%
\label{Lemma:houdre}
\begin{enumerate}[(ii)]
\item[(i)] There exists a generic constant $C_{R_{1}}>0$, such that for all
$u>0$ and $n\in \mathbb N$, we have
\begin{align*}
\mathrm{P}_{h} \biggl( \biggl\llvert \sum
_{1\leq i< i'\leq n} R_{1}(Z_{i}, Z_{i'})
\biggr\rrvert \geq C_{R_{1}} \bigl(\Lambda _{1}\sqrt{u} +
\Lambda _{2} u+\Lambda _{3} u^{3/2} +\Lambda
_{4} u^{2} \bigr) \biggr)\leq 6\exp (-u).
\end{align*}
\item[(ii)] Let Assumption~\ref{A:basis}(i) hold. Then, for the kernel
$R_{1}$, the following hold under $\mathcal H_{0}$:
\begin{eqnarray*}
\Lambda _{1} &\leq& \sqrt{n(n-1)/2} V_{J},\qquad  \Lambda
_{2} \leq \overline\sigma ^{2} n s_{J}^{-2}
,
\\
\Lambda _{3} &\leq& \overline\sigma ^{2}\sqrt{n}
M_{n} \zeta _{b,K} s_{J}^{-2}, \qquad \Lambda
_{4} \leq M_{n}^{2} \zeta _{b,K}^{2}
s_{J}^{-2} .
\end{eqnarray*}
\end{enumerate}%
\end{lemma}

\begin{lemma}%
\label{lemma:type2}\text{}
\begin{enumerate}[(ii)]
\item[(i)] Under the conditions of Theorem~\ref{thm:test:upper}, we have for some
constant $c_{0}>0$ that
$\mathrm{P}_{h} ( n\widehat{D}_{J}(h_{0})\leq (1+c_{0})\eta _{J}(
\alpha )V_{J} )=o(1)$ uniformly for
$h\in \mathcal H_{1}(\delta ^{\circ }r_{n,J})$.

\item[(ii)] Under the conditions of Theorem~\ref{thm:adapt:test}, we have
$\mathrm{P}_{h} (n\widehat{D}_{J^{*}}(h_{0})\leq 2c_{1}\sqrt{\log
\log n}  V_{J^{*}} )=o(1)$ uniformly for
$h\in \mathcal H_{1}(\delta ^{\circ}\text{\textsf r}_{n})$, where $J^{*}$ and
$c_{1}$ are given in the proof of Theorem~\ref{thm:adapt:test}.
\end{enumerate}\end{lemma}

\begin{lemma}%
\label{lemma:J_hat}
Let Assumption~\ref{A:adapt:test}(i)(iii) be satisfied. Then
$\widehat J_{\max}$ given in \eqref{def:J_max} satisfies%
\begin{enumerate}[(ii)]
\item[(i)]
$  \sup_{h\in \mathcal H}\mathrm{P}_{h}  (\widehat J_{\max}>
\overline J  )=o(1)$; and

\item[(ii)]
$  \sup_{h\in \mathcal H}\mathrm{P}_{h}  (2J^\circ >
\widehat J_{\max}  )=o(1)$ under Assumption \ref{A:s_J}.
\end{enumerate}\end{lemma}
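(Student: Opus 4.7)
Both parts reduce, after one uniform concentration step, to deterministic comparisons of the threshold functions $\zeta_J^2\sqrt{(\log J)/n}$ and $s_J$ at appropriate scales.

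\textbf{Step 1 (Uniform concentration of $\widehat s_J$).} I would first establish
\[
\max_{J\in\mathcal I_n}\Bigl|\widehat s_J/s_J - 1\Bigr| = o_p(1)\quad\text{uniformly in }h\in\mathcal H.
\]
Weyl's inequality bounds $|\widehat s_J - s_J|$ by $\|\widehat G_b^{-1/2}\widehat S\widehat G^{-1/2} - G_b^{-1/2} S G^{-1/2}\|$, and a matrix Bernstein argument (as used in the perturbation bounds of \cite{ChenChristensen2017}, Lemma F.10) controls the latter by $O_p(\zeta_J\sqrt{(\log J)/n})$. Assumption \ref{A:adapt:test}(i) gives $\zeta_J^2\sqrt{(\log J)/n}\lesssim s_J$ uniformly on $\mathcal I_n$, so the ratio bound reads $O_p(1/\zeta_J)$, which is $o_p(1)$ after a union bound over the $O(\log n)$ elements of $\mathcal I_n$.

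\textbf{Step 2 (Part (i)).} Take $\overline c$ large enough that $1.5\,\overline c \ge 2$. By the definition $\overline J = \sup\{J:\zeta_J^2\sqrt{(\log J)/n}\le \overline c\, s_J\}$, just beyond $\overline J$ we have $\zeta_J^2\sqrt{(\log J)/n} > \overline c\, s_J$. On the Step 1 event, $\widehat s_J \le 2 s_J \le 1.5\,\overline c\, s_J < 1.5\,\zeta_J^2\sqrt{(\log J)/n}$, so the RES stopping rule triggers at or before $\overline J$; hence $\widehat J_{\max}\le\overline J$ with probability approaching one.

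\textbf{Step 3 (Part (ii)).} On the Step 1 event, $\widehat s_{2J_0}\ge s_{2J_0}/2$, and it suffices to verify the deterministic inequality
\[
3\,\zeta_{2J_0}^2\sqrt{(\log 2J_0)/n} < s_{2J_0} \qquad\text{for }n\text{ sufficiently large.}
\]
From the definition of $J_0$ one has $s_{J_0}^2 \ge \sqrt{\log\log n}\,J_0^{1/2+2p/d_x}/n$. In the mildly ill-posed case $s_{2J_0}\asymp s_{J_0}$, and the inequality reduces to $\zeta_{2J_0}^4(\log J_0)/n\ll\sqrt{\log\log n}\,J_0^{1/2+2p/d_x}/n$, equivalently $J_0^{2p/d_x-3/2}\to\infty$ for $\zeta_J^2\sim J$, which requires $p\ge 3d_x/4$, and $J_0^{2p/d_x-7/2}\to\infty$ for $\zeta_J\sim J$, which requires $p\ge 7d_x/4$. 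Both are precisely Assumption \ref{A:adapt:test}(iii). The severely ill-posed case is handled analogously, exploiting the exponential decay of $s_J$ together with the $\sqrt{\log\log n}$ cushion in the definition of $J_0$.

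\textbf{Main obstacle.} The real technical ingredient is Step 1: obtaining a sharp enough uniform rate for $|\widehat s_J - s_J|$ across the entire random-exponential-scan grid so that it is $o(s_J)$ even at the largest $J\in\mathcal I_n$, where $s_J$ itself is small. Once this is in hand, both (i) and (ii) reduce to elementary deterministic comparisons calibrated to the constants $\overline c$ and the multiplicative factor $1.5$ appearing in the stopping rule, and Assumption \ref{A:adapt:test}(iii) enters (ii) precisely to give the required gap between $2J_0$ and $\overline J$.
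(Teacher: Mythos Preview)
Your Step~1 and Part~(i) are essentially the paper's argument: the paper also invokes Weyl's inequality together with matrix Bernstein bounds (citing \cite{chen2021} and \cite{ChenChristensen2017}) to get $|\widehat s_J-s_J|\le c_0 s_J$ uniformly in $J\in\mathcal I_n$, and then reads off~(i) from the definitions of $\widehat J_{\max}$ and $\overline J$ exactly as you sketch.

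For Part~(ii), however, your reduction has two gaps. First, a logical one: $\widehat J_{\max}$ is the \emph{first} $J>\underline J$ at which the stopping rule fires, so verifying the deterministic inequality only at $J=2J_0$ does not yield $\widehat J_{\max}\ge 2J_0$; you would need the inequality for \emph{every} integer $J<2J_0$ (or an explicit monotonicity argument for $J\mapsto s_J^{-1}\zeta_J^2\sqrt{(\log J)/n}$, and concentration of $\widehat s_J$ over all such $J$, not just the dyadic grid $\mathcal I_n$). Second, and more seriously, your key comparison relies on $s_{2J_0}\asymp s_{J_0}$, which holds in the mildly ill-posed case but \emph{fails} in the severely ill-posed case: there $s_J\sim\exp(-J^{a/d_x}/2)$ and $J_0\sim(\log n)^{d_x/a}$, so $s_{2J_0}/s_{J_0}\to 0$ exponentially and the inequality $3\zeta_{2J_0}^2\sqrt{(\log 2J_0)/n}<s_{2J_0}$ can fail outright. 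The ``$\sqrt{\log\log n}$ cushion'' you invoke is far too small to compensate for this exponential drop.

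The paper's route sidesteps both issues by arguing at $J=\widehat J_{\max}$ rather than at $2J_0$. Since the stopping rule \emph{is} triggered at $\widehat J_{\max}$, one has $\widehat s_{\widehat J_{\max}}\le 1.5\,\zeta_{\widehat J_{\max}}^2\sqrt{(\log\widehat J_{\max})/n}$; on the other hand, if $J_0>\widehat J_{\max}$ then the $J_0$-defining inequality gives $s_{\widehat J_{\max}}^2\ge n^{-1}\sqrt{\log\log n}\,\widehat J_{\max}^{2p/d_x+1/2}$. Chaining these through concentration and the definition of $\overline J$ yields a bound of the form $\overline J\lesssim(\log\overline J/\sqrt{\log\log n})^{1/(2p/d_x-3/2)}$, which is a contradiction in both regimes once Assumption~\ref{A:adapt:test}(iii) is in force. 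This contradiction argument never compares $s_{2J_0}$ to $s_{J_0}$, which is why it works uniformly.
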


\begin{proof}[{Proof of Theorem~\ref{thm:adapt:test}.}]
We prove this result in three steps. First, we bound the type I error of
the test statistic
$\widetilde{\mathtt{T}}_{n} = \mathbbm{1} \{ \max_{J
\in \mathcal I_{n}} (n\widehat{D}_{J}(h_{0})/(\eta _{J}'(\alpha )V_{J})
 )>1 \}$,
$\eta _{J}'(\alpha ):=(1-c_{0})\sqrt{\log \log J-\log \alpha}/4$ for some
constant $0<c_{0}<1$. Second, we bound the type II error of
$\widetilde {\mathtt{T}}_{n}$ where $\eta _{J}'(\alpha )$ is replaced by
$\eta ''(\alpha ):=4(1+c_{0})\sqrt{\log \log n-\log \alpha}$. Third, we
show that the derived bounds in Steps 1 and 2 are sufficient to control
the type I and type II errors of our adaptive test
$\widehat{\mathtt{T}}_{n}$ for a simple null hypothesis
$\mathcal H_{0}=\{h_{0}\}$.

 \textbf{Step~1:} To control the type I error of
$\widetilde {\mathtt{T}}_{n}$, we use a decomposition under
$\mathcal H_{0}=\{h_{0}\}$ via the U-statistic
$\mathcal U_{J,l}=\frac{2}{n(n-1)}\sum_{1\leq i< i'\leq n}R_{l}(Z_{i},
Z_{i'})$ for $l=1,2$ and $U_{i}=Y_{i}-h_{0}(X_{i})$:
\begin{align*}
\mathrm{P}_{h_{0}} (\widetilde {\mathtt{T}}_{n}=1 )\leq{}&
\mathrm{P}_{h_{0}} \Biggl(\max_{J\in \mathcal I_{n}} \Biggl\llvert
\frac{1}{\eta _{J}'(\alpha )V_{J}(n-1)}\sum
_{j=1}^{J}\sum_{i\neq i'}
U_{ij}U_{i'j} \Biggr\rrvert
\\
&{} +\max_{J\in \mathcal I_{n}} \biggl\llvert \frac{1}{\eta
_{J}'(\alpha )V_{J}(n-1)}\sum
_{i\neq i'}U_{i}U_{i'}b^{K}(W_{i})'
\bigl(A'A-\widehat A'\widehat A \bigr)b^{K}(W_{i'})
\biggr\rrvert >1 \Biggr)
\\
\leq{}& I + \mathit{II} + \mathit{III},
\end{align*}%
with $I:=\mathrm{P}_{h_{0}} (\max_{J\in \mathcal I_{n}} |n
\mathcal U_{J,1}/(\eta _{J}'(\alpha )V_{J}) |>\frac{1}{4} )$,
$\mathit{II}:=\mathrm{P}_{h_{0}} (\max_{J\in \mathcal I_{n}} |n
\mathcal U_{J,2}/(\eta _{J}'(\alpha )V_{J}) |> \frac{1}{4} )$,
\begin{align*}
\mathit{III}:=\mathrm{P}_{h_{0}} \biggl(\max_{J\in \mathcal I_{n}}
\biggl\llvert \frac{1}{\eta _{J}'(\alpha
)V_{J}(n-1)}\sum_{i\neq i'}
U_{i}U_{i'}b^{K}(W_{i})'
\bigl(A'A-\widehat A' \widehat A \bigr)b^{K}(W_{i'})
\biggr\rrvert >\frac{1}{2} \biggr) .
\end{align*}
First, we consider term $\mathit{III}$. Using the definition of
$\eta _{J}'(\alpha )$ and the fact that\break
$\sqrt{\log \log J-\log \alpha}>\sqrt{\log \log J}$ for any
$\alpha \in (0,1)$, we obtain $\mathit{III}=o(1)$ by applying Lemma~E.6.%

Next, we consider term $I$. Define
$\Lambda (u,J):=\Lambda _{1}\sqrt{u} +\Lambda _{2} u+\Lambda _{3} u^{3/2}
+\Lambda _{4} u^{2}$. By Lemma~\ref{Lemma:houdre}(ii) with
$M_{n}=\sqrt{n} \zeta _{\overline J}^{-1} (\log \log \overline J)^{-3/4}$,
we have for all $J\in \mathcal I_{n}$:
\begin{align*}
\Lambda (u,J) &\leq nV_{J}\sqrt{u/2} +\overline\sigma
^{2}ns_{J}^{-2} u+ \overline\sigma ^{2}n
s_{J}^{-2} (\log \log \overline J)^{-3/4}u^{3/2}+n
s_{J}^{-2} (\log \log \overline J)^{-3/2}u^{2}
\end{align*}
for $n$ sufficiently large. Replacing in the previous inequality $u$ by
$u_{J}=2\log \log J^{c_{\alpha}}$ where
$c_{\alpha}=\sqrt{1+(\pi /\log 2)^{2}}/\sqrt \alpha $, we obtain for
$n$ sufficiently large:
\begin{align*}
\Lambda (u_{J},J)&\leq nV_{J}\sqrt{\log \log
J^{c_{\alpha}}} + \frac{2\overline\sigma ^{2}n}{s_{J}^{2}}
\log \log J^{c_{\alpha}}+ \frac{\overline\sigma ^{2}n}{
s_{J}^{2}} \bigl(2\log \log J^{c_{\alpha}}
\bigr)^{3/4}+ \frac{4n}{ s_{J}^{2}} \sqrt{\log \log
J^{c_{\alpha}}}
\\
&\leq \frac{5}{4}nV_{J}\sqrt{\log \log J-\log \alpha} +3 \overline
\sigma ^{2}ns_{J}^{-2}(\log \log J-\log \alpha )
\\
&\leq \frac{5}{1-c_{0}}nV_{J} \eta _{J}'(
\alpha ) + \frac{12\overline\sigma ^{2}}{1-c_{0}}ns_{J}^{-2}
\eta _{J}'(\alpha ) \sqrt{\log \log J},
\end{align*}
by the definition of $\eta _{J}'(\alpha )$. Since
$s_{J}^{-2}\sqrt{J}\sim V_{J}$ uniformly in
$h\in \mathcal H$ and $J\in \mathcal I_{n}$ (by Assumption~\ref{A:adapt:test}(ii), Lemmas
\ref{upper:bound:v_n} and \ref{lower:bound:v_n}), we have
$V_{L}/V_{J}\lesssim s_{L}^{-2}s_{J}^{2} \sqrt{ L/J }=o(1)$ for all
$L=o(J)$ uniformly in
$h\in \mathcal H$ and $J\in \mathcal I_{n}$. Thus, for all $J\in \mathcal I_{n}$ and for $n$ sufficiently
large:
$\Lambda (u_{J},L(J))\leq C_{R_{1}}\frac{n-1}{8} V_{J} \eta _{J}'(
\alpha )$ with $L(J)=\exp (1/6) J\underline J^{-1/2}$. By Lemma~\ref{Lemma:houdre}(i) with $u=2\log \log J^{c_{\alpha}}$ and the fact that
$J=\underline J2^{j}$ for all $J\in \mathcal I_{n}$, we obtain for
$n$ sufficiently large:
\begin{align*}
I &\leq \sum_{J\in \mathcal I_{n}}\mathrm{P}_{h_{0}}
\biggl( \llvert n \mathcal U_{J,1} \rrvert >\frac{\eta
_{J}'(\alpha )}{4} V_{J} \biggr)
\\
&= \sum
_{J\in \mathcal I_{n}}\mathrm{P}_{h_{0}} \biggl( \biggl\llvert \sum
_{ i< i'} R_{1}(Z_{i},
Z_{i'}) \biggr\rrvert \geq \frac{\eta _{J}'(
\alpha )}{4} \frac{n-1}{2}V_{J} \biggr)
\\
&\leq \sum_{J\in \mathcal I_{n}}\mathrm{P}_{h_{0}} \biggl(
\biggl\llvert \sum_{
i< i'} R_{1}(Z_{i},
Z_{i'}) \biggr\rrvert \geq C_{R_{1}}\Lambda
\bigl(u_{J},L(J)\bigr) \biggr)
\\
& \leq 6\sum_{J\in \mathcal I_{n}}
\exp \bigl(-2\log \log \bigl(L(J)^{c_{
\alpha}}\bigr) \bigr).
\end{align*}
Using the fact that $\sum_{j\geq 1}j^{-2}=\pi ^{2}/6$, we obtain
\begin{align*}
I &\leq 6 c_{\alpha}^{-2}\sum_{J\in \mathcal I_{n}}
\bigl(\log L(J) \bigr)^{-2}
\\
&\leq \alpha \frac{6}{1+(\pi /\log
2)^{2}}\sum_{j\geq 0}(1/6+j \log
2)^{-2}
\\
&\leq \alpha \frac{6}{1+(\pi /\log 2)^{2}} \biggl(1/6+(\log
2)^{-2} \sum_{j\geq 1}j^{-2}
\biggr)= \alpha .
\end{align*}
Consider term $\mathit{II}$. Since
$\Evtex _{h_{0}}|U{\mathbbm{1}}_{\{|U|> M_{n}\}}|\leq M_{n}^{-3}
\Evtex _{h_{0}}[U^{4}{\mathbbm{1}}_{\{|U|> M_{n}\}}]\leq M_{n}^{-3}
\Evtex _{h_{0}}[U^{4}]$, Markov's inequality yields
\begin{align*}
\mathit{II}&\leq \Evtex _{h_{0}}\max_{J\in \mathcal I_{n}} \biggl
\llvert \frac{4}{\eta _{J}'(\alpha )V_{J} (n-1)}
\sum_{i< i'}U_{i}{\mathbbm{1}}_{M_{i}^{c}}
U_{i'}{\mathbbm{1}}_{M_{i'}^{c}} b^{K}(W_{i})'A'Ab^{K}(W_{i'})
\biggr\rrvert
\\
&\leq 4n\Evtex _{h_{0}} \llvert U{\mathbbm{1}}_{\{ \llvert U \rrvert > M_{n}\}} \rrvert
\Evtex _{h_{0}} \llvert U{\mathbbm{1}}_{\{ \llvert U \rrvert > M_{n}\}} \rrvert \max
_{J\in \mathcal I_{n}} \frac{\zeta _{J}^{2} \bigl\llVert
\bigl(G_{b}^{-1/2}SG^{-1/2}\bigr)^{-}_{l}
\bigr\rrVert ^{2}}{\eta _{J}'(\alpha
)V_{J}}
\\
&\leq 4nM_{n}^{-6} \bigl(\Evtex _{h_{0}}
\bigl[U^{4}\bigr] \bigr)^{2} \zeta _{
\overline J}^{2}
\max_{J\in \mathcal I_{n}} \frac{s_{J}^{-2}}{\eta
_{J}'(\alpha )V_{J}},
\end{align*}
where the fourth moment of $U=Y-h_{0}(X)$ is bounded under Assumption~\ref{A:basis}(i). Lemma~\ref{lower:bound:v_n} implies
$s_{J}^{-2}\leq \underline\sigma ^{-2}V_{J}$. By the definition of
$M_{n}=\sqrt{n}  \zeta _{\overline J}^{-1} (\log \log \overline J)^{-3/4}$
and Assumption~\ref{A:adapt:test}(i), we obtain
$\mathit{II}=o  (n^{-2} (\log \log \overline J)^{9/2}   \zeta _{
\overline J}^{8}  )=o(1)$.

 \textbf{Step~2:} We control the type II error of the test statistic
$\widetilde {\mathtt{T}}_{n}$ where $\eta _{J}'(\alpha )$ is replaced by
$\eta ''(\alpha )>0$. From the definition
$\overline J=\sup \{J: s_{J}^{-1}\zeta _{J}^{2}\sqrt{(\log J)/n}
\leq \overline c\}$, we infer that the dimension parameter
$J^\circ $ given in \eqref{thm:adapt:test:rate} satisfies
$\underline J\leq J^\circ \leq \overline J/2$ for $\overline c$ sufficiently
large by Assumptions \ref{A:s_J} and \ref{A:adapt:test}(iii). Thus, by
the construction of the set $\mathcal I_{n}$, there exists
$J^{*}\in \mathcal I_{n}$ such that $J^\circ \leq J^{*}< 2J^\circ $. Let
$K^{*}=K(J^{*})$. We note that for all
$h\in \mathcal H_{1}(\delta ^{\circ}\text{\textsf r}_{n})$:
\begin{align*}
\mathrm{P}_{h} (\widetilde {\mathtt{T}}_{n}=0 )&=
\mathrm{P}_{h} \bigl(n \widehat{D}_{J}(h_{0})\leq
\eta ''(\alpha ) V_{J}\text{ for all } J\in
\mathcal I_{n} \bigr)
\\
&\leq \mathrm{P}_{h} \bigl(n \widehat{D}_{J^{*}}(h_{0})
\leq c_{1} \sqrt{\log \log n-\log \alpha} V_{J^{*}} \bigr)
\end{align*}
with $c_{1}=4(1+c_{0}) $, by the definition of $\eta ''(\alpha )$. Note
that
$\log \log n-\log \alpha =(\log \log n)[1-(\log \alpha )/(\log \log n)]
\leq 2\log \log n$ for all $n$ sufficiently large. Consequently, we may
apply Lemma~\ref{lemma:type2}(ii) which implies
$\mathrm{P}_{h}  (\widetilde {\mathtt{T}}_{n}=0  )=o(1)$ uniformly
for $h\in \mathcal H_{1}(\delta ^{\circ}\text{\textsf r}_{n})$.

 \textbf{Step~3:} Finally, we account for estimation of the normalization
factor $V_{J}$ and for estimation of upper bound of the RES index
$\widehat{\mathcal I}_{n}$. We control the type I error of the test
$\widehat{\mathtt{T}}_{n}$ under simple null hypotheses as follows. The
lower bound in Lemma~\ref{lemma:eta:bounds} implies
\begin{align*}
\mathrm{P}_{h_{0}} (\widehat{\mathtt{T}}_{n}=1 ) \leq{}&
\mathrm{P}_{h_{0}} \Bigl(\max_{J\in \widehat{\mathcal I}_{n}} \bigl\{n
\widehat{D}_{J}(h_{0})/\bigl(\eta _{J}'(
\alpha )\widehat{V}_{J}\bigr) \bigr\}>(1-c_{0})^{-1}
\Bigr)
\\
\leq{}& \mathrm{P}_{h_{0}} \Bigl(\max_{J\in \mathcal I_{n}} \bigl\{n
\widehat{D}_{J}(h_{0})/\bigl(\eta _{J}'(
\alpha )\widehat{V}_{J}\bigr) \bigr\}>(1-c_{0})^{-1},
\\
&{}
\widehat{V}_{J}\geq (1-c_{0}) V_{J}\text{ for all
}J\in \mathcal I_{n} \Bigr)&
\\
&{} +\mathrm{P}_{h_{0}} \bigl(\widehat{V}_{J}<
(1-c_{0}) V_{J} \text{ for all }J\in \mathcal
I_{n} \bigr)+ \mathrm{P}_{h_{0}} ( \widehat J_{\max}>
\overline J )
\\
\leq{}& \mathrm{P}_{h_{0}} \Bigl(\max_{J\in \mathcal I_{n}} \bigl\{n
\widehat{D}_{J}(h_{0})/\bigl(\eta _{J}'(
\alpha )V_{J}\bigr) \bigr\}> 1 \Bigr)
\\
&{}+ \mathrm{P}_{h_{0}} \Bigl(
\max_{J\in \mathcal I_{n}} \llvert \widehat{V}_{J}/V_{J}-1
\rrvert > c_{0} \Bigr)+o(1)\leq \alpha +o(1),
\end{align*}
where the third inequality is due to Lemmas \ref{lemma:J_hat}(i) and
\ref{lem:est:var}(ii), and the last inequality is due to Step~1 of this
proof. To bound the type II error of the test
$\widehat{\mathtt{T}}_{n}$, recall the definition of
$J^{*}\in \mathcal I_{n}$ given in Step~2 of this proof. Using the upper
bound of Lemma~\ref{lemma:eta:bounds} together with Lemmas
\ref{lemma:J_hat}(ii) and \ref{lem:est:var}, we evaluate uniformly for
$h\in \mathcal H_{1}(\delta ^{\circ}\text{\textsf r}_{n})$:
\begin{align*}
\mathrm{P}_{h} (\widehat{\mathtt{T}}_{n}=0 ) \leq{}&
\mathrm{P}_{h} \bigl(n\widehat{D}_{J^{*}}(h_{0})\leq
(1+c_{0})^{-1}\eta ''(\alpha )
\widehat{V}_{J^{*}} \bigr)+ \mathrm{P}_{h}
\bigl(J^{*}>\widehat J_{
\max} \bigr)&
\\
\leq{}& \mathrm{P}_{h} \bigl(n\widehat{D}_{J^{*}}(h_{0})
\leq (1+c_{0})^{-1} \eta ''(\alpha
) \widehat{V}_{J^{*}}, \widehat{V}_{J^{*}}\leq
(1+c_{0})V_{J^{*}} \bigr)
\\
&{}+\mathrm{P}_{h} \bigl(
\widehat{V}_{J^{*}}> (1+c_{0})V_{J^{*}} \bigr)+o(1)
\\
\leq{}& \mathrm{P}_{h} \bigl(n\widehat{D}_{J^{*}}(h_{0})
\leq \eta ''( \alpha )V_{J^{*}} \bigr)+o(1)=
o(1),
\end{align*}
where the last equation is due to Step~2 of this proof.

Since both the mildly ill-posed and severely ill-posed are special cases
of regularly varying, the rest of the results follow. In the mildly ill-posed
case, we obtain
$J^\circ \sim  (n/\sqrt{\log \log n} )^{2d_{x}/(4(p+a)+d_{x})}$ which
implies
$\text{\textsf r}_{n}\sim  (\sqrt{\log \log n}/n )^{2p/(4(p+a)+d_{x})}$.
In the severely ill-posed case, note that if
$J^\circ \sim  (c\log n )^{d_{x}/a}$ for some constant
$c\in (0,1)$, then we obtain
$n^{-1/2}(J^\circ \log \log n)^{1/4} s_{J^\circ }^{-1}\lesssim (J^
\circ )^{-p/d_{x}}\sim  (\log n )^{-p/a}$.
\end{proof}

\begin{proof}[{Proof of Theorem~\ref{thm:adapt:est:test}.}]
We prove this result in three steps. First, we bound the type I error of
the test statistic
$\widetilde{\mathtt{T}}_{n} = \mathbbm{1} \{ \max_{J
\in \mathcal I_{n}} \{n\widehat{D}_{J}/(\eta _{J}'(\alpha )V_{J})
 \}>1 \}$, where $\eta _{J}'(\alpha )$ is given in the proof of Theorem~\ref{thm:adapt:test}. Second, we bound the type II error of
$\widetilde {\mathtt{T}}_{n}$, where $\eta _{J}'(\alpha )$ is replaced
by $\eta ''(\alpha )$ given in the proof of Theorem~\ref{thm:adapt:test}. Third, we show that Steps 1 and 2 are sufficient
to control the type I and type II errors of our adaptive test
$\widehat{\mathtt{T}}_{n}$ for the composite null.

 \textbf{Step~1:} We control the type I error of the test statistic
$\widetilde {\mathtt{T}}_{n}$ using the decomposition
\begin{align*}
n(n-1) \widehat{D}_{J}&=\sum_{i\neq i'}
\bigl(Y_{i}-\widehat h^{
\text{\textsc r}}_{J}(X_{i})
\bigr) \bigl(Y_{i'}-\widehat h^{\text{\textsc r}}_{J}(X_{i'})
\bigr)b^{K}(W_{i})'\widehat A'
\widehat Ab^{K}(W_{i'})
\\
&= \biggl\llVert \sum_{i} \bigl(Y_{i}-
\widehat h^{\text{\textsc r}}_{J}(X_{i}) \bigr) \widehat
Ab^{K}(W_{i}) \biggr\rrVert ^{2}-\sum
_{i} \bigl\llVert \bigl(Y_{i}-\widehat
h^{
\text{\textsc r}}_{J}(X_{i}) \bigr)\widehat
Ab^{K}(W_{i}) \bigr\rrVert ^{2}.
\end{align*}
For any $h\in \mathcal H_{0}$, we define
$h_{J}^{*}:=\argmin _{\phi \in \mathcal H_{0,J}} \|\sum_{i}(\phi -h)(X_{i})
\widehat Ab^{K}(W_{i})\|$. The definition of the restricted NPIV estimator
$\widehat h^{\text{\textsc r}}_{J} \in \mathcal H_{0,J}$ in
\eqref{def:est:h^R} yields for all $h\in \mathcal H_{0}$:
\begin{align*}
&\biggl\llVert \sum_{i} \bigl(Y_{i}-
\widehat h^{\text{\textsc r}}_{J}(X_{i}) \bigr) \widehat
Ab^{K}(W_{i}) \biggr\rrVert
\\
 &\quad \leq \biggl\llVert \sum
_{i} \bigl(Y_{i}- h_{J}^{*}(X_{i})
\bigr)\widehat Ab^{K}(W_{i}) \biggr\rrVert
\\
&\quad \leq \biggl\llVert \sum_{i} \bigl(Y_{i}-
h(X_{i}) \bigr)\widehat Ab^{K}(W_{i}) \biggr
\rrVert + \biggl\llVert \sum_{i}\bigl(h-
h_{J}^{*}\bigr) (X_{i})\widehat
Ab^{K}(W_{i}) \biggr\rrVert .
\end{align*}
By Lemma~\ref{Lemma:adapt:est:step1} (see below), uniformly for
$J\in \mathcal I_{n}$, we have
\begin{align*}
&\frac{n\widehat{D}_{J}}{ \eta _{J}'(\alpha
)V_{J}}- \frac{n\widehat{D}_{J}(h)}{ \eta
_{J}'(\alpha )V_{J}}
\\
&\quad \lesssim
\bigl(V_{J} \sqrt{(\log \log J)/J}\bigr)^{-1/2}n^{-1}
\sum_{i} \bigl(Y_{i}- h(X_{i})
\bigr)b^{K}(W_{i})'\widehat A'
\widehat A b^{K}(W_{i}) \bigl(\widehat h_{J}^{
\text{\textsc r}}-
h \bigr) (X_{i})&
\\
&\qquad {} + \bigl(V_{J}\sqrt{(\log \log J)/J}\bigr)^{-1/2} \biggl
\llVert \frac{1}{\sqrt n} \sum_{i}
\bigl(Y_{i}- h(X_{i}) \bigr)\widehat Ab^{K}(W_{i})
\biggr\rrVert
\\
&\quad =: \bigl(V_{J}\sqrt{(\log \log J)/J}\bigr)^{-1/2}
(T_{1,J}+2T_{2,J} )
\end{align*}%
wpa1 uniformly for $h\in \mathcal H_{0}$, where $\widehat{D}_{J}(h)$ is
given in \eqref{whDh0} (with $h_{0}$ replaced by
$h=\Pi _{\mathcal H_0}h$ under $\mathcal H_{0}$). Now we may follow Step~1 of the proof of Theorem~\ref{thm:adapt:test} and obtain
\begin{align*}
\limsup_{n\to \infty} \sup_{h\in \mathcal H_{0}}
\mathrm{P}_{h} \Bigl(\max_{J\in \mathcal I_{n}} \bigl\{n
\widehat{D}_{J}(h)/\bigl(\eta _{J}'( \alpha
)V_{J}\bigr) \bigr\}>1/4 \Bigr)\leq \alpha .
\end{align*}
It remains to control $T_{1,J}$ and $T_{2,J}$. Consider $T_{1,J}$. For
all $J\in \mathcal I_{n}$, we evaluate
\begin{align*}
T_{1,J}={}& \frac{1}{n}\sum_{i}
\bigl(Y_{i}- h(X_{i}) \bigr)b^{K}(W_{i})'A'A
b^{K}(W_{i}) \bigl(\widehat h^{\text{\textsc r}}_{J}-h
\bigr) (X_{i})
\\
&{} +\frac{1}{n}\sum_{i} \bigl(Y_{i}-
h(X_{i}) \bigr)b^{K}(W_{i})' \bigl(
\widehat A'\widehat A - A'A \bigr) b^{K}(W_{i})
\bigl(\widehat h^{
\text{\textsc r}}_{J}-h \bigr) (X_{i}):=
T_{11,J}+T_{12,J}.
\end{align*}
Consider $T_{11,J}$. We first observe by the Cauchy--Schwarz inequality
that
\begin{align*}
T_{11,J}\leq \biggl(\frac{1}{n}\sum_{i}
\bigl(Y_{i}- h(X_{i}) \bigr)^{2} \bigl\llVert A
b^{K}(W_{i}) \bigr\rrVert ^{2}
\biggr)^{1/2} \biggl(\frac{1}{n}\sum_{i}
\bigl\llVert A b^{K}(W_{i}) \bigl(\widehat
h^{\text{\textsc r}}_{J}-h \bigr) (X_{i}) \bigr\rrVert
^{2} \biggr)^{1/2}.
\end{align*}
Further, another application of the Cauchy--Schwarz inequality implies
\begin{eqnarray*}
\Evtex _{h}\max_{J\in \mathcal I_{n}} \bigl\llVert \bigl(Y-h(X)
\bigr)A b^{K}(W) \bigr\rrVert ^{2} &\leq& \max
_{J\in \mathcal I_{n}} \sqrt J \bigl\llVert A \Evtex _{h} \bigl[
\bigl(Y- h(X) \bigr)^{2} b^{K}(W)b^{K}(W)' \bigr]A' \bigr\rrVert
_{F}
\\
&=& \max_{J\in \mathcal I_{n}} \{\sqrt JV_{J} \},
\end{eqnarray*}
using the definition of the normalization term $V_{J}$. Consequently, we
evaluate
\begin{equation*}
\max_{J\in \mathcal I_{n}}\frac{T_{11,J}}{V_{J}\sqrt{\log
\log J}} \lesssim \max_{J\in \mathcal I_{n}} \frac{\zeta _{J} \bigl
\llVert \widehat h_{J}^{\text{\textsc r}}-h \bigr\rrVert _{L^{2}(X)}}{
\sqrt{\log \log J}} \times \max_{J\in \mathcal I_{n}} \frac{\sqrt{\Evtex
_{h} \bigl[ \bigl\llVert \bigl(Y- h(X)\bigr)A b^{K}(W)
\bigr\rrVert ^{2} \bigr]}}{\zeta _{J}s_{J}V_{J}}
\end{equation*}
wpa1 uniformly for $h\in \mathcal H_{0}$, where the right-hand side tends
to zero by the rate condition imposed in Assumption~\ref{h0:est}(i), that
is,
$\mathrm{P}_{h}(\max_{J\in \mathcal I_{n}}\|\widehat h^{\text{\textsc r}}_{J}-h
\|_{L^{2}(X)}\zeta _{J}/(\log \log J)^{1/4}>\varepsilon )\to 0$ uniformly
for $h\in \mathcal H_{0}$ for any $\varepsilon >0$. Similarly,
$\max_{J\in \mathcal I_{n}}T_{12,J}/(V_{J}\sqrt{\log \log J})$ vanishes
wpa1 uniformly for $h\in \mathcal H_{0}$, using that
\begin{align*}
&\mathrm{P} \Bigl(\max_{J\in \mathcal I_{n}} \bigl\{s_{J}^{2}
\zeta _{J}^{-1} \sqrt{n/(\log J)} \bigl\llVert (\widehat A-
A)G_{b}^{1/2} \bigr\rrVert \bigr\}>C \Bigr)&
\\
&\quad =\mathrm{P} \biggl(\max_{J\in \mathcal I_{n}} \biggl\{s_{J}^{2}
\zeta _{J}^{-1} \sqrt{\frac{n}{\log J}} \bigl\llVert \bigl(
\widehat G_{b}^{-1/2} \widehat S \widehat G^{-1/2}
\bigr)^{-}_{l}\widehat G_{b}^{-1/2}
G_{b}^{1/2}- \bigl(G_{b}^{-1/2}
SG^{-1/2}\bigr)^{-}_{l} \bigr\rrVert \biggr\}>C
\biggr)
\\
&\quad =o(1),
\end{align*}
by Lemma~E.5(i). Consider $T_{2,J}$. We have
\begin{eqnarray*}
T_{2,J}&\leq& \biggl\llVert \frac{1}{\sqrt n}\sum
_{i} \bigl(Y_{i}- h(X_{i})
\bigr)Ab^{K}(W_{i}) \biggr\rrVert + \biggl\llVert \frac{1}{
\sqrt n}\sum_{i} \bigl(Y_{i}-
h(X_{i}) \bigr) ( \widehat A-A)b^{K}(W_{i})
\biggr\rrVert
\\
 &:=& T_{21,J}+T_{22,J}.
\end{eqnarray*}
We have
$\Evtex _{h} \max_{J\in \mathcal I_{n}} T_{21,J}\leq \sqrt{\Evtex _{h}\max_{J
\in \mathcal I_{n}} \|(Y-h(X))Ab^{K(J)}(W)\|^{2}}\leq \max_{J\in
\mathcal I_{n}}\{J^{1/4}\sqrt{V_{J}}\}$ as derived above and conclude
\begin{align*}
\Evtex _{h}\max_{J\in \mathcal I_{n}} \frac{T_{21,J}}{
\bigl(V_{J}\sqrt{J(\log \log J}) \bigr)^{1/2}} \lesssim \max
_{J\in \mathcal I_{n}} \frac{{J^{1/4}\sqrt{V_{J}}}}{
\bigl(V_{J}\sqrt{J(\log \log J)} \bigr)^{1/2}} =o(1)
\end{align*}
uniformly for $h\in \mathcal H_{0}$. Concerning the second summand
$T_{22,J}$, by another application of Lemma~E.5,
$\max_{J\in \mathcal I_{n}}T_{22,J}/(V_{J}\sqrt{J(\log \log J)})$ vanishes
wpa1 uniformly for $h\in \mathcal H_{0}$.

 \textbf{Step~2:} We control the type II error of the test statistic
$\widetilde {\mathtt{T}}_{n}$. Let $J^{*}$ be as in the proof of Theorem~\ref{thm:adapt:test}. We evaluate for all
$h\in \mathcal H_{1}(\delta ^{\circ}\text{\textsf r}_{n})$ that
\begin{equation*}
\mathrm{P}_{h} (\widetilde {\mathtt{T}}_{n}=0 )=
\mathrm{P}_{h} \bigl(n \widehat{D}_{J}\leq \eta
''(\alpha ) V_{J} \text{ for all } J\in
\mathcal I_{n} \bigr)\leq \mathrm{P}_{h} (n
\widehat{D}_{J^{*}}\leq c_{1}\sqrt{\log \log n-\log \alpha}
V_{J^{*}} ),
\end{equation*}
with $c_{1}=4(1+c_{0}) $, by the definition of $\eta ''(\alpha )$. Let
$\widehat U_{i}^{J}:=(Y_{i}- \widehat h_{J}^{\text{\textsc r}}(X_{i}))Ab^{K}(W_{i})$;
then
\begin{eqnarray*}
\bigl\llVert \Evtex _{h}\bigl[\widehat U^{J^{*}}\bigr] \bigr
\rrVert ^{2}&=& \Evtex _{h}\bigl[\bigl(Y-\widehat
h_{J^{*}}^{
\text{\textsc r}}(X)\bigr)b^{K^{*}}(W)'\bigr]
A' A \Evtex _{h}\bigl[\bigl(Y-\widehat
h_{J^{*}}^{
\text{\textsc r}}(X)\bigr)b^{K^{*}}(W)\bigr]
\\
&=& \bigl\llVert
Q_{J^{*}}\bigl(h-\widehat h_{J^{*}}^{
\text{\textsc r}}\bigr) \bigr
\rrVert _{L^{2}(X)}^{2}.
\end{eqnarray*}
The triangular inequality implies
$ |\|Q_{J^{*}}(h-\widehat h_{J^{*}}^{\text{\textsc r}})\|_{L^{2}(X)}-\|h-
\widehat h_{J^{*}}^{\text{\textsc r}}\|_{L^{2}(X)} | \leq \sup_{\phi
\in \mathcal H}\|Q_{J^{*}}\phi - \phi \|_{L^{2}(X)}$ uniformly for
$h\in \mathcal H_{1}(\delta ^{\circ}\text{\textsf r}_{n})$. Consequently, Lemma~\ref{lemma:bias}(ii) together with the definition of $J^{*}$ implies
$\sup_{h\in \mathcal H_{1}(\delta ^{\circ}\text{\textsf r}_{n})}(\|\Evtex _{h}[
\widehat U^{J^{*}}]\|-\|h-\widehat h_{J^{*}}^{\text{\textsc r}}\|_{L^{2}(X)})^{2}
\leq C_{B} \text{\textsf r}_{n}^{2}$ for some constant $C_{B}>0$. Using this bound,
we derive
\begin{flalign*}
\mathrm{P}_{h} (n \widehat{D}_{J^{*}}\leq 2c_{1}
\sqrt{\log \log n} V_{J^{*}} )
 &= \mathrm{P}_{h} \biggl( \bigl
\llVert \Evtex _{h}\bigl[ \widehat U^{J^{*}}\bigr] \bigr\rrVert
^{2}-\widehat{D}_{J^{*}}> \bigl\llVert \Evtex _{h}
\bigl[\widehat U^{J^{*}}\bigr] \bigr\rrVert ^{2} -
\frac{2c_{1}\sqrt{\log \log n} V_{J^{*}}}{n} \biggr)&
\\
&\leq T_{1}+T_{2},
\\
T_{1}:=\mathrm{P}_{h} \Biggl( \Biggl\llvert
\frac{4}{n(n-1)}\sum_{j=1}^{J^{*}} \sum
_{i< i'} \bigl( &\widehat U_{ij}\widehat
U_{i'j}- \Evtex _{h}[ \widehat U_{1j}]^{2}
\bigr) \Biggr\rrvert >\rho _{h} \Biggr),
\\
T_{2}:=\mathrm{P}_{h} \biggl( \biggl\llvert
\frac{4}{n(n-1)}\sum_{i< i'} \bigl(Y_{i}-&
\widehat h_{J^{*}}^{\text{\textsc r}}(X_{i})\bigr)
\bigl(Y_{i'}-\widehat h_{J^{*}}^{
\text{\textsc r}}(X_{i'})
\bigr)b^{K^{*}}(W_{i})'
\bigl(A'A-
\widehat A'\widehat A \bigr)b^{K^{*}}(W_{i'})
\biggr\rrvert >\rho _{h} \biggr),
\end{flalign*}
where
$\rho _{h}=\|h-\mathcal H_{0}\|_{L^{2}(X)}^{2}/2-2c_{1}n^{-1}\sqrt{
\log \log n}V_{J^{*}} -C_{B}\text{\textsf r}_{n}^{2}$. To establish an upper bound
of $T_{1}$, we make use of Lemma~E.3 which yields
\begin{align}
\label{Markovs:bound:est_h} T_{1}\lesssim n^{-1}s_{J^{*}}^{-2}
\rho _{h}^{-2}\mathcal C_{h}^{2} \bigl(
\llVert h-\mathcal H_{0} \rrVert _{L^{2}(X)}^{2}+
\bigl(J^{*}\bigr)^{-2p/d_{x}} \bigr) + n^{-2}
s_{J^{*}}^{-4} J^{*}\rho _{h}^{-2}.
\end{align}
First, consider the case where
$n^{-2}s_{J^{*}}^{-4} J^{*}\rho _{h}^{-2}$ dominates the right-hand side.
For any $h\in \mathcal H_{1}(\delta ^{\circ}\text{\textsf r}_{n})$, we have
$\|h-\mathcal H_{0}\|_{L^{2}(X)}\geq \delta ^{\circ}\text{\textsf r}_{n}$ for
some sufficiently large $\delta ^{\circ}>0$ and hence, we obtain the lower
bound
$\rho _{h}\geq ((\delta ^{\circ})^{2}/2-C-C_{B}) \text{\textsf r}_{n}^{2}$ for
some constant $C>0$. Consequently, we have
$T_{1}\lesssim n^{-2}s_{J^{*}}^{-4} J^{*}(J^{*})^{4p/d_{x}}=o(1)$. Second,
consider the case where
$n^{-1}s_{J^{*}}^{-2} \rho _{h}^{-2} \mathcal C_{h}^{2} (\|h-
\mathcal H_{0}\|_{L^{2}(X)}^{2}+(J^{*})^{-2p/d_{x}} )$ dominates. For
any $h\in \mathcal H_{1}(\delta ^{\circ}\text{\textsf r}_{n})$, we have
$\|h-\mathcal H_{0}\|_{L^{2}(X)}^{2}\geq (\delta ^{\circ})^{2}
\text{\textsf r}_{n}^{2}\geq 5c_{1} n^{-1}V_{J^{*}}\sqrt{\log \log n}$ and we
obtain the lower bound
$\rho _{h}\geq (1/5 - C_{B}/(\delta ^{\circ})^{2})  \|h-\mathcal H_{0}
\|_{L^{2}(X)}^{2}$. Hence, \eqref{Markovs:bound:est_h} yields uniformly
for $h\in \mathcal H_{1}(\delta ^{\circ}\text{\textsf r}_{n})$ that
\begin{equation*}
T_{1}\lesssim n^{-1}s_{J^{*}}^{-2}
\mathcal C_{h}^{2} \bigl( \llVert h- \mathcal
H_{0} \rrVert _{L^{2}(X)}^{-2}+ \llVert h-\mathcal
H_{0} \rrVert ^{-4}_{L^{2}(X)} \bigl(J^{*}
\bigr)^{-2p/d} \bigr)\lesssim n^{-1}s_{J^{*}}^{-2}
\sqrt{J^{*}} \text{\textsf r}_{n}^{-2} =o(1)
\end{equation*}
using that
$\sup_{h\in \mathcal H_{1}(\delta ^{\circ}\text{\textsf r}_{n})}\mathcal C_{h}^{2}
\lesssim \sqrt{J^{*}}$ by Assumption~\ref{h0:est}(ii). Finally,
$T_{2}=o(1)$ uniformly for
$h\in \mathcal H_{1}(\delta ^{\circ}\text{\textsf r}_{n})$ by making use of
 Lemma~E.4.

 \textbf{Step~3:} Finally, we account for estimation of the normalization
factor $V_{J}$ and for estimation of the upper bound of the RES index set
$\widehat{\mathcal I}_{n}$. Lemma~\ref{lemma:J_hat}(i) implies
$\sup_{h\in \mathcal H_{0}}\mathrm{P}_{h} (\widehat J_{\max}>
\overline J )=o(1)$. We thus control the type I error of the test
$\widehat {\mathtt{T}}_{n}$ for testing composite hypotheses, as follows.
By the lower bound of Lemma~\ref{lemma:eta:bounds}, we have
\begin{equation*}
\mathrm{P}_{h} (\widehat {\mathtt{T}}_{n}=1 ) \leq
\mathrm{P}_{h} \biggl(\max_{J\in \mathcal I_{n}} \frac{n
\widehat{D}_{J}}{\eta _{J}'(\alpha
)V_{J}}> 1 \biggr)+ \mathrm{P}_{h} \Bigl(\max
_{J\in \mathcal I_{n}} \llvert \widehat{V}_{J}/V_{J}-1
\rrvert > c_{0} \Bigr)+o(1)\leq \alpha +o(1)
\end{equation*}
uniformly for $h\in \mathcal H_{0}$, where the last inequality is due to
Step~1 of this proof and Lemma~\ref{lem:est:var}(ii). To bound the type
II error of the test $\widehat {\mathtt{T}}_{n}$, recall the definition
of $J^{*}\in \mathcal I_{n}$ introduced in Step~2 and note that
$\sup_{h\in \mathcal H}\mathrm{P}_{h} (J^{*}>\widehat J_{\max}
 )=o(1)$ by Lemma~\ref{lemma:J_hat}(ii). Consequently, the upper bound
of Lemma~\ref{lemma:eta:bounds} and another application of Lemma~\ref{lemma:J_hat}(ii) give uniformly for
$h\in \mathcal H_{1}(\delta ^{\circ}\text{\textsf r}_{n})$:
$\mathrm{P}_{h}  (\widehat {\mathtt{T}}_{n}=0  ) \leq
\mathrm{P}_{h} (n\widehat{D}_{J^{*}}\leq \eta ''(\alpha )V_{J^{*}}
 ) +\mathrm{P}_{h}  ( \llvert \widehat{V}_{J^{*}}/V_{J^{*}}-1
 \rrvert > c_{0}   )+o(1)= o(1)$, where the last equation is due to
Step~2 and Lemma~\ref{lem:est:var}(i).
\end{proof}

\begin{lemma}%
\label{Lemma:adapt:est:step1}
Let Assumptions \ref{A:LB}(i)--(iii), \ref{A:basis}(i),
\ref{A:adapt:test}, and \ref{h0:est}(i) be satisfied. Recall the notation
$h_{J}^{*}=\argmin _{\phi \in \mathcal H_{0,J}} \|\sum_{i}(\phi -h)(X_{i})
\widehat Ab^{K}(W_{i})\|$. Then, for all $\varepsilon >0$, we have
\begin{equation*}
\sup_{h\in \mathcal H_{0}}\mathrm{P}_{h} \biggl(\max
_{J\in \mathcal I_{n}} \biggl\llVert \biggl(nV_{J}\sqrt{(\log \log
J)/J} \biggr)^{-1/2}\sum_{i}\bigl(h-
h_{J}^{*}\bigr) (X_{i}) \widehat
Ab^{K}(W_{i}) \biggr\rrVert >\varepsilon \biggr)=o(1).
\end{equation*}
\end{lemma}
\begin{proof}[{Proof of Lemma~\ref{Lemma:adapt:est:step1}.}]
The result is immediate under parametric null hypotheses. We now consider
the nonparametric case, where the semiparametric situation follows analogously.
Define
$\widetilde \Pi _{\mathcal B }h:=\argmin _{\phi \in \mathcal B}
 \|\sum_{i}(\phi -h)(X_{i})\widehat Ab^{K}(W_{i}) \|$ for any closed,
convex set $\mathcal B\subset \mathcal H$ and
$\Psi _{J,h}:=\{\phi : \phi =\kappa _{1} Q_{1} h+\cdots +\kappa _{J}
 Q_{J} h \text{ where } \sum_{j=1}^{J}|\kappa _{j}|\leq 1\}\subset
\Psi _{J}$ for any $h\in \mathcal H$. We have $0\in \Psi _{J,h}$; in particular,
the zero function belongs to the interior of $\Psi _{J,h}$. Thus,
$0\in \mathcal H_{0}$ implies that the zero function belongs to the interior
of $\Psi _{J,h}-\mathcal H_{0}$. Now, using that $\mathcal H_{0}$ and
$\Psi _{J,h}$ are closed and convex subsets of $\mathcal H$, we may apply
\citet [Corollary~4.5(i)]{bauschke1993}: there exist
$h_{J}\in \Psi _{J,h}\cap \mathcal H_{0}\neq \emptyset $ and $0<c<1$ such
that
\begin{equation}
\label{ineq:alternating_proj} \sup_{h\in \mathcal H_{0}}\mathrm{P}_{h} \biggl(
\max_{J\in \mathcal I_{n}} \biggl\{ \biggl\llVert n^{-1}\sum
_{i} \bigl(h_{J}- (\widetilde \Pi _{\Psi _{J,h}}
\widetilde \Pi _{\mathcal H_{0}})^{m} h \bigr) (X_{i})
\widehat Ab^{K}(W_{i}) \biggr\rrVert \lesssim
c^{m} \biggr\} \biggr)=1-o(1) 
\end{equation}
for all $m\geq 1$. Here, we used also that
$\Psi _{J,h}\subset \Psi _{J',h}$ whenever $J<J'$. The definition of
$h_{J}^{*}$ implies
\begin{align*}
\biggl\llVert\sum_{i}\bigl(h-
h_{J}^{*}\bigr) (X_{i})\widehat
Ab^{K}(W_{i}) \biggr\rrVert \leq{}& \biggl\llVert \sum
_{i}(h- h_{J}) (X_{i})\widehat
Ab^{K}(W_{i}) \biggr\rrVert
\\
\leq{}& \biggl\llVert \sum_{i} \bigl(h-(\widetilde
\Pi _{\Psi _{J,h}} \widetilde \Pi _{\mathcal H_{0}})^{m}h \bigr)
(X_{i})\widehat Ab^{K}(W_{i}) \biggr\rrVert
\\
&{}+
\biggl\llVert \sum_{i} \bigl((\widetilde \Pi
_{\Psi _{J,h}} \widetilde \Pi _{\mathcal H_{0}})^{m}h-
h_{J} \bigr) (X_{i})\widehat Ab^{K}(W_{i})
\biggr\rrVert .
\end{align*}%
We make use of the decomposition
$h-(\widetilde \Pi _{\Psi _{J,h}} \widetilde \Pi _{\mathcal H_{0}})^{m}h=
 (\text{id}+\widetilde \Pi _{\Psi _{J,h}} \widetilde \Pi _{
\mathcal H_{0}}+\cdots +(\widetilde \Pi _{\Psi _{J,h}}
\widetilde \Pi _{\mathcal H_{0}})^{m-1} ) (h-\widetilde \Pi _{\Psi _{J,h}}
\widetilde \Pi _{\mathcal H_{0}} h)$. We may assume that
$h\in \mathcal H_{0}$ does not belong to $\Psi _{J,h}$ and thus,
$\widetilde \Pi _{\Psi _{J,h}} \widetilde \Pi _{\mathcal H_{0}}$ forms
a contraction satisfying
\begin{align*}
\biggl\llVert \sum_{i} \bigl(h-(\widetilde \Pi
_{\Psi _{J,h}} \widetilde \Pi _{
\mathcal H_{0}})^{m}h \bigr)
(X_{i})\widehat Ab^{K}(W_{i}) \biggr\rrVert \leq
\biggl\llVert \sum_{i} (h-\widetilde \Pi
_{\Psi _{J,h}}h ) (X_{i}) \widehat Ab^{K}(W_{i})
\biggr\rrVert .
\end{align*}
Choosing $m=\lfloor \log _{c}(J^{-1/2}\sqrt{V_{J}/n})\rfloor $, we have
$m\geq 1$ for $n$ sufficiently large by the upper bound on $V_{J}$ established
in Lemma~\ref{upper:bound:v_n}, Assumption~\ref{A:adapt:test}(ii), and
using that $0<c<1$. Plugging this choice of $m$ in equation
\eqref{ineq:alternating_proj} thus implies
\begin{align*}
&\biggl\llVert \biggl(nV_{J}\sqrt{(\log \log J)/J}
\biggr)^{-1/2}\sum_{i}\bigl(h-
h_{J}^{*}\bigr) (X_{i}) \widehat
Ab^{K}(W_{i}) \biggr\rrVert
\\
&\quad \lesssim \biggl\llVert \biggl(nV_{J}\sqrt{(\log \log J)/J}
\biggr)^{-1/2}\sum_{i}(h-Q_{J} h)
(X_{i})\widehat Ab^{K}(W_{i}) \biggr\rrVert
+J^{-1/2}
\end{align*}
with probability approaching 1, uniformly for $h\in \mathcal H_{0}$, using
that $Q_{J}h\in \Psi _{J,h}$. It is sufficient to consider the first summand
on the right-hand side since
$\max_{J\in \mathcal I_{n}}J^{-1/2}=\underline J^{-1/2}=o(1)$. First,
we consider the off-diagonal summands:
\begin{align*}
&\frac{\sqrt J}{n}\sum_{i\neq i'}(h- Q_{J} h)
(X_{i}) (h- Q_{J} h) (X_{i'})b^{K}(W_{i})'A'A
b^{K}(W_{i'})
\\
& \qquad {}+\frac{\sqrt J}{n}\sum_{i\neq i'}(h- Q_{J}
h) (X_{i}) (h- Q_{J} h) (X_{i'})b^{K}(W_{i})'
\bigl(\widehat A'\widehat A - A'A \bigr)
b^{K}(W_{i'})
\\
&\quad  =: T_{31,J}+T_{32,J}.
\end{align*}%
Consider $T_{31,J}$. By the definition of
$Q_{J} h(\cdot )=\widetilde \psi ^{J}(\cdot )'A\Evtex [b^{K}(W)h(X)]$, we observe
\begin{equation*}
\Evtex \bigl[(h- Q_{J} h) (X)Ab^{K}(W) \bigr] =\Evtex
\bigl[Q_{J}(h-Q_{J} h) (X) \widetilde\psi ^{J}(X)
\bigr]=0.
\end{equation*}
Further, we infer for all $J\in \mathcal I_{n}$ that
$\sqrt{\Evtex [(Q_{J} h-h)^{2}(X) |W]}\lesssim \|Q_{J} h-h\|_{L^{2}(X)}
\lesssim J^{-p/d_{x}} $ wpa1 uniformly for $h\in \mathcal H_{0}$ by Lemma~\ref{lemma:bias}(ii) and thus,
$\Evtex |\sqrt J\Evtex [(Q_{J} h-h)^{2}(X) |W]|=o(1)$ by Assumption~\ref{A:adapt:test}(iii). Further, we obtain for all
$J\in \mathcal I_{n}$ and uniformly for $h\in \mathcal H_{0}$:
\begin{equation*}
\Evtex \bigl[\bigl(Q_{J} (h-\Pi _{J} h)\bigr)^{4}(X)
\bigr]\lesssim \zeta _{J}^{2} \bigl\llVert
\bigl(G_{b}^{-1/2} S G^{-1/2}\bigr)_{\ell}^{-}
\Evtex \bigl[(h- \Pi _{J} h) (X)\widetilde b^{K}(W)\bigr]
\bigr\rrVert ^{4}\lesssim \zeta _{J}^{2}
J^{-4p/d_{x}}
\end{equation*}
and $J\Evtex [(Q_{J} (h-\Pi _{J} h))^{4}(X)]=o(1)$ by Assumption~\ref{A:adapt:test}(iii). Using these moment bounds, we may follow Step~1 of the proof of Theorem~\ref{thm:adapt:test} by replacing
$Y_{i}-h(X_{i})$ with $J^{1/4}(Q_{J} h-h)(X_{i})$ for
$h\in \mathcal H_{0}$ and for any $\varepsilon >0$ obtain
$\mathrm{P}_{h}(\max_{J\in \mathcal I_{n}} T_{31,J}/(V_{J}\sqrt{
\log \log J})>\varepsilon )=o(1)$ uniformly for
$h\in \mathcal H_{0}$. Consider $T_{32,J}$. For any $\varepsilon >0$, we
have
$ \mathrm{P}_{h}(\max_{J\in \mathcal I_{n}}T_{32,J}/(V_{J}\sqrt{
\log \log J})>\varepsilon )=o(1)$ uniformly for
$h\in \mathcal H_{0}$, following Lemma~E.6 again
by replacing $Y_{i}- h(X_{i})$ with $J^{1/4}(Q_{J} h-h)(X_{i})$ for
$h\in \mathcal H_{0}$.

Finally, we control the diagonal elements of
$ J^{1/4}\|n^{-1/2}\sum_{i}(h-Q_{J} h)(X_{i})\widehat Ab^{K}(W_{i})
\|$. To do so, we make use of the decomposition
\begin{equation*}
\begin{split}&\frac{\sqrt J}{n}\sum_{i} \bigl\llVert (h-
Q_{J} h) (X_{i}) Ab^{K}(W_{i}) \bigr
\rrVert ^{2} +\frac{\sqrt J}{n}\sum_{i}
\bigl\llVert (h- Q_{J} h) (X_{i}) (\widehat A-A
)b^{K}(W_{i}) \bigr\rrVert ^{2}
\\
&\quad =:T_{41,J}+T_{42,J}.
\end{split}\end{equation*}
Using Lemma~E.5(i), for any
$\varepsilon >0$ we obtain
$ \mathrm{P}_{h} (\max_{J\in \mathcal I_{n}}T_{42,J}/(V_{J}\sqrt{
\log \log J})>\varepsilon )=o(1)$ uniformly for
$h\in \mathcal H_{0}$ and thus it is sufficient to consider
$T_{41,J}$. We have
\begin{align*}
\max_{J\in \mathcal I_{n}}\frac{T_{41,J}}{V_{J}\sqrt{\log
\log J}} & \lesssim \max_{J\in \mathcal I_{n}} \frac{\sqrt J \bigl( \llVert h-
Q_{J} h \rrVert _{L^{2}(X)}\zeta _{J}s_{J}^{-1}
\bigr)^{2}}{V_{J}\sqrt{\log \log J}}
\end{align*}%
wpa1 uniformly for $h\in \mathcal H_{0}$. The right-hand side tends to
zero using that $\|h- Q_{J} h\|_{L^{2}(X)}=O(J^{-p/d_{x}})$ and Assumption~\ref{A:adapt:test}(iii) together with
$s_{J}^{-2}\leq \underline\sigma ^{-2}V_{J}$ (by Lemma~\ref{lower:bound:v_n}).
\end{proof}
\end{appendix}
\bibliography{BiB}
\newpage
\clearpage
\,
\vskip .7cm
\setcounter{page}{1}

\begin{center}
{\LARGE Supplement to ``Adaptive, Rate-Optimal Hypothesis Testing in Nonparametric IV Models"\par\vspace{0.6\baselineskip}}
\end{center}
\begin{center}
{\textsc{ \large Christoph Breunig}}
  \qquad \quad{\textsc{\large Xiaohong Chen}}
\end{center}
\begin{center}
{\large First version: August 2018, Revised \today}
\end{center}
\vskip 1cm
This supplementary appendix contains materials to support our main paper.
Appendix~\ref{appendix:add_simulation} presents additional simulation results.
Appendix~\ref{appendix:proofs:adapt:st} provides proofs of our results
on confidence sets in Section~4.3. Appendix~\ref{sec:tech:appendix} presents additional technical lemmas and all the
proofs.



\begin{appendix}
\setcounter{section}{2}%
\section{Additional Simulations}
\label{appendix:add_simulation}

This section provides additional simulation results. All the simulation
results are based on $5000 $ Monte Carlo replications for every experiment
and are at the nominal level $\alpha =0.05$.

\subsection{Adaptive Testing for Monotonicity: Simulation Design II}
\label{sim:monotonII}

We generate the dependent variable $Y$ according to the NPIV model
(2.1), where
%
\begin{align}
\label{design:II:h} h(x)=c_{0}\bigl(x/5+x^{2}
\bigr)+c_{A}\sin (2\pi x) , 
\end{align}
$c_{0}\in \{0,1\}$, $c_{A}\in [0,0.6]$, and $ W=\Phi (W^{*})$,
$X=\Phi  (\xi W^{*}+\sqrt{1-\xi ^{2}}\epsilon  )$,
$U=(0.3\epsilon +\sqrt{1-(0.3)^{2}}\nu )/2$, where
$(W^{*}, \epsilon ,\nu )$ follows a multivariate standard normal distribution.
This design with $(c_{0},c_{A})=(1,0)$ and $\xi \in \{0.3,0.5\}$ is the
one in \citet{CW2017}. The null hypothesis is that the NPIV function
$h(\cdot )$ is weakly increasing on the support of $X$. The null is satisfied
when $c_{A}\in [0,0.184)$, and is violated when $c_{A}\geq 0.184$. We note
that $c_{0}=0$, $c_{A} = 0.0$ corresponds to the boundary of the null hypothesis.
Note that the degree of nonlinearity/complexity of $h$ given in
\eqref{design:II:h} becomes larger as $c_{A} >0$ increases.

We implement our adaptive test $\widehat {\mathtt{T}}_{n}$ given in
(2.12) in the main paper, and the \citet{fang2019} test
for monotonicity of a NPIV function, denoted as FS. The FS test is computed
using R language translation of their Matlab program code, with their deterministically
chosen $J=3$, $K\geq 3$ and other tuning parameter choices detailed in
their 2019 arXiv version (also see the description in our main paper).
%
\begin{table}[h!]
 \begin{center}
 \setlength{\tabcolsep}{0.3em}
 \renewcommand{\arraystretch}{1.1}
{\footnotesize  \begin{tabular}{c|ccc||cc|cc|cc|c|c|c}
\hline
\textit{$n$}&$c_0$& $c_A$&$\xi$&{\it $\widehat {\mathtt{T}}_n$}&{\it $\widehat J$}&{\it $\widehat {\mathtt{T}}_n$}&{\it $\widehat J$} &{\it $\widehat {\mathtt{T}}_n$}&{\it $\widehat J$} &FS&FS&FS\\
& &  & &  \multicolumn{2}{c|}{$K(J)=2J$}&\multicolumn{2}{c|}{$K(J)=4J$}&\multicolumn{2}{c|}{$K(J)=8J$}&$K=5$&$K=12$&$K=24$\\

    \hline
 $500$	&$0$ &$0.0$ &$0.3$ & 0.004  	&3.01					&0.012  	& 3.03 &0.011  	& 3.21 &0.005 &0.013&0.018\\
 &&	& $0.5$   													& 0.016 	& 3.32 				&0.018 		& 3.38	 &0.021  	& 3.40 &0.035 &0.035 &0.036\\
& &	 &   $0.7$	    										& 0.025		& 3.57 				&0.030 		& 3.58 &0.026  	& 3.49 &0.050&0.049&0.042\\
  \hline
&$1$ &$0.0$&$0.3$  						& 0.002 	& 3.01					&0.005  	& 3.03 &0.003  	& 3.12 &0.000 &0.000&0.000 \\
& &	& $0.5$   													& 0.004 	& 3.38 				&0.004		& 3.36 &0.004  	& 3.25 &0.000  &0.000&0.000 \\
& &	 &   $0.7$	    										& 0.004 	&  3.71 				&0.004 		& 3.65  &0.004  	& 3.38 & 0.000&0.000&0.000 \\
 \hline
 &$1$&$0.1$&$0.3$  						& 0.002 	& 3.01					&0.006  	& 3.03 &0.005  	& 3.12 &0.000  &0.000&0.000\\
& &	& $0.5$   													& 0.007 	& 3.37					&0.007 		& 3.35 &0.007  	& 3.25 &0.001  &0.001&0.001\\
& &	 &   $0.7$	    										& 0.009 	&  3.64				&0.008 		& 3.59 &0.008  	& 3.34 & 0.000&0.000&0.000\\
  \hline
$1000$&$0$ &$0.0$ &$0.3$ & 0.009  	&3.01 					&0.016 		& 3.07 &0.015  	& 3.27 &0.011 &0.021 &0.026\\
 &&	& $0.5$   													& 0.023 	& 3.50					&0.025 		& 3.47 &0.028  	& 3.45 &0.051 &0.046&0.044\\
& &	 &   $0.7$	    										& 0.034		& 3.87 				&0.034 		& 3.97 &0.034  	& 3.52 &0.059 &0.055 &0.047\\
  \hline
&$1$ &$0.0$&$0.3$  						& 0.003 	& 3.02 				& 0.005  	& 3.06 &0.004  	& 3.15 &0.000 &0.000 &0.000\\
& &	& $0.5$   													& 0.006 	&  3.63				&0.005		& 3.46 &0.006  	& 3.28 &0.000  &0.000&0.000 \\
& &	 &   $0.7$	    										& 0.003 		&  4.23 			&0.003 		& 4.22 &0.003  	& 3.46 & 0.000&0.000&0.000\\
 \hline
 &$1$&$0.1$&$0.3$  						& 0.004 	& 3.02					&0.008  	& 3.06 &0.005  	& 3.15 &0.000  &0.001&0.001\\
& &	& $0.5$   													& 0.009 	& 3.59					&0.009 		& 3.44 &0.010  	& 3.29 &0.001  &0.001&0.001\\
& &	 &   $0.7$	    										& 0.011 		& 4.09				&0.010 		& 4.10  &0.009  	& 3.38 & 0.000& 0.000&0.000\\
 \hline
$5000$&$0$ &$0.0$&$0.3$ & 0.020  	&3.38 					&0.019 		& 3.42 &0.026  	& 3.39 &0.040 &0.040 & 0.044 \\
 &&	& $0.5$   													& 0.038 	& 3.56 				&0.036 		& 3.62 &0.035  	& 3.49 &0.056 &0.057 &0.055\\
& &	 &   $0.7$	    										& 0.045		& 4.14 				&0.042  	& 4.12 &0.035  	& 3.75 &0.056 &0.059 & 0.058\\
  \hline
&$1$ &$0.0$&$0.3$  						& 0.005 	& 3.44 				&0.006  	& 3.35 &0.006  	& 3.23 &0.000  &0.001 &0.000\\
& &	& $0.5$   													& 0.004 	& 3.81					& 0.003 	& 3.80 &0.003  	& 3.47 &0.000  &0.000&0.000 \\
& &	 &   $0.7$	    										& 0.002 		&  4.74 			&0.002 		& 4.69  &0.002  	& 3.98 & 0.000&0.000&0.000\\
 \hline
 &$1$&$0.1$&$0.3$  						& 0.009 	& 3.42					& 0.008  	& 3.35 &0.009  	& 3.24 &0.001 &0.002&0.001 \\
& &	& $0.5$   													& 0.013 	&  3.70				&0.013 		& 3.69 &0.011  	& 3.40 &0.000  &0.000&0.000 \\
& &	 &   $0.7$	    										& 0.008 	&  4.52				&0.006 		& 4.46  &0.006  	& 3.75 & 0.000&0.000&0.000\\
  \hline
 \end{tabular}}
 \end{center}
 \vspace*{-6mm}
 \caption{{\small Testing Monotonicity - Empirical Size of our adaptive test $\widehat {\mathtt{T}}_n$ and of the FS test (with $J=3$). Monte Carlo average value $\widehat J$. Nominal level $\alpha=0.05$.  Design from Appendix \ref{sim:monotonII} with NPIV function \eqref{design:II:h}. Instrument strength increases in $\xi$.} }\label{table:size-add-ER}
 \end{table}

Table~\ref{table:size-add-ER} reports the empirical size of our adaptive
test $\widehat {\mathtt{T}}_{n}$, with $K(J)\in \{ 2J, 4J, 8J\}$, and using
quadratic B-spline basis functions with varying number of knots for the
unrestricted NPIV $h$. We also report the empirical size of the FS test,
using $J=3$ and $K\in \{5,  12, 24\}$ as comparison to our adaptive test's
$K(J)\in \{2J, 4J, 8J\}$. From Table~\ref{table:size-add-ER}, we observe
that our adaptive test $\widehat {\mathtt{T}}_{n}$ is slightly under-sized
across different sample sizes, different instrument strength, different
$K(J)$, and different design specifications. The FS test is mostly under-sized,
but is slightly over-sized at the boundary ($c_{0}=0$, $c_{A}=0.0$) for sample
sizes $n=1000, 5000$ and strong instrument strength $\xi =0.7$ even when
$J=3$, $K=5$ (the most powerful choice in the 2019 arXiv version of
\citet{fang2019}).

\begin{figure}
 \begin{center}
    \includegraphics[scale=0.8]{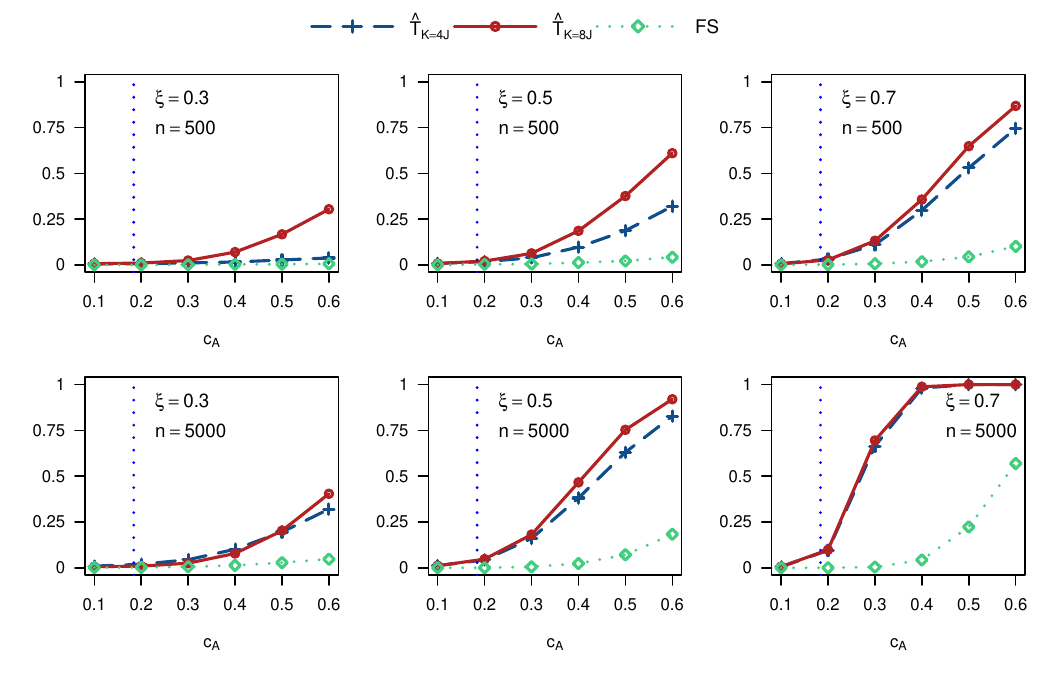}
    \vskip -.6cm
\caption{Testing monotonicity---empirical power of our adaptive
test $\widehat {\mathtt{T}}_{n}$ with $K(J)=4J$ (dashed plus lines) and
$K(J)=8J$ (solid circle lines) and the FS test (with $J=3$, $K=5$, dotted
square lines). Design from Appendix~\protect\ref{sim:monotonII} model
\protect\eqref{design:II:h} with $c_{0}=1$. The vertical dotted line indicates
when the null hypothesis is violated (when $c_{A}\geq 0.184$). Instrument
strength increases in $\xi $.}%
\label{fig:power-add-ER}
 \end{center}
\end{figure}

Figure~\ref{fig:power-add-ER} provides empirical rejection probabilities
of our adaptive test $\widehat {\mathtt{T}}_{n}$ (dashed plus and solid
circle lines) with $K(J)\in \{4J,  8J\}$ and of the FS test (with
$J=3$, $K=5$; dotted square lines). The power curves of all tests improve
as the instrument strength $\xi $ increases. Our adaptive test with
$K(J)=8J$ has better empirical power in finite samples when instrument
is weak, but the choice of $K(J)$ is less significant as the sample size
or the instrument strength increases. For instrument strength
$\xi =0.3$, the FS test has almost trivial power for
$c_{A} \in [0.2,0.5]$ even for large sample size $n=5000$, while our adaptive
test $\widehat {\mathtt{T}}_{n}$ has non-trivial power for all
$c_{A} \geq 0.3$. Moreover, the finite-sample power of our adaptive test
$\widehat {\mathtt{T}}_{n}$ increases much faster than the FS test as
$c_{A}> 0.2$ becomes larger. Figure~\ref{fig:power-add-ER} shows the substantial
finite-sample power gains through adaptation even in small sample size
$n=500$.

\begin{remark}
\label{remC.1}
When testing for inequality restrictions (IR)
$\mathcal H_{0}= \{h\in \mathcal H:  \partial ^{l} h\geq 0 \}$,
such as monotonicity and convexity, we could also compute our adaptive
test $\widehat {\mathtt{T}}_{n}$ using modified critical values in Step~2 as
follows: The estimator in (2.6) can be written as
$\widehat h_{J}^{\text{\textsc r}}(\cdot )=\psi ^{J}(\cdot )'\widehat \beta ^{
\text{\textsc r}}$. By construction of the estimator, we have
$ \partial ^{l} \widehat h_{J}^{\text{\textsc r}}(X_{i})\geq 0$, for all
$1\leq i\leq n$, or equivalently
$ \partial ^{l} \Psi \widehat \beta ^{\text{\textsc r}}\geq 0$, where the application
of the derivative operator is understood elementwise and
$\text{rank}( \partial ^{l} \Psi )\leq J$. Let $\Psi _{\mathrm{act}}$ be a submatrix
of $\Psi $ such that
$ \partial ^{l}\Psi _{\mathrm{act}}\widehat \beta ^{\text{\textsc r}}=0$. Set
$\widehat\gamma _{J}=\max   (1,\text{rank}( \partial ^{l} \Psi _{\mathrm{act}})
  )$ and compute for a given nominal level $\alpha \in (0,1)$:
%
\begin{align}
\label{def:eta_J:constraint} \widehat \eta _{J} (\alpha )= \frac{q \bigl(\alpha /\#(
\widehat{\mathcal I}_{n}), \widehat\gamma _{J} \bigr)-
\widehat\gamma _{J}}{\sqrt{\widehat\gamma _{J}}}, 
\end{align}
where $q (a, \gamma )$ denotes the $100(1-a)\%$-quantile of the chi-squared
distribution with $\gamma $ degrees of freedom. Assuming that
$J^{c}\leq \widehat\gamma _{J}$, $J\in \mathcal I_{n}$, for some constant
$0<c\leq 1$ with probability approaching 1 uniformly for
$h\in \mathcal H$, \citet{BC2021} established size control of the test
statistic using the modified critical values given in
\eqref{def:eta_J:constraint}. See \citet{BC2021} also for simulations and
real data application of testing for monotonicity and convexity using these
modified critical values. The simulations and empirical findings reported
in \citet{BC2021} are virtually the same, in terms of empirical size and
power, as the ones reported in this revised version for testing inequalities.
\end{remark}

\subsection{Simulations for Multivariate Instruments}
\label{sim:parII}

This section presents additional simulations for testing parametric hypotheses
in the presence of multivariate conditioning variable
$W = (W_{1},W_{2})$. We set $X_{i} =\Phi (X_{i}^{*})$,
$ W_{1i} =\Phi (W_{1i}^{*})$, and $ W_{2i} =\Phi (W_{2i}^{*})$, where
\begin{align}\label{design:II}
\begin{pmatrix}
X_i^*\\
W_{1i}^*\\
W_{2i}^*\\
U_i
\end{pmatrix}
\sim
\mathcal N\left(\begin{pmatrix}
0\\
0\\
0\\
0
\end{pmatrix},
\begin{pmatrix}
1 &\xi &0.4&0.3\\
\xi &1 &0 &0\\
0.4&0 &1 &0\\
0.3 &0&0&1
\end{pmatrix}\right)~.
\end{align}

 We generate the dependent variable $Y$ according to the NPIV model
(2.1) where $h(x)= -x/5+c_{A} x^{2}$. We test the null
hypothesis of linearity, that is, whether $c_{A}=0$.

\citet{Horowitz2006} assumed $d_{x}=d_{w} $ and hence we cannot compare
our adaptive test with his for Design \eqref{design:II}. Instead, we will
compare our adaptive test $\widehat {\mathtt{T}}_{n}$ against an adaptive
image-space test (IT), which is our proposed adaptive version of
\citet{bierens1990}'s type test for semi-nonparametric conditional moment
restrictions.\footnote{We refer readers to \citet{BC2020} for the theoretical
properties of the adaptive image-space test.} Specifically, our image-space
test (IT) is based on a leave-one-out sieve estimator of the quadratic
functional $\Evtex [\Evtex [Y-h^{\text{\textsc r}}(X)|W]^{2}]$, given by
\begin{align*}
\widehat{D}_{K}=\frac{2}{n(n-1)}\sum_{1\leq i< i'\leq n}
\bigl(Y_{i}- \widehat h^{\text{\textsc r}}(X_{i}) \bigr)
\bigl(Y_{i'}-\widehat h^{\text{\textsc r}}(X_{i'})
\bigr)b^{K}(W_{i})'\bigl(B'B/n
\bigr)^{-} b^{K}(W_{i'}),
\end{align*}
where $\widehat h^{\text{\textsc r}}$ is a null restricted parametric estimator
for the null parametric function $h^{\text{\textsc r}}$. The data-driven IT statistic
is
\begin{align*}
\widehat{\mathtt {IT}}_{n} = \1 \bigl\{\text{there exists } K\in
\widehat{\mathcal I}_{n}\text{ such that } n\widehat{D}_{K}/
\widehat{V}_{K}> \bigl(q \bigl(\alpha /\#(\widehat{\mathcal
I}_{n}), K \bigr)-K \bigr)/\sqrt{K} \bigr\},
\end{align*}
with the estimator
$\widehat{V}_{K}= \|(B'B)^{-1/2}\sum_{i=1}^{n} (Y_{i}-\widehat h^{
\text{\textsc r}}(X_{i}))^{2} b^{K}(W_{i})b^{K}(W_{i})'(B'B)^{-1/2} \|_{F}$,
and the adjusted index set
$\widehat{\mathcal I}_{n}=\{K\leq \widehat K_{\max}: K=\underline K2^{k}
\text{ where } k=0,1,\dots ,k_{\max}\}$, where
$\underline K:=\lfloor \sqrt{\log \log n}\rfloor $,
$k_{\max}:=\lceil \log _{2}(n^{1/3}/\underline K)\rceil $, and the empirical
upper bound
$\widehat K_{\max}=\min  \{K>\underline K:10 \zeta ^{2}(K)\sqrt{ (
\log K)/n}\geq s_{\min} ((B'B/n)^{-1/2} ) \}$. Finally,
$q (a, K)$ is the $100(1-a)\%$-quantile of the chi-squared distribution
with $K$ degrees of freedom. In this simulation, it is convenient to additionally
weight the basis functions by $(B'B/n)^{-1/2}$ to improve the finite-sample
performance of the IT statistic.
%
\begin{table}[h!]
 \begin{center}
 \setlength{\tabcolsep}{0.3em}
 \renewcommand{\arraystretch}{1.1}
{\footnotesize \begin{tabular}{c|cc||cc|cc}
\hline
\textit{$n$}&Design& $\xi$&{\it $\widehat {\mathtt{T}}_n$, $K(J)=4J$}&{\it $\widehat J$}\qquad  &\qquad{\it $\widehat {\mathtt{IT}}_n$}&\quad{\it $\widehat K$}\\
\hline
$500$& \eqref{design:I} 	&$0.3$  			& 	 0.023 	&3.03 \qquad &\qquad 0.046	&\quad 3.38	\\
									& $d_x=d_w$  &$0.5$  			& 	 0.028 	&3.40 \qquad &\qquad 0.046	&\quad 3.37	\\
									 												&	&$0.7$ 				&  0.035 	&3.56 \qquad &\qquad  0.046	&\quad 3.37\\
  \hline
   & \eqref{design:II} 			&$0.3$  			& 0.034		&3.45 \qquad & \qquad 0.034	&\quad 6.00\\
 								&$d_x<d_w$  &$0.5$  			& 0.035 	&3.49 \qquad & \qquad 0.035	&\quad 6.00\\
 																		&	&$0.7$	  			&  0.038 	&3.55 \qquad &\qquad 0.040	&\quad 6.00\\
  \hline
   $1000$& \eqref{design:I} &$0.3$  			& 	 0.022 	&3.07 \qquad &\qquad 0.053	&\quad 3.40\\
 																					&   &$0.5$  			& 	 0.027 	&3.48 \qquad &\qquad 0.051	&\quad 3.39	\\
																					 &	 &$0.7$ 			&  0.037 	&3.58 \qquad &\qquad  0.049	&\quad 3.39	\\
  \hline
   & \eqref{design:II} 			&$0.3$  			& 0.039		&3.47 \qquad & \qquad 0.032	&\quad 6.93\\
									 & 									&$0.5$  			& 0.040		&3.50 \qquad & \qquad 0.038	&\quad 6.92\\
 									&										 &   $0.7$	    		&0.043 	&3.58 \qquad &\qquad 0.037	&\quad 6.90\\
  \hline
  $5000$& \eqref{design:I} &$0.3$  			& 	 0.032 	&3.43 \qquad &\qquad 0.049	&\quad 3.38	\\
 																				&   &$0.5$  			& 	  0.043 	&3.55 \qquad &\qquad 0.045	&\quad 3.39	\\
																				 &	 &$0.7$ 			&  0.049	&3.63 \qquad &\qquad  0.042	&\quad 3.38	\\
  \hline
   & \eqref{design:II} 			&$0.3$  			& 0.049		&3.51 \qquad & \qquad 0.048	&\quad 10.28\\
									 & 									&$0.5$  			& 0.048	&3.57 \qquad & \qquad 0.046	&\quad 10.27\\
 								&										 &   $0.7$	    		&0.050 	&3.80 \qquad &\qquad 0.051	&\quad 10.25\\
  \hline
 \end{tabular}}
 \end{center}
  \vspace*{-6mm}
 \caption{{\small Testing Parametric Form - Empirical size of our adaptive tests $\widehat {\mathtt{T}}_n$ and of $\widehat {\mathtt{IT}}_{n}$. Nominal level $\alpha=0.05$. Monte Carlo average value $\widehat J$.  Design from Appendix \ref{sim:parII}. Instrument strength increases in $\xi$.} }\label{table:size-mult}
 \end{table}
Table~\ref{table:size-mult} compares the empirical size of the adaptive
image-space test $\widehat {\mathtt{IT}}_{n}$ with our adaptive structural-space
test $\widehat {\mathtt{T}}_{n}$, at the $5\%$ nominal level. We see that
both tests provide accurate size control. We also report the average choices
of sieve dimension parameters, as described in Section~5. The
multivariate design \eqref{design:II} leads to larger sieve dimension choices
$\widehat{K}$ in adaptive image-space tests
$\widehat {\mathtt{IT}}_{n}$, while the sieve dimension choices
$\widehat{J}$ of our adaptive structural-space test
$\widehat {\mathtt{T}}_{n}$ are not sensitive to the dimensionality ($d_{w}$)
of the conditional instruments.

\begin{figure}[h!]
 \begin{center}
    \includegraphics[scale=0.8]{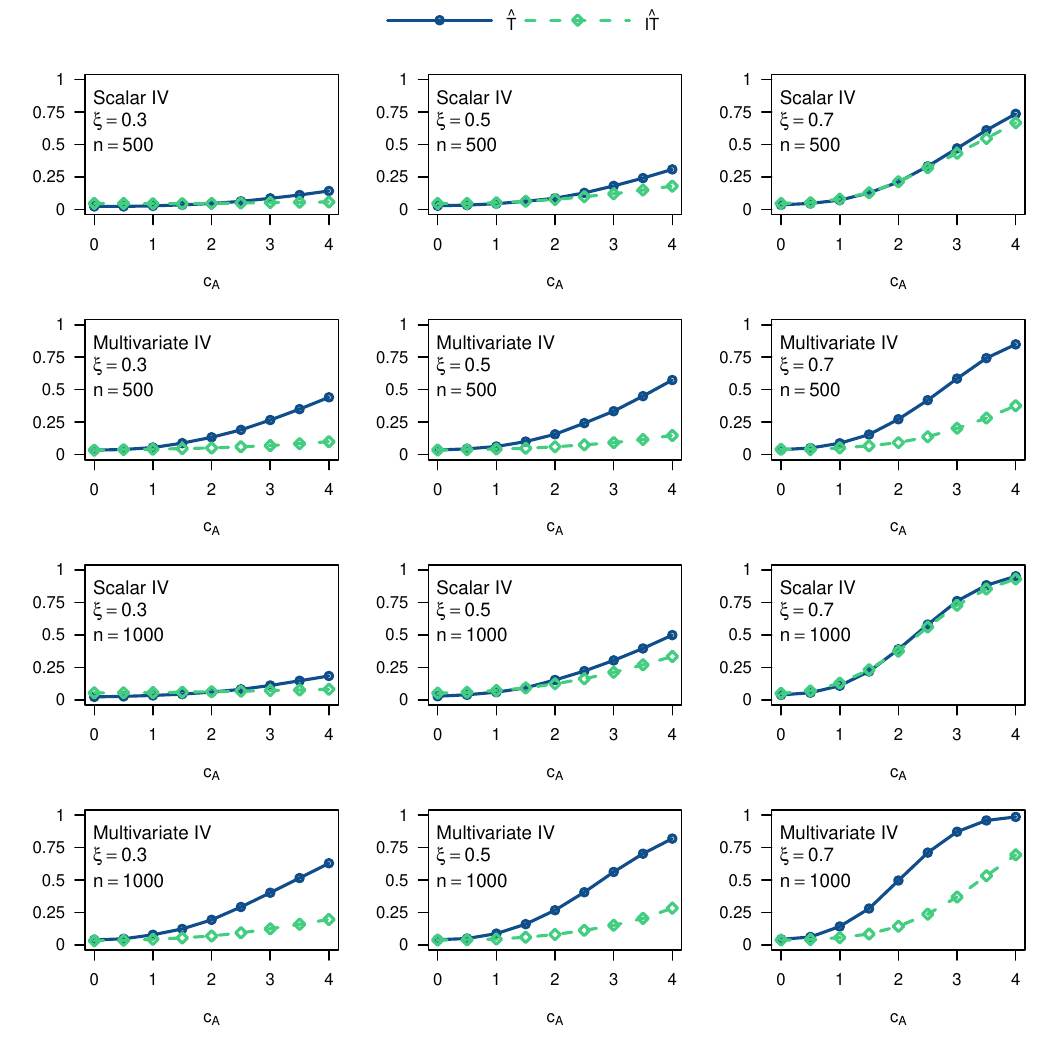}
    \vskip -.6cm
\caption{Testing parametric form---empirical power of our adaptive
tests $\widehat {\mathtt{T}}_{n}$ (solid circle lines) and of
$\widehat {\mathtt{IT}}_{n}$ (dashed square lines). First and third rows:
power comparisons in scalar IV case ($d_{w} =1$); second and fourth rows:
power comparisons in multivariate IV case ($d_{w}>1$). Design from Appendix~\protect\ref{sim:parII}. Instrument strength increases in $\xi $.}%
\label{fig:power:mult}
 \end{center}
\end{figure}

Figure~\ref{fig:power:mult} compares the empirical power of
$\widehat {\mathtt{IT}}_{n}$ and of $\widehat {\mathtt{T}}_{n}$, at the
$5\%$ nominal level, using the sample sizes $n=500$ (first and second rows)
and $n=1000$ (third and fourth rows). The finite-sample empirical power
curves of both tests increase with $\xi $ and sample size $n$. For the
scalar conditional instrument case, while our adaptive structural-space
test $\widehat {\mathtt{T}}_{n}$ is more powerful when
$\xi \in \{0.3, 0.5\}$ (weaker strength of instruments), the finite-sample
power curves of both tests are similar when $\xi =0.7$. For the multivariate
conditional instruments case, while the power of our adaptive structural-space
test $\widehat {\mathtt{T}}_{n}$ increases with larger dimension
$d_{w}$, the adaptive image-space test $\widehat {\mathtt{IT}}_{n}$ suffers
from larger $d_{w}$ and has lower power. The same patterns are also present
when we compare the two tests using size-adjusted empirical power curves
(see our arxiv:2006.09587v3 version, Appendix~C.3).

\section{Proofs of Inference Results in Section~4.3}
\label{appendix:proofs:adapt:st}

\begin{proof}[Proof of Corollary~4.1]
Proof of (4.8). We observe
\begin{align*}
\limsup_{n\to \infty}\sup_{h\in \mathcal H_{0}}\mathrm{P}_{h}
\bigl( h\notin \mathcal C_{n}(\alpha ) \bigr) =\limsup
_{n\to \infty}\sup_{h
\in \mathcal H_{0}}\mathrm{P}_{h}
\biggl( \max_{J\in
\widehat{\mathcal I}_{n}} \frac{n\widehat{D}_{J}(h)}{\widehat
\eta _{J}(\alpha ) \widehat{V}_{J}}> 1 \biggr)\leq \alpha ,
\end{align*}
where the last inequality is due to Step~1 and Step~3 of the proof of Theorem~4.1. Indeed, in that proof, we can replace $\mathrm{P}_{h_0}$ by $\sup_{h\in \mathcal H_{0}}\mathrm{P}_{h}$ by adopting the uniform moment conditions imposed in Assumption 2(i). 

Proof of (4.9). Let $J^{*}$ be as in Step~2 of the proof
of Theorem~4.1. We observe uniformly for
$h\in \mathcal H_{1}(\delta ^{\circ}\text{\textsf r}_{n})$ that
\begin{align*}
\mathrm{P}_{h} \bigl( h\notin \mathcal C_{n}(\alpha )
\bigr)&= \mathrm{P}_{h} \biggl( \max_{J\in \widehat{\mathcal I}_{n}} \frac{n
\widehat{D}_{J}(h)}{\widehat \eta _{J}(\alpha )
\widehat{V}_{J}}> 1 \biggr)=1-\mathrm{P}_{h} \biggl( \max
_{J\in \widehat{\mathcal I}_{n}} \frac{n\widehat{D}_{J}(h)}{\widehat \eta
_{J}(\alpha ) \widehat{V}_{J}} \leq 1 \biggr) = 1-o(1),
\end{align*}
where the last equation is due to Step~2 and Step~3 of the proof of Theorem~4.1.
\end{proof}

\begin{proof}[Proof of Corollary~4.2]
For any $h\in \mathcal H_{0}$, we analyze the diameter of the confidence
set $\mathcal C_{n}(\alpha )$ under $\mathrm{P}_{h}$. Lemma~B.8 implies
$\sup_{h\in \mathcal H_{0}}\mathrm{P}_{h} (\widehat J_{\max}>
\overline J )=o(1)$ and hence, it is sufficient to consider the deterministic
index set $\mathcal I_{n}$ given in (4.2). For all
$h_{1}\in \mathcal C_{n}(\alpha )\subset \mathcal H_{0}$, it holds for
all $J\in \mathcal I_{n}$ by using the definition of the projection
$Q_{J}$ given in (B.1):
%
\begin{align}
\llVert h-h_{1} \rrVert _{L^{2}(X)}&\leq \bigl\llVert
Q_{J}\Pi _{J} (h-h_{1}) \bigr\rrVert
_{L^{2}(X)}+ \llVert \Pi _{J} h - h \rrVert _{L^{2}(X)}+
\llVert \Pi _{J} h_{1} - h_{1} \rrVert
_{L^{2}(X)}
\nonumber
\\
&\leq \bigl\llVert Q_{J} (h-h_{1}) \bigr\rrVert
_{L^{2}(X)}+O\bigl(J^{-p/d_{x}}\bigr) \label{ineq:diam} , 
\end{align}
due to the triangular inequality and the sieve approximation bound from
the smoothness restrictions imposed on $\mathcal H$. By Theorem~B.1, we have
\begin{align*}
\bigl\llvert \bigl\llVert Q_{J} (h-h_{1}) \bigr\rrVert
_{L^{2}(X)}^{2}-\widehat D_{J}(h_{1}) \bigr
\rrvert & \lesssim n^{-1}s_{J}^{-2}
\sqrt{J}+n^{-1/2}s_{J}^{-1} \bigl( \llVert
h-h_{1} \rrVert _{L^{2}(X)}+J^{-p/d_{x}} \bigr)
\end{align*}
wpa1 uniformly for $h\in \mathcal H_{0}$. Consequently, the definition
of the confidence set $\mathcal C_{n}(\alpha )$ with
$h_{1}\in \mathcal C_{n}(\alpha )$ gives for all
$J\in \mathcal I_{n}$:
\begin{align*}
\bigl\llVert Q_{J} (h-h_{1}) \bigr\rrVert
_{L^{2}(X)}^{2} &\lesssim n^{-1}\widehat \eta
_{J}( \alpha ) \widehat{V}_{J}+n^{-1/2}s_{J}^{-1}
\bigl( \llVert h-h_{1} \rrVert _{L^{2}(X)}+J^{-p/d_{x}}
\bigr)+n^{-1}s_{J}^{-2} \sqrt{J}
\\
&\lesssim n^{-1}\sqrt{\log \log n} s_{J}^{-2}
\sqrt{J}+n^{-1/2}s_{J}^{-1} \bigl( \llVert
h-h_{1} \rrVert _{L^{2}(X)}+J^{-p/d_{x}} \bigr)
\end{align*}
wpa1 uniformly for $h\in \mathcal H_{0}$ by using Lemmas
B.2, B.5, and B.4(ii).
Consequently, inequality \eqref{ineq:diam} yields for all
$J\in \mathcal I_{n}$:
\begin{align*}
\llVert h-h_{1} \rrVert _{L^{2}(X)}^{2}&\lesssim
\frac{n^{-1}\sqrt{\log \log n} s_{J}^{-2}
\sqrt{J}+J^{-2p/d_{x}}}{1-C_{B}n^{-1/2}s_{J}^{-1}}
\end{align*}
wpa1 uniformly for $h\in \mathcal H_{0}$. Now using that
$n^{-1/2}s_{J}^{-1}=o(1)$ for all $J\in \mathcal I_{n}$, by Assumption~4(i) we obtain
$\|h-h_{1}\|_{L^{2}(X)}\lesssim n^{-1/2}(\log \log n)^{1/4} s_{J}^{-1}
J^{1/4}+J^{-p/d_{x}}$ with probability approaching 1 uniformly for
$h\in \mathcal H_{0}$. Also, by Assumption~3 we have $s_{J}^{-1}\lesssim \nu_{J}^{-1}$. We may choose $J=cJ^\circ \in \mathcal I_{n}$ for
some constant $c>0$ and $n$ sufficiently large and hence, the result follows.
\end{proof}

\section{Technical Results}
\label{sec:tech:appendix}

Below, $\lambda _{\max}(\cdot )$ denotes the maximal eigenvalue of a matrix.

\begin{lemma}%
\label{lemma:est:matrices}
Let Assumptions 1(ii)--(iii) and 2 hold. Then, wpa1
uniformly for $h\in \mathcal H$:
\begin{align*}
&\frac{1}{n(n-1)}\sum_{i\neq i'} \bigl(Y_{i}-
\Pi _{\mathcal H_0}h(X_{i}) \bigr) \bigl(Y_{i'}-\Pi
_{\mathcal H_0}h(X_{i'}) \bigr)b^{K}(W_{i})'
\bigl(A' A-\widehat A'\widehat A \bigr)b^{K}(W_{i'})
\\
&\quad \lesssim n^{-1} V_{J}+n^{-1/2}s_{J}^{-1}
\bigl( \llVert h-\Pi _{\mathcal H_{0}} h \rrVert _{L^{2}(X)}+J^{-p/d_{x}}
\bigr).
\end{align*}
\end{lemma}
\begin{proof}
Let $\Pi _{\mathcal H_0}^{\perp }:=\text{id}-\Pi _{\mathcal H_0}$. We establish
an upper bound of
\begin{align*}
&\frac{1}{n^{2}}\sum_{i,i'}
\bigl(Y_{i}-\Pi _{\mathcal H_0}h(X_{i}) \bigr)
\bigl(Y_{i'}-\Pi _{\mathcal H_0}h(X_{i'})
\bigr)b^{K}(W_{i})' \bigl(A' A-
\widehat A' \widehat A \bigr)b^{K}(W_{i'})
\\
&\quad =\Evtex \bigl[\Pi _{\mathcal H_0}^{\perp }h(X)b^{K}(W)
\bigr]' \bigl(A'A-\widehat A' \widehat A
\bigr)\Evtex \bigl[\Pi _{\mathcal H_0}^{\perp }h(X)b^{K}(W)\bigr]
\\
&\qquad {}+2 \biggl(\frac{1}{n}\sum_{i}
\bigl(Y_{i}-\Pi _{\mathcal H_0}h(X_{i})
\bigr)b^{K}(W_{i})-\Evtex \bigl[\Pi _{\mathcal H_0}^{\perp }h(X)b^{K}(W)
\bigr] \biggr)'
\\
&\qquad {}\times \bigl(A'A-\widehat A'
\widehat A \bigr)\Evtex \bigl[\Pi _{\mathcal H_0}^{
\perp }h(X)b^{K}(W)
\bigr]
\\
&\qquad {}+ \biggl(\frac{1}{n}\sum_{i}
\bigl(Y_{i}-\Pi _{\mathcal H_0}h(X_{i})
\bigr)b^{K}(W_{i})'-\Evtex \bigl[\Pi
_{\mathcal H_0}^{\perp }h(X)b^{K}(W)\bigr]' \biggr)
\bigl(A'A-\widehat A' \widehat A \bigr)
\\
&\qquad {} \times \biggl(\frac{1}{n}\sum_{i}
\bigl(Y_{i}- \Pi _{\mathcal H_0}h(X_{i})
\bigr)b^{K}(W_{i})'-\Evtex \bigl[\Pi
_{\mathcal H_0}^{
\perp }h(X)b^{K}(W)\bigr]' \biggr)
\end{align*}
uniformly for $h\in \mathcal H$. It is sufficient to bound the first summand
on the right-hand side. We make use of the decomposition
\begin{align*}
&\Evtex \bigl[\Pi _{\mathcal H_0}^{\perp }h(X)b^{K}(W)
\bigr]' \bigl(A'A-\widehat A' \widehat A
\bigr)\Evtex \bigl[\Pi _{\mathcal H_0}^{\perp }h(X)b^{K}(W)\bigr]
\\
&\quad = 2\Evtex \bigl[\Pi _{\mathcal H_0}^{\perp }h(X)b^{K}(W)
\bigr]'A'(A-\widehat A)\Evtex \bigl[ \Pi
_{\mathcal H_0}^{\perp }h(X)b^{K}(W)\bigr]
\\
&\qquad {} -\Evtex \bigl[\Pi _{\mathcal H_0}^{\perp }h(X)b^{K}(W)
\bigr]'(A-\widehat A)' (A- \widehat A)\Evtex \bigl[\Pi
_{\mathcal H_0}^{\perp }h(X)b^{K}(W)\bigr]=:2
T_{1}-T_{2}.
\end{align*}
We first consider the term $T_{1}$ as follows:
%
\begin{align}
\label{T1:dec} T_{1} ={}&\Evtex \bigl[\Pi _{\mathcal H_0}^{\perp }h(X)b^{K}(W)
\bigr]'A'(\widehat A-A) \Evtex \bigl[\Pi _{J}
\Pi _{\mathcal H_0}^{\perp }h(X) b^{K}(W)\bigr]
\nonumber
\\
&{} +\Evtex \bigl[\Pi _{\mathcal H_0}^{\perp }h(X)b^{K}(W)
\bigr]'A'(\widehat A-A) \Evtex \bigl[\bigl(\Pi
_{\mathcal H_0}^{\perp }h-\Pi _{J}\Pi _{\mathcal H_0}^{
\perp }h
\bigr) (X)b^{K}(W)\bigr]\nonumber
\\
:={}&A_{1}+A_{2}. 
\end{align}
We now consider the term $A_{1}$. Recall that
$Q_{J} \Pi _{J}h=\Pi _{J}h$ and
$\widehat S G^{-1}\langle h,\psi ^{J}\rangle _{L^{2}(X)}=n^{-1}\sum_{i}
\Pi _{J} h(X_{i})b^{K}(W_{i})$. We have
\begin{align*}
& \bigl(\bigl(G_{b}^{-1/2}S\bigr)_{l}^{-}
\Evtex \bigl[\Pi _{\mathcal H_0}^{\perp }h(X) \widetilde{b}^{K}(W)
\bigr] \bigr)'G \bigl(\bigl(G_{b}^{-1/2}S
\bigr)_{l}^{-}-\bigl(\widehat G_{b}^{-1/2}
\widehat S\bigr)_{l}^{-}\widehat G_{b}^{-1/2}G_{b}^{1/2}
\bigr)
\\
&\qquad {}\times \Evtex \bigl[\Pi _{J} \Pi _{\mathcal H_0}^{\perp }h(X)
\widetilde{b}^{K}(W)\bigr]
\\
&\quad = \bigl\langle Q_{J} \Pi _{\mathcal H_0}^{\perp }h, \Pi
_{J}\Pi _{
\mathcal H_0}^{\perp }h-\bigl(\psi ^{J}
\bigr)'\bigl(\widehat G_{b}^{-1/2}\widehat S
\bigr)_{l}^{-} \widehat G_{b}^{-1/2}\Evtex
\bigl[\Pi _{\mathcal H_0}^{\perp }h(X)b^{K}(W)\bigr] \bigr
\rangle _{L^{2}(X)}
\\
&\quad =\bigl\langle Q_{J} \Pi _{\mathcal H_0}^{\perp }h,\psi
^{J}\bigr\rangle _{L^{2}(X)}'\bigl( \widehat
G_{b}^{-1/2}\widehat S\bigr)_{l}^{-}
\widehat G_{b}^{-1/2}
\\
&\qquad {}\times \biggl( \frac{1}{n}\sum
_{i} \Pi _{J}\Pi _{\mathcal H_0}^{\perp }h(X_{i})b^{K}(W_{i})-
\Evtex \bigl[\Pi _{\mathcal H_0}^{\perp }h(X)b^{K}(W)\bigr]
\biggr)
\\
&\quad =\bigl\langle Q_{J} \Pi _{\mathcal H_0}^{\perp }h,\psi
^{J}\bigr\rangle _{L^{2}(X)}' \bigl(G_{b}^{-1/2}S
\bigr)^{-}_{l}
\\
&\qquad {}\times  \biggl(\frac{1}{n}\sum
_{i} \Pi _{J}\Pi _{
\mathcal H_0}^{\perp }h(X_{i})
\widetilde{b}^{K}(W_{i})-\Evtex \bigl[\Pi _{J} \Pi
_{\mathcal H_0}^{\perp }h(X)\widetilde{b}^{K}(W)\bigr] \biggr)
\\
&\qquad {} +\bigl\langle Q_{J} \Pi _{\mathcal H_0}^{\perp }h,\psi
^{J}\bigr\rangle _{L^{2}(X)}'\bigl(G_{b}^{-1/2}S
\bigr)^{-}_{l} G_{b}^{-1/2}
S' \bigl(\bigl(\widehat G_{b}^{-1/2}\widehat S
\bigr)_{l}^{-} \widehat G_{b}^{-1/2}G_{b}^{1/2}-
\bigl(G_{b}^{-1/2}S\bigr)^{-}_{l} \bigr)
\\
&\qquad {} \times \biggl(\frac{1}{n}\sum_{i} \Pi
_{J}\Pi _{\mathcal H_0}^{
\perp }h(X_{i})\widetilde{b}^{K}(W_{i})-\Evtex \bigl[\Pi _{J}\Pi
_{\mathcal H_0}^{
\perp }h(X)\widetilde{b}^{K}(W)\bigr]
\biggr)=:A_{11}+A_{12},
\end{align*}%
 where we used the notation
$\widetilde{b}^{K}(\cdot )=G_{b}^{-1/2} b^{K}(\cdot )$. Consider
$A_{11}$; we have
\begin{align*}
\Evtex \llvert A_{11} \rrvert ^{2}\leq{}& n^{-1}
\Evtex \bigl\llvert \bigl\langle Q_{J} \Pi _{\mathcal H_0}^{
\perp }h,
\psi ^{J}\bigr\rangle _{L^{2}(X)}'
\bigl(G_{b}^{-1/2}S\bigr)^{-}_{l} \Pi
_{J} \Pi _{\mathcal H_0}^{\perp }h(X)\widetilde{b}^{K}(W) \bigr\rrvert ^{2}
\\
\leq {}&2n^{-1} \bigl\llVert \bigl\langle Q_{J} \Pi
_{\mathcal H_0}^{\perp }h,\psi ^{J} \bigr\rangle
_{L^{2}(X)}' \bigl(G_{b}^{-1/2}S
\bigr)^{-}_{l} \bigr\rrVert ^{2} \bigl\llVert \Pi
_{K}T \Pi _{\mathcal H_0}^{\perp }h \bigr\rrVert
_{L^{2}(W)}^{2}
\\
& {}+2n^{-1} \bigl\llVert \bigl\langle Q_{J} \Pi
_{\mathcal H_0}^{\perp }h, \psi ^{J}\bigr\rangle
_{L^{2}(X)}' \bigl(G_{b}^{-1/2}S
\bigr)^{-}_{l} \bigr\rrVert ^{2} \bigl\llVert \Pi
_{K}T\bigl(\Pi _{\mathcal H_0}^{\perp }h-\Pi _{J}\Pi
_{\mathcal H_0}^{
\perp }h\bigr) \bigr\rrVert _{L^{2}(W)}^{2}
\\
\lesssim {}&n^{-1} \bigl\llVert \bigl\langle Q_{J} \Pi
_{\mathcal H_0}^{\perp }h, \psi ^{J}\bigr\rangle
_{L^{2}(X)}' \bigl(G_{b}^{-1/2}S
\bigr)^{-}_{l} \bigr\rrVert ^{2},
\end{align*}
where the second bound is due to the Cauchy--Schwarz inequality and the
third bound is due to Assumption~2(iv). Consider
$A_{12}$; we infer from \citet [Lemma F.10(c)]{ChenChristensen2017} and
Assumption~2(ii) that
\begin{align*}
\llvert A_{12} \rrvert ^{2}\leq {}& \bigl\llVert \bigl\langle
Q_{J} \Pi _{\mathcal H_0}^{\perp }h, \psi ^{J}\bigr
\rangle _{L^{2}(X)}'\bigl(G_{b}^{-1/2}S
\bigr)^{-}_{l} \bigr\rrVert ^{2} \bigl\llVert
G_{b}^{-1/2} S' \bigl(\bigl(\widehat
G_{b}^{-1/2}\widehat S\bigr)_{l}^{-}
\widehat G_{b}^{-1/2}G_{b}^{1/2}-
\bigl(G_{b}^{-1/2}S\bigr)^{-}_{l} \bigr)
\bigr\rrVert ^{2}
\\
& {}\times \biggl\llVert \frac{1}{n}\sum_{i} \Pi
_{J}\Pi _{\mathcal H_0}^{
\perp }h(X_{i})b^{K}(W_{i})
-\Evtex \bigl[\Pi _{J}\Pi _{\mathcal H_0}^{\perp }h(X)
b^{K}(W)\bigr] \biggr\rrVert ^{2}
\\
\lesssim{}& \bigl\llVert \bigl\langle Q_{J} \Pi _{\mathcal H_0}^{\perp }h,
\psi ^{J} \bigr\rangle _{L^{2}(X)}'
\bigl(G_{b}^{-1/2}S\bigr)^{-}_{l} \bigr
\rrVert ^{2}\times n^{-1}s_{J}^{-2} \zeta
_{J}^{2}(\log J)\times n^{-1}\zeta
_{J}^{2}
\\
\lesssim{}& n^{-1} \bigl\llVert \bigl\langle Q_{J} \Pi
_{\mathcal H_0}^{\perp }h, \psi ^{J}\bigr\rangle
_{L^{2}(X)}'\bigl(G_{b}^{-1/2}S
\bigr)^{-}_{l} \bigr\rrVert ^{2}
\end{align*}
wpa1 uniformly for $h\in \mathcal H$. Next, we consider the term
$A_{2}$ of \eqref{T1:dec}. Following the upper bound of $A_{12}$, we obtain
wpa1 uniformly for $h\in \mathcal H$:
\begin{align*}
& \bigl\llvert \Evtex \bigl[\Pi _{\mathcal H_0}^{\perp }h(X)b^{K}(W)
\bigr]'A' G (\widehat A-A) \Evtex \bigl[\bigl(h-\Pi
_{\mathcal H_0}h-\Pi _{J}\Pi _{\mathcal H_0}^{\perp }h\bigr) (X)
b^{K}(W)\bigr] \bigr\rrvert ^{2}
\\
&\quad \leq \bigl\llVert \bigl\langle Q_{J} \Pi _{\mathcal H_0}^{\perp }h,
\psi ^{J} \bigr\rangle _{L^{2}(X)}'
\bigl(G_{b}^{-1/2}S\bigr)^{-}_{l} \bigr
\rrVert ^{2} \bigl\llVert G_{b}^{-1/2} S \bigl(\bigl(
\widehat G_{b}^{-1/2}\widehat S\bigr)_{l}^{-}
\widehat G_{b}^{-1/2}G_{b}^{1/2}-
\bigl(G_{b}^{-1/2}S\bigr)^{-}_{l} \bigr)
\bigr\rrVert ^{2}
\\
&\qquad {} \times \bigl\llVert \bigl\langle T\bigl(\Pi _{\mathcal H_0}^{\perp }h-
\Pi _{J} \Pi _{\mathcal H_0}^{\perp }h\bigr), \widetilde{b}^{K}\bigr\rangle _{L^{2}(W)} \bigr\rrVert ^{2}
\\
&\quad \lesssim \bigl\llVert \bigl\langle Q_{J} \Pi _{\mathcal H_0}^{\perp }h,
\psi ^{J} \bigr\rangle _{L^{2}(X)}'
\bigl(G_{b}^{-1/2}S\bigr)^{-}_{l} \bigr
\rrVert ^{2} \bigl\llVert \Pi _{K}T\bigl(\Pi
_{
\mathcal H_0}^{\perp }h-\Pi _{J}\Pi _{\mathcal H_0}^{\perp }h
\bigr) \bigr\rrVert _{L^{2}(W)}^{2} \times n^{-1}s_{J}^{-2}
\zeta _{J}^{2}(\log J)
\\
&\quad \lesssim n^{-1} \bigl\llVert \bigl\langle Q_{J} \Pi
_{\mathcal H_0}^{\perp }h, \psi ^{J}\bigr\rangle
_{L^{2}(X)}'\bigl(G_{b}^{-1/2}S
\bigr)^{-}_{l} \bigr\rrVert ^{2},
\end{align*}
using that
$s_{J}^{-2}\|\Pi _{K} T(\Pi _{\mathcal H_0}^{\perp }h-\Pi _{J}\Pi _{
\mathcal H_0}^{\perp }h)\|_{L^{2}(W)}^{2}\lesssim \|\Pi _{\mathcal H_0}^{
\perp }h-\Pi _{J}\Pi _{\mathcal H_0}^{\perp }h\|_{L^{2}(X)}^{2}$ by Assumption~2(iv) and
$\zeta _{J}^{2}(\log J) \|h-\Pi _{J}h\|_{L^{2}(X)}^{2}=O(1)$ by Assumption~2(iii). Finally, we obtain
$|T_{1}|\leq |A_{1}|+|A_{2}|\lesssim n^{-1/2}\|\langle Q_{J} \Pi _{
\mathcal H_0}^{\perp }h,\psi ^{J}\rangle _{L^{2}(X)}' (G_{b}^{-1/2}S)^{-}_{l}
\|$ wpa1 uniformly for $h\in \mathcal H$.

We next consider the term $T_{2}$ using the decomposition
\begin{align*}
T_{2}\leq{}& 2\Evtex \bigl[\Pi _{J}\Pi _{\mathcal H_0}^{\perp }h(X)b^{K}(W)
\bigr]'( \widehat A-A)' G (\widehat A-A)\Evtex \bigl[\Pi
_{J}\Pi _{\mathcal H_0}^{
\perp }h(X) b^{K}(W)\bigr]
\\
& {}+2\Evtex \bigl[\Pi _{J}^{\perp}\Pi _{\mathcal H_0}^{\perp }h(X)b^{K}(W)
\bigr]'( \widehat A-A)' G (\widehat A-A)\Evtex \bigl[\Pi
_{J}^{\perp}\Pi _{\mathcal H_0}^{
\perp }h(X)
b^{K}(W)\bigr]
\\
=:{}&2T_{21}+2T_{22},
\end{align*}
where $\Pi _{J}^{\perp}=\text{id}-\Pi _{J}$ is the projection. We first
bound $T_{21}$ using Assumption~2(ii):
\begin{align*}
T_{21}\leq{}& \biggl\llvert \bigl\langle \Pi _{J} \Pi
_{\mathcal H_0}^{\perp }h , \psi ^{J}\bigr\rangle
_{L^{2}(X)}' \bigl(\bigl(\widehat G_{b}^{-1/2}
\widehat S\bigr)_{l}^{-} \widehat G_{b}^{-1/2}S-
I_{J} \bigr)'\bigl(\widehat G_{b}^{-1/2}
\widehat S\bigr)_{l}^{-} \widehat G_{b}^{-1/2}
\\
& {}\times \biggl(\frac{1}{n}\sum_{i} \Pi
_{J}\Pi _{
\mathcal H_0}^{\perp }h(X_{i})\widetilde{b}^{K}(W_{i})-\Evtex \bigl[\Pi _{J} \Pi
_{\mathcal H_0}^{\perp }h(X)\widetilde{b}^{K}(W)\bigr] \biggr)
\biggr\rrvert
\\
\leq{}& \bigl\llVert \bigl\langle \Pi _{J} \Pi _{\mathcal H_0}^{\perp }h,
\psi ^{J} \bigr\rangle _{L^{2}(X)} \bigr\rrVert \llVert S-\widehat S
\rrVert \bigl\llVert \bigl(\widehat G_{b}^{-1/2} \widehat S
\bigr)_{l}^{-}\widehat G_{b}^{-1/2} \bigr
\rrVert ^{2}
\\
&{} \times \biggl\llVert \frac{1}{n}\sum_{i} \Pi
_{J}\Pi _{
\mathcal H_0}^{\perp }h(X_{i})\widetilde{b}^{K}(W_{i})-\Evtex \bigl[\Pi _{J} \Pi
_{\mathcal H_0}^{\perp }h(X)\widetilde{b}^{K}(W)\bigr] \biggr
\rrVert
\\
\lesssim {}&\bigl\llVert \Pi _{J} (h-\Pi _{\mathcal H_0}h) \bigr\rrVert
_{L^{2}(X)} n^{-1/2}s_{J}^{-2} \zeta
_{J}\sqrt{\log J}\times n^{-1/2}s_{J}^{-1}
\zeta _{J}
\\
\lesssim{}& n^{-1/2}s_{J}^{-1} \bigl
\llVert \Pi _{J} (h-\Pi _{\mathcal H_0}h) \bigr\rrVert _{L^{2}(X)}
\end{align*}
wpa1 uniformly for $h\in \mathcal H$. For $T_{22}$, we note that uniformly
in $h\in \mathcal H$,
$\|\Evtex [\Pi _{J}^{\perp}\Pi _{\mathcal H_0}^{\perp }h(X)\*\widetilde{b}^{K}(W)]
\|=\|\Pi _{K} T(\Pi _{J} \Pi _{\mathcal H_0}^{\perp }h-\Pi _{
\mathcal H_0}^{\perp }h)\|_{L^{2}(W)}\lesssim s_{J} J^{-p/d_{x}}$ by Assumption~2(iv). Thus, following the upper bound derivations of
$T_{21}$, we obtain $T_{22}\lesssim n^{-1/2}s_{J}^{-1} J^{-p/d_{x}}$ wpa1
uniformly for $h\in \mathcal H$.
\end{proof}

\begin{lemma}%
\label{lemma:bound:var}
Under Assumption~2(i), it holds for
$\widetilde h\in \{h,\Pi _{\mathcal H_0}h\}$ that
\begin{align*}
\sup_{J\in \mathcal I_{n}}\sup_{h\in \mathcal H}\lambda _{\max}
\bigl(\Evtex _{h} \bigl[\bigl(Y-\widetilde h(X)\bigr)^{2}
\widetilde{b}^{K(J)}(W) \widetilde{b}^{K(J)}(W)' \bigr]
\bigr)\leq \overline\sigma ^{2}<\infty .
\end{align*}
\end{lemma}
\begin{proof}
We have for any $\gamma \in \mathbb R^{K}$ where $K=K(J)$ that
\begin{align*}
&\gamma '\Evtex _{h} \bigl[\bigl(Y-\widetilde h(X)
\bigr)^{2}\widetilde{b}^{K}(W) \widetilde{b}^{K}(W)'
\bigr]\gamma
\\
&\quad \leq \Evtex \bigl[\Evtex _{h}\bigl[\bigl(Y-\widetilde
h(X)\bigr)^{2}|W\bigr] \bigl(\gamma '\widetilde{b}^{K}(W) \bigr)^{2} \bigr]
\\
&\quad \leq \overline\sigma ^{2}\Evtex \bigl[ \bigl(\gamma '
\widetilde{b}^{K}(W) \bigr)^{2} \bigr]= \overline\sigma
^{2}\gamma 'G_{b}^{-1/2}\Evtex
\bigl[b^{K}(W)b^{K}(W)' \bigr]G_{b}^{-1/2}
\gamma = \overline\sigma ^{2} \llVert \gamma \rrVert ^{2}
\end{align*}
uniformly for $h\in \mathcal H$ and $J\in \mathcal I_{n}$, where the second
inequality is due to Assumption~2(i).
\end{proof}

\begin{proof}[Proof of Theorem~B.1]
From the definition of $Q_{J}$ given in (B.1), we infer
\begin{align*}
\bigl\llVert Q_{J}(h-\Pi _{\mathcal H_0}h) \bigr\rrVert
_{L^{2}(X)}^{2} = \bigl\llVert A\Evtex _{h}\bigl[
\bigl(Y- \Pi _{\mathcal H_0}h(X)\bigr) b^{K}(W)\bigr] \bigr\rrVert
^{2}= \bigl\llVert \Evtex _{h}\bigl[U^{J}\bigr]
\bigr\rrVert ^{2}
\end{align*}
using the notation
$U_{i}^{J}=(Y_{i}-\Pi _{\mathcal H_0}h(X_{i})) A b^{K}(W_{i})$. The definition
of $\widehat D_{J}$ implies
\begin{flalign}
&\widehat{D}_{J}(\Pi _{\mathcal H_0}h)- \bigl\llVert Q_{J}(h-
\Pi _{\mathcal H_0}h) \bigr\rrVert _{L^{2}(X)}^{2}&\nonumber
\\
&\,\, =
\frac{1}{n(n-1)}\sum_{j=1}^{J}\sum
_{i\neq i'} \bigl( U_{ij}U_{i'j}- \Evtex
_{h}[U_{1j}]^{2} \bigr) \label{eq1:proof:upper} 
\\
&\,\,+\frac{1}{n(n-1)}\sum_{i\neq i'} \bigl(Y_{i}-
\Pi _{\mathcal H_0}h(X_{i}) \bigr) \bigl(Y_{i'}-\Pi
_{\mathcal H_0}h(X_{i'}) \bigr) b^{K}(W_{i})'
\bigl(A' A-\widehat A'\widehat A \bigr)b^{K}(W_{i'}).
\label{eq2:proof:upper} 
\end{flalign}%
Consider the summand in \eqref{eq1:proof:upper}; we observe
\begin{equation*}
\Biggl\llvert \sum_{j=1}^{J}\sum
_{i\neq i'} \bigl( U_{ij}U_{i'j}- \Evtex
_{h}[U_{1j}]^{2} \bigr) \Biggr\rrvert
^{2} =\sum_{j,j'=1}^{J}\sum
_{i\neq i'}\sum_{i''\neq i'''} \bigl(
U_{ij}U_{i'j}- \Evtex _{h}[U_{1j}]^{2}
\bigr) \bigl( U_{i''j'}U_{i'''j'}- \Evtex _{h}[U_{1j'}]^{2}
\bigr).
\end{equation*}
We distinguish three different cases. First: $i$, $i'$, $i''$, $i'''$ are all different;
second: either $i=i''$ or $i'=i'''$; or third: $i=i'$ and $i'=i'''$. We
thus calculate for each $j, j'\geq 1$ that
\begin{align*}
&\sum_{i\neq i'}\sum_{i''\neq i'''}
\bigl( U_{ij}U_{i'j}- \Evtex _{h}[U_{1j}]^{2}
\bigr) \bigl( U_{i''j'}U_{i'''j'}- \Evtex _{h}[U_{1j'}]^{2}
\bigr)
\\
&\quad =\sum_{i, i',i'', i'''\text{all different}} \bigl( U_{ij}U_{i'j}-
\Evtex _{h}[U_{1j}]^{2} \bigr) \bigl(
U_{i''j'}U_{i'''j'}- \Evtex _{h}[U_{1j'}]^{2}
\bigr)
\\
& \qquad {}+2\sum_{i\neq i'\neq i''} \bigl( U_{ij}U_{i'j}-
\Evtex _{h}[U_{1j}]^{2} \bigr) \bigl(
U_{i''j'}U_{i'j'}- \Evtex _{h}[U_{1j'}]^{2}
\bigr)
\\
&\qquad {} +\sum_{i\neq i'} \bigl( U_{ij}U_{i'j}-
\Evtex _{h}[U_{1j}]^{2} \bigr) \bigl(
U_{ij'}U_{i'j'}- \Evtex _{h}[U_{1j'}]^{2}
\bigr).
\end{align*}
The expectation of the first term on the right-hand side vanishes due to
independent observations and thus, we have
\begin{align*}
&\Evtex _{h} \Biggl\llvert \sum_{j=1}^{J}
\sum_{i\neq i'} \bigl( U_{ij}U_{i'j}-
\Evtex _{h}[U_{1j}]^{2} \bigr) \Biggr\rrvert
^{2}
\\
&\quad =2n(n-1) (n-2) \underbrace{\sum_{j,j'=1}^{J}
\Evtex _{h} \bigl[ \bigl( U_{1j}U_{2j}- \Evtex
_{h}[U_{1j}]^{2} \bigr) \bigl(
U_{3j'}U_{2j'}- \Evtex _{h}[U_{1j'}]^{2}
\bigr) \bigr]}_{I}
\\
&\qquad {}+n(n-1) \underbrace{\sum_{j,j'=1}^{J}\Evtex
_{h} \bigl[ \bigl( U_{1j}U_{2j}- \Evtex
_{h}[U_{1j}]^{2} \bigr) \bigl(
U_{1j'}U_{2j'}- \Evtex _{h}[U_{1j'}]^{2}
\bigr) \bigr]}_{\mathit{II}}.
\end{align*}
Now using $\|(G_{b}^{-1/2}SG^{-1/2})_{l}^{-}\|=s_{J}^{-1}$ together with
the notation $\widetilde\psi ^{J}=G^{-1/2}\psi ^{J}$, we obtain
%
\begin{align}
&\bigl\llVert \bigl\langle Q_{J}(h-\Pi _{\mathcal H_0}h),\psi
^{J}\bigr\rangle _{L^{2}(X)}'\bigl(G_{b}^{-1/2}S
\bigr)_{l}^{-} \bigr\rrVert\nonumber
 = \bigl\llVert \bigl\langle
Q_{J}(h-\Pi _{\mathcal H_0}h),\widetilde \psi ^{J} \bigr
\rangle _{L^{2}(X)}'\bigl(G_{b}^{-1/2}SG^{-1/2}
\bigr)_{l}^{-} \bigr\rrVert
\nonumber
\\
&\quad \leq s_{J}^{-1} \bigl\llVert \bigl\langle Q_{J}(h-
\Pi _{\mathcal H_0}h), \widetilde\psi ^{J}\bigr\rangle _{L^{2}(X)}
\bigr\rrVert
 \lesssim s_{J}^{-1} \bigl( \llVert h-\Pi
_{\mathcal H_0}h \rrVert _{L^{2}(X)}+J^{-p/d_{x}} \bigr) \label{upper:bound:typeII} ,
\end{align}
where the last equation is due to Lemma~B.1(i). Consequently,
we bound the term $I$ by
\begin{align*}
I &=\sum_{j,j'=1}^{J}\Evtex
_{h}[U_{1j}]\Evtex _{h}[U_{1j'}]\Cov
_{h}(U_{1j},U_{1j'})
\\
&= \Evtex _{h}
\bigl[U_{1}^{J}\bigr]'\Cov _{h}
\bigl(U_{1}^{J},U_{1}^{J}\bigr)\Evtex
_{h}\bigl[U_{1}^{J}\bigr]
\\
&\leq \lambda _{\max} \bigl(\Var _{h}\bigl(\bigl(Y-\Pi
_{\mathcal H_0}h(X)\bigr) \widetilde{b}^{K}(W)\bigr) \bigr) \bigl
\llVert \bigl(G_{b}^{-1/2} SG ^{-1/2}
\bigr)_{l}^{-}\Evtex _{h}\bigl[U_{1}^{J}
\bigr] \bigr\rrVert ^{2}
\\
&\leq \overline\sigma ^{2} \bigl\llVert \bigl(\bigl(G_{b}^{-1/2}
S\bigr)_{l}^{-}\Evtex _{h}\bigl[\bigl(Y- \Pi
_{\mathcal H_0}h(X)\bigr) \widetilde{b}^{K}(W)\bigr]
\bigr)'G \bigl(G_{b}^{-1/2} S\bigr)_{l}^{-}
\bigr\rrVert ^{2}
\\
&=\overline\sigma ^{2} \bigl\llVert \bigl\langle Q_{J} (h-
\Pi _{\mathcal H_0}h), \psi ^{J}\bigr\rangle _{L^{2}(X)}'
\bigl(G_{b}^{-1/2}S\bigr)_{l}^{-} \bigr
\rrVert ^{2}
\\
&\lesssim s_{J}^{-2} \bigl( \llVert h-
\Pi _{\mathcal H_0}h \rrVert _{L^{2}(X)}^{2}+J^{-2p/d_{x}}
\bigr),
\end{align*}
using
$U_{i}^{J}=(Y_{i}-\Pi _{\mathcal H_0}h(X_{i}))(G_{b}^{-1/2} SG^{-1/2})_{l}^{-}
\widetilde{b}^{K}(W_{i})$ and Lemma~\ref{lemma:bound:var}. For term
$\mathit{II}$, we observe
\begin{align*}
\mathit{II}= \sum_{j,j'=1}^{J}\Evtex
_{h}[U_{1j}U_{1j'}]^{2}- \Biggl(\sum
_{j=1}^{J} \Evtex _{h}[U_{1j}]^{2}
\Biggr)^{2}\leq \sum_{j,j'=1}^{J}
\Evtex _{h}[U_{1j}U_{1j'}]^{2}=V_{J}^{2}.
\end{align*}
Thus, the upper bounds derived for the terms $I$ and $\mathit{II}$ imply for all
$n\geq 2$:
%
\begin{equation}
\label{bound:var:V} \Evtex _{h} \Biggl\llvert \frac{1}{n(n-1)}\sum
_{j=1}^{J}\sum_{i\neq i'}
\bigl( U_{ij}U_{i'j}- \Evtex _{h}[U_{1j}]^{2}
\bigr) \Biggr\rrvert ^{2}\lesssim \frac{ \llVert h-\Pi _{\mathcal H_0}h
\rrVert _{L^{2}(X)}^{2}+J^{-2p/d_{x}}}{ns_{J}^{2}}+
\frac{V_{J}^{2}}{n^{2}}. 
\end{equation}
Thus, equality \eqref{eq2:proof:upper} implies the result by employing
Lemma~B.2 and Lemma~\ref{lemma:est:matrices}.
\end{proof}

\begin{proof}[Proof of Lemma~A.1]
By Lemma~\ref{lemma:est:matrices} and the decomposition (\ref{eq1:proof:upper})--(\ref{eq2:proof:upper}),
we obtain
\begin{align*}
\mathrm{P}_{h_{0}} \biggl(\frac{n\widehat{D}_{J}(h_{0})}{V_{J}}>
\eta _{J}( \alpha ) \biggr)&= \mathrm{P}_{h_{0}} \Biggl(
\frac{1}{V_{J}(n-1)}\sum_{j=1}^{J}
\sum_{i\neq i'} U_{ij}U_{i'j}>\eta
_{J}(\alpha ) \Biggr) +o(1).
\end{align*}
Using the martingale central limit theorem (see, e.g.,
\citet [Lemma A.3]{breunig2020}), we obtain
\begin{align*}
\mathrm{P}_{h_{0}} \Biggl(\frac{1}{ \sqrt 2V_{J}(n-1)}\sum
_{j=1}^{J} \sum_{i\neq i'}
U_{ij}U_{i'j}>z_{1-\alpha} \Biggr)=\alpha +o(1),
\end{align*}
where $z_{1-\alpha}$ denotes the $(1-\alpha )$-quantile of the standard
normal distribution. Further, Lemma~B.4(i) implies
$V_{J}/\widehat{V}_{J}=1$ wpa1 uniformly for $h\in \mathcal H$, and since
$\eta _{J}(\alpha )/\sqrt 2=\frac{q(\alpha , J)-J}{\sqrt{2J}}$ converges
to $z_{1-\alpha}$ as $J$ tends to infinity, the result follows.
\end{proof}

\begin{proof}[Proof of Lemma~B.1]
Proof of (i): Using the notation
$\widetilde{b}^{K}(\cdot ):=G_{b}^{-1/2} b^{K}(\cdot )$, we observe for
all $h\in \mathcal H$ that
\begin{align*}
&\bigl\llVert Q_{J} (h-\Pi _{\mathcal H_0}h) \bigr\rrVert
_{L^{2}(X)}
\\
&\quad = \bigl\llVert \bigl(G_{b}^{-1/2}SG^{-1/2}
\bigr)^{-}_{l} \Evtex \bigl[\widetilde{b}^{K}(W)
(h- \Pi _{\mathcal H_0}h) (X)\bigr] \bigr\rrVert
\\
&\quad \leq \bigl\llVert \bigl(G_{b}^{-1/2}SG^{-1/2}
\bigr)^{-}_{l}\Evtex \bigl[\widetilde{b}^{K}(W) (
\Pi _{J} h- \Pi _{J} \Pi _{\mathcal H_0}h) (X)\bigr] \bigr
\rrVert
\\
& \qquad {}+ \bigl\llVert \bigl(G_{b}^{-1/2}SG^{-1/2}
\bigr)^{-}_{l}\Evtex \bigl[\widetilde{b}^{K}(W)
\bigl((h- \Pi _{\mathcal H_0}h) (X)-(\Pi _{J} h- \Pi _{J} \Pi
_{\mathcal H_0}h) (X)\bigr)\bigr] \bigr\rrVert
\\
&\quad \leq \llVert \Pi _{J} h- \Pi _{J} \Pi _{\mathcal H_0}h
\rrVert _{L^{2}(X)}+s_{J}^{-1} \bigl\llVert \Pi
_{K} T\bigl((h-\Pi _{\mathcal H_0}h)-(\Pi _{J} h- \Pi
_{J} \Pi _{
\mathcal H_0}h)\bigr) \bigr\rrVert _{L^{2}(W)}
\\
&\quad \leq \llVert \Pi _{J} h- \Pi _{J} \Pi _{\mathcal H_0}h
\rrVert _{L^{2}(X)}+O \bigl(J^{-p/d_{x}} \bigr)
\end{align*}
by Assumption~2(iv).

Proof of (ii): We observe
$\|Q_{J}h-h\|_{L^{2}(X)}\leq \|Q_{J}(h-\Pi _{J} h)\|_{L^{2}(X)}+\|
\Pi _{J} h-h\|_{L^{2}(X)}$. The result thus follows by replacing
$ \Pi _{\mathcal H_0}h$ with $ \Pi _{J} h$ in the derivation of (i).
\end{proof}

\begin{proof}[Proof of Lemma~B.2]
For any $J\times J$ matrix $M$, it holds
$\|M\|_{F}\leq \sqrt J \|M\|$ and hence
\begin{align*}
V_{J}^{2}&= \bigl\llVert \bigl(G_{b}^{-1/2}SG^{-1/2}
\bigr)^{-}_{l}\Evtex _{h} \bigl[ \bigl(Y-h(X)
\bigr)^{2}\widetilde{b}^{K}(W)\widetilde{b}^{K}(W)'
\bigr] \bigl(G_{b}^{-1/2}SG^{-1/2}
\bigr)^{-}_{l} \bigr\rrVert _{F}^{2}
\\
&\leq J \bigl\llVert \bigl(G_{b}^{-1/2}SG^{-1/2}
\bigr)^{-}_{l} \bigr\rrVert ^{4} \bigl\llVert
\Evtex _{h} \bigl[ \bigl(Y-h(X)\bigr)^{2}\widetilde{b}^{K}(W)\widetilde{b}^{K}(W)' \bigr] \bigr
\rrVert ^{2}.
\end{align*}
The result now follows from
$\|(G_{b}^{-1/2}SG^{-1/2})^{-}_{l}\|=s_{J}^{-1}$ and Lemma~\ref{lemma:bound:var}.
\end{proof}

\begin{proof}[Proof of Lemma~B.3]
In the following, let $e_{j}$ be the unit vector with $1$ at the $j$th
position. Introduce a unitary matrix $Q$ such that, by Schur decomposition,
$Q' A G_{b} A'Q=\operatorname{diag}(s_{1}^{-2},\dots ,s_{J}^{-2})$. We make use
of the notation
$\widetilde U_{i}^{J}=(Y_{i}-h(X_{i}))Q'Ab^{K}(W_{i})$. Now, since the
Frobenius norm is invariant under unitary matrix multiplication, we have
\begin{align*}
V_{J}^{2}&=\sum_{j,j'=1}^{J}
\Evtex _{h}[\widetilde U_{1j}\widetilde U_{1j'}]^{2}
\geq \sum_{j=1}^{J}\Evtex _{h}
\bigl[\widetilde U_{1j}^{2}\bigr]^{2} = \sum
_{j=1}^{J} \bigl(\Evtex _{h} \bigl
\llvert \bigl(Y-h(X)\bigr)e_{j}' Q'A
b^{K}(W) \bigr\rrvert ^{2} \bigr)^{2}.
\end{align*}
Consequently, using the lower bound
$\inf_{w\in \mathcal W}\inf_{h\in \mathcal H}\Evtex _{h}[(Y-h(X))^{2}|W=w]
\geq \underline\sigma ^{2}$ by Assumption~1(i), we obtain uniformly
for $h\in \mathcal H$:
\begin{align*}
V_{J}^{2}&\geq \underline\sigma ^{4} \sum
_{j=1}^{J} \bigl(\Evtex \bigl[e_{j}'
Q' A b^{K}(W)b^{K}(W)'A'Qe_{j}
\bigr] \bigr)^{2}
\\
&= \underline\sigma ^{4} \sum
_{j=1}^{J} \bigl(e_{j}'
Q'A G_{b} A'Qe_{j}
\bigr)^{2}
\\
&= \underline\sigma ^{4} \sum_{j=1}^{J}
\bigl(e_{j}' \operatorname{diag}\bigl(s_{1}^{-2},
\dots ,s_{J}^{-2}\bigr) e_{j}
\bigr)^{2}\geq \underline\sigma ^{4} \sum
_{j=1}^{J} s_{j}^{-4},
\end{align*}
which proves the result.
\end{proof}

Recall the definition
$\mathcal C_{h}=\max_{\mathsf e\in \mathcal S^{K^\circ }}\int _{0}^{1}
  (1+\log N_{[]}  (\epsilon \|F_{h,\mathsf e}\|_{L^{2}(Z)},
\mathcal F_{h,\mathsf e}, L^{2}(Z)  )  )^{1/2}d\epsilon $.
%
\begin{lemma}%
\label{lemma:quad:fctl:est_h}
Let Assumptions 1(ii)--(iii), 2(i),
4(i)(iii), and 5(ii) hold. Then, for
$J=J^\circ $, we have wpa1 uniformly for
$h\in \mathcal H_{1}(\delta ^{\circ}\text{\textsf r}_{n})$:
\begin{align*}
&\biggl\llvert \frac{1}{n(n-1)}\sum_{i\neq i'}U_{i}
\bigl(\widehat h_{J}^{\text{\textsc r}}\bigr) U_{i'}\bigl(\widehat
h_{J}^{\text{\textsc r}}\bigr) a_{J,i i'}-\Evtex _{h}
\bigl[U_{i}\bigl( \widehat h_{J}^{\text{\textsc r}}\bigr)
U_{i'}\bigl(\widehat h_{J}^{\text{\textsc r}}\bigr)a_{J,i i'}
\bigr] \biggr\rrvert
\\
&\quad \lesssim n^{-1/2}s_{J}^{-1} \mathcal
C_{h} \bigl( \llVert h-\mathcal H_{0} \rrVert
_{L^{2}(X)}+J^{-p/d_{x}} \bigr) +n^{-1} s_{J}^{-2}
\sqrt{J},
\end{align*}
where $U_{i}(\phi )=Y_{i}-\phi (X_{i})$ and
$a_{J,i i'}=b^{K}(W_{i})'A' Ab^{K}(W_{i'})$.
\end{lemma}
\begin{proof}
For simplicity of notation, we write $J$ instead of $J^\circ $ throughout
the proof. We observe for all
$h\in \mathcal H_{1}(\delta ^{\circ}\text{\textsf r}_{n})$ that
\begin{align*}
&\frac{1}{n(n-1)}\sum_{i\neq i'}U_{i}\bigl(
\widehat h_{J}^{\text{\textsc r}}\bigr) U_{i'}\bigl( \widehat
h_{J}^{\text{\textsc r}}\bigr) a_{J,i i'}-\Evtex _{h}
\bigl[U_{i}\bigl(\widehat h_{J}^{
\text{\textsc r}}\bigr)
U_{i'}\bigl(\widehat h_{J}^{\text{\textsc r}}\bigr)a_{J,i i'}
\bigr]
\\
&\quad =\frac{1}{n(n-1)}\sum_{i\neq i'}U_{i}(\Pi
_{\mathcal H_{0}} h) U_{i'}( \Pi _{\mathcal H_{0}} h) a_{J,i i'}-
\Evtex _{h} \bigl[U_{i}(\Pi _{
\mathcal H_{0}} h)
U_{i'}(\Pi _{\mathcal H_{0}} h)a_{J,i i'} \bigr]
\\
& \qquad {}+\frac{2}{n(n-1)}\sum_{i\neq i'}U_{i}(\Pi
_{\mathcal H_{0}} h) \bigl( \Pi _{\mathcal H_{0}} h-\widehat h_{J}^{\text{\textsc r}}
\bigr) (X_{i'}) a_{J,i i'}
\\
&{}\qquad - \Evtex _{h}
\bigl[U_{i}(\Pi _{\mathcal H_{0}} h) \bigl(\Pi _{\mathcal H_{0}} h-
\widehat h_{J}^{\text{\textsc r}}\bigr) (X_{i'}) a_{J,i i'}
\bigr]
\\
& \qquad {}+\frac{1}{n(n-1)}\sum_{i\neq i'} \bigl(\Pi
_{\mathcal H_{0}} h- \widehat h_{J}^{\text{\textsc r}}\bigr)
(X_{i}) \bigl(\Pi _{\mathcal H_{0}} h- \widehat h_{J}^{\text{\textsc r}}
\bigr) (X_{i'}) a_{J,i i'}
\\
&\qquad {} -\Evtex _{h} \bigl[ \bigl(
\Pi _{
\mathcal H_{0}} h-\widehat h_{J}^{\text{\textsc r}}\bigr)
(X_{i}) \bigl(\Pi _{
\mathcal H_{0}} h-\widehat h_{J}^{\text{\textsc r}}
\bigr) (X_{i'})a_{J,i i'} \bigr]
\\
&\quad =:T_{1}+2 T_{2}+T_{3}.
\end{align*}%
From the proof of Theorem~B.1, we conclude
$\sup_{h\in \mathcal H_{1}(\delta ^{\circ}\text{\textsf r}_{n})}\Evtex _{h}|T_{1}|
\lesssim n^{-1} s_{J}^{-2}\sqrt J$. Consider $T_{2}$. Below, we let
$a_{i}^{J}=Ab^{K}(W_{i})=(G_{b}^{-1/2} S G^{-1/2})^{-}_{\ell }
\widetilde{b}^{K}(W_{i})$. By Assumption~5(ii),
$\sup_{h\in \mathcal H_{1}(\delta ^{\circ}\text{\textsf r}_{n})}\mathrm{P}_{h}
 (\zeta _{J} \mathcal C_{h}\|\widehat h_{J}^{\text{\textsc r}}-\Pi _{
\mathcal H_0}h\|_{L^{2}(X)}> C )\to 0$ and consequently may assume that
$\widehat h_{J}^{\text{\textsc r}}\in \mathcal H_{0,J}(h):=\{\|\phi -\Pi _{
\mathcal H_0}h\|_{L^{2}(X)}\leq [\zeta _{J} \mathcal C_{h}]^{-1} :
\phi \in \mathcal H_{0,J}\}$. We have for all
$h\in \mathcal H_{1}(\delta ^{\circ}\text{\textsf r}_{n})$ that the absolute value
of $T_{2}$ is bounded by
\begin{align*}
&\sup_{\phi \in \mathcal H_{0,J}(h)} \biggl\llvert \frac{1}{n(n-1)}\sum
_{i
\neq i'} \bigl(U_{i}(\Pi _{\mathcal H_{0}}
h)a_{i}^{J} - \Evtex _{h}\bigl[U(\Pi
_{
\mathcal H_{0}} h)a^{J}\bigr] \bigr)'
\\
&\qquad {}\times \bigl((\Pi
_{\mathcal H_{0}} h-\phi ) (X_{i'}) a_{i'}^{J}-\Evtex
\bigl[ (\Pi _{\mathcal H_{0}} h-\phi ) (X) a^{J} \bigr] \bigr) \biggr
\rrvert &
\\
&\qquad  {}+\bigg| \frac{1}{n}\sum_{i} \bigl(U_{i}(\Pi
_{\mathcal H_{0}} h)a_{i}^{J} - \Evtex _{h}\bigl[U(
\Pi _{\mathcal H_{0}} h)a^{J}\bigr] \bigr)' \Evtex \bigl[
\bigl(\Pi _{
\mathcal H_{0}} h-\widehat h_{J}^{\text{\textsc r}}\bigr) (X)
a^{J} \bigr] \big) \biggl\llvert
\\
&\qquad  {}+\sup_{\phi \in \mathcal H_{0,J}(h)}\bigg| \frac{1}{n}\sum_{i}
\bigl(( \Pi _{\mathcal H_{0}} h-\phi ) (X_{i}) a_{i}^{J}-
\Evtex \bigl[ (\Pi _{
\mathcal H_{0}} h-\phi ) (X) a^{J} \bigr]
\bigr)'\Evtex _{h}\bigl[U(\Pi _{
\mathcal H_{0}}
h)a^{J}\bigr] \biggr\rrvert
\\
&\quad  =:T_{21}+T_{22}+T_{23}.
\end{align*}%
Below, we let $a_{k,i}=b^{K}(W_{i})'A' A G_{b}^{1/2}e_{k}$. Note that
$\Evtex \|a_{k,i}\|^{2}\leq \|(G_{b}^{-1/2} S G^{-1/2})_{\ell}^{-}\|^{4}=s_{J}^{-4}$
for all $k=1,\dots ,K$. We obtain uniformly for
$h\in \mathcal H_{1}(\delta ^{\circ}\text{\textsf r}_{n})$ by
\citet [Theorem~2.14.2]{Vaart2000} that
\begin{align*}
\Evtex _{h} T_{21}\leq{}& \sum_{k=1}^{K}
\Evtex _{h} \biggl\llvert \frac{1}{n}\sum_{i}U_{i}(
\Pi _{\mathcal H_{0}} h)a_{k,i} - \Evtex _{h}\bigl[U(\Pi
_{\mathcal H_{0}} h)a_{k}\bigr] \biggr\rrvert &
\\
& {}\times \Evtex _{h}\sup_{\phi \in \mathcal H_{0,J}(h)}\bigg| \frac{1}{n-1}\sum
_{i'}(\Pi _{\mathcal H_{0}} h-\phi ) (X_{i'})
\widetilde{b}_{k}(W_{i'})
\\
&\qquad {}-\Evtex _{h} \bigl[ (\Pi
_{\mathcal H_{0}} h- \phi ) (X) \widetilde{b}_{k}(W) \bigr]\bigg|
\\
\lesssim{}& \frac{\mathcal C_{h}}{n} \sqrt{\sum
_{k=1}^{K}\Evtex _{h} \bigl[ \bigl\llvert
U_{i}( \Pi _{\mathcal H_{0}} h) \bigr\rrvert ^{2} \bigl\llVert
G_{b}^{-1/2}A'Ab^{K}(W) \bigr\rrVert
^{2} \bigr]}
\\
&{}\times  \sqrt{\sum_{k=1}^{K}
\Evtex _{h}\sup_{\phi \in \mathcal H_{0,J}(h)} \bigl\llvert ( \Pi
_{\mathcal H_{0}} h-\phi ) (X) \widetilde{b}_{k}(W) \bigr\rrvert
^{2}}
\\
\lesssim {}&\frac{\mathcal C_{h}}{n} \overline \sigma
s_{J}^{-2} \sqrt J \zeta _{J} \llVert \Pi
_{\mathcal H_{0}} h-\Phi _{J} \rrVert _{L^{2}(X)}\lesssim
n^{-1} s_{J}^{-2}\sqrt J
\end{align*}%
for some $\Phi _{J}\in \mathcal H_{0,J}(h)$ and using that
$\Evtex _{h}[|U(\Pi _{\mathcal H_{0}} h)|^{2}|W]\leq \overline \sigma ^{2}$
by Assumption~2(i). Further, we evaluate uniformly for
$h\in \mathcal H_{1}(\delta ^{\circ}\text{\textsf r}_{n})$:
\begin{align*}
\Evtex _{h} T_{22} &=\overline \sigma n^{-1/2}\sqrt
{\Evtex |\bigl(a^{J}\bigr)'\Evtex _{h}
\bigl[ \bigl(\Pi _{\mathcal H_{0}} h-\widehat h_{J}^{\text{\textsc r}}\bigr)
(X) a^{J} \bigr]|^{2}}&
\\
&\leq \overline \sigma n^{-1/2} s_{J}^{-2}\sup
_{\phi \in \mathcal H_{0,J}(h)} \|\Pi _{K} T (\Pi _{\mathcal H_{0}} h-\phi )
\|_{L^{2}(W)}
\\
&\lesssim n^{-1/2} s_{J}^{-1} \bigl(
\llVert h-\mathcal H_{0} \rrVert _{L^{2}(X)}+J^{-p/d_{x}}
\bigr),
\end{align*}
where, in the last equation, we used Assumption~2(iv) and
$\|h-\Pi _{\mathcal H_{0}} h\|_{L^{2}(X)}=\|h-\mathcal H_{0}\|_{L^{2}(X)}$.
Consider $T_{23}$. Below, we make use of the relation
$\Evtex [U(\Pi _{\mathcal H_{0}} h)a^{J}]'a_{i}^{J}=\langle Q_{J}(h-\Pi _{
\mathcal H_{0}} h), \psi ^{J}\rangle _{L^{2}(X)}'(G_{b}^{-1/2} S )^{-}_{
\ell }\widetilde{b}^{K}(W_{i})$ and obtain uniformly for
$h\in \mathcal H_{1}(\delta ^{\circ}\text{\textsf r}_{n})$:
\begin{align*}
\Evtex _{h} T_{23}\leq{}& \bigl\llVert \bigl\langle
Q_{J}(h-\Pi _{\mathcal H_{0}} h), \psi ^{J}\bigr\rangle
_{L^{2}(X)}'\bigl(G_{b}^{-1/2} S
\bigr)^{-}_{\ell} \bigr\rrVert
\\
&{}\times \Evtex _{h} \sup_{\mathsf e\in \mathcal S^{K^\circ -1}}\sup
_{
\phi \in \mathcal H_{0,J}(h)} \biggl\llvert \frac{1}{n}\sum
_{i}(\Pi _{
\mathcal H_{0}} h-\phi ) (X_{i})
\widetilde{b}^{K}(W_{i})'\mathsf e
\\
&{}-\Evtex \bigl[
(\Pi _{\mathcal H_{0}} h-\phi ) (X) \widetilde{b}^{K}(W)'
\mathsf e \bigr] \biggr\rrvert
\\
\lesssim{}& \bigl\llVert \bigl\langle Q_{J}(h-\Pi _{\mathcal H_{0}} h),
\psi ^{J} \bigr\rangle _{L^{2}(X)}'
\bigl(G_{b}^{-1/2} S \bigr)^{-}_{\ell}
\bigr\rrVert \times \mathcal C_{h} n^{-1/2} \zeta
_{J} \llVert \Pi _{\mathcal H_{0}} h-\Phi _{J} \rrVert
_{L^{2}(X)}
\\
\lesssim{}& \mathcal C_{h} n^{-1/2}s_{J}^{-1}
\bigl( \llVert h-\mathcal H_{0} \rrVert _{L^{2}(X)}+J^{-p/d_{x}}
\bigr),
\end{align*}%
where we used that
$\sup_{w}|\widetilde{b}^{K}(w)'\mathsf e|\leq \zeta _{J}$ for all
$\mathsf e\in \mathcal S^{K^\circ }$. Consider $T_{3}$. We have
\begin{align*}
\llvert T_{3} \rrvert\leq{}& \sup_{\phi \in \mathcal H_{0,J}(h)} \biggl
\llvert \frac{1}{n(n-1)}\sum_{i\neq i'} \bigl((\Pi
_{\mathcal H_{0}} h-\phi ) (X_{i}) a_{i}^{J} -
\Evtex \bigl[ (\Pi _{\mathcal H_{0}} h-\phi ) (X)a^{J} \bigr]
\bigr)'
\\
& {}\times \bigl((\Pi _{
\mathcal H_{0}} h-\phi ) (X_{i'})
a_{i'}^{J} -\Evtex \bigl[ (\Pi _{
\mathcal H_{0}} h-\phi )
(X)a^{J} \bigr] \bigr) \biggr\rrvert &
\\
& {}+2\sup_{\phi \in \mathcal H_{0,J}(h)} \biggl\llvert \frac{1}{n}\sum
_{i} \bigl((\Pi _{\mathcal H_{0}} h-\phi ) (X_{i})
a_{i}^{J} -\Evtex \bigl[ (\Pi _{
\mathcal H_{0}} h-\phi )
(X)a^{J} \bigr] \bigr)'
\\
&{}\times \Evtex \bigl[ (\Pi _{\mathcal H_{0}}
h-\phi ) (X)a^{J} \bigr] \biggr\rrvert
\\
=:{}&T_{31}+T_{32}.
\end{align*}%
We evaluate for the first term on the right-hand side that uniformly for
$h\in \mathcal H_{1}(\delta ^{\circ}\text{\textsf r}_{n})$:
\begin{align*}
\Evtex T_{31}&\leq s_{J}^{-2}\sum
_{k=1}^{K} \biggl(\Evtex \sup_{\phi \in
\mathcal H_{0,J}(h)}
\biggl\llvert \frac{1}{n}\sum_{i} (\Pi
_{\mathcal H_{0}} h- \phi ) (X_{i}) \widetilde{b}_{k}(W_{i})
-\Evtex \bigl[ (\Pi _{\mathcal H_{0}} h-\phi ) (X) \widetilde{b}_{k}(W)
\bigr] \biggr\rrvert \biggr)^{2}
\\
&\lesssim \frac{\mathcal C_{h}^{2}}{n s_{J}^{2}}
\Evtex \sup_{\phi \in
\mathcal H_{0,J}(h)} \bigl\llVert (\Pi _{\mathcal H_{0}} h-\phi )
(X) \widetilde{b}^{K}(W) \bigr\rrVert ^{2}
\\
&\lesssim \frac{
\mathcal C_{h}^{2}}{n s_{J}^{2}}\zeta
_{J}^{2} \llVert \Pi _{\mathcal H_{0}} h-\Phi _{J}
\rrVert _{L^{2}(X)}^{2} \lesssim \frac{\sqrt J}{n
s_{J}^{2}},
\end{align*}%
{\spaceskip=0.2em plus 0.05em minus 0.03em for some $\Phi _{J}\in \mathcal H_{0,J}(h)$ and using that
$\mathcal C_{h}^{2}\lesssim \sqrt{J}$. Further, we have}
$\Evtex [(\Pi _{\mathcal H_{0}} h-\phi )(X)a^{J}]'a_{i}^{J}=\langle Q_{J}(
\Pi _{\mathcal H_{0}} h-\phi ), \psi ^{J}\rangle _{L^{2}(X)}'(G_{b}^{-1/2}
S )^{-}_{\ell }\widetilde{b}^{K}(W_{i})$ and thus, following the derivation
of the bound of $T_{23}$, we obtain
\begin{align*}
\Evtex T_{32}\leq{}& \sup_{\phi \in \mathcal H_{0,J}(h)} \bigl\llVert \bigl
\langle Q_{J}( \phi -\Pi _{\mathcal H_{0}} h), \psi ^{J}\bigr
\rangle _{L^{2}(X)}'\bigl(G_{b}^{-1/2} S
\bigr)^{-}_{\ell} \bigr\rrVert
\\
&{} \times \Evtex \sup_{\mathsf e\in \mathcal S^{K^\circ }}\sup_{
\phi \in \mathcal H_{0,J}(h)}
\biggl\llvert \frac{1}{n}\sum_{i}(\phi -\Pi
_{
\mathcal H_{0}} h) (X_{i}) \widetilde{b}^{K}(W_{i})'
\mathsf e-\Evtex \bigl[ ( \phi -\Pi _{\mathcal H_{0}} h) (X) \widetilde{b}^{K}(W)'\mathsf e \bigr] \biggr\rrvert
\\
\lesssim{}& n^{-1/2}s_{J}^{-1} \mathcal
C_{h} \bigl( \llVert h-\Pi _{\mathcal H_{0}} h \rrVert
_{L^{2}(X)}+J^{-p/d_{x}} \bigr)
\end{align*}%
uniformly for $h\in \mathcal H_{1}(\delta ^{\circ}\text{\textsf r}_{n})$, where
the last equation is due to Assumption~5(ii). Finally, the result
follows from an application of Markov's inequality.
\end{proof}

\begin{lemma}%
\label{lemma:est:matrices&h}
Let Assumptions 1(ii)--(iii), 2(i),
4(i)(iii), and 5(ii) hold. Then, for
$J=J^\circ $, we have wpa1 uniformly for
$h\in \mathcal H_{1}(\delta ^{\circ}\text{\textsf r}_{n})$:
\begin{align*}
&\frac{1}{n(n-1)}\sum_{i\neq i'} \bigl(Y_{i}-
\widehat h_{J}^{\text{\textsc r}}(X_{i}) \bigr)
\bigl(Y_{i'}-\widehat h_{J}^{\text{\textsc r}}(X_{i'})
\bigr)b^{K}(W_{i})' \bigl(A' A-
\widehat A'\widehat A \bigr)b^{K}(W_{i'})
\\
&\quad \lesssim n^{-1/2}s_{J}^{-1} \mathcal
C_{h} \bigl( \llVert h-\mathcal H_{0} \rrVert
_{L^{2}(X)}+J^{-p/d_{x}} \bigr) +n^{-1} s_{J}^{-2}
\sqrt{ J}.
\end{align*}
\end{lemma}
\begin{proof}
For simplicity of notation, we write $J$ instead of $J^\circ $ throughout
the proof. Following the proof of Lemma~\ref{lemma:est:matrices}, it is
sufficient to control
\begin{align*}
&\Evtex _{h}\bigl[\bigl(h-\widehat h_{J}^{\text{\textsc r}}
\bigr) (X)b^{K}(W)\bigr]' \bigl(A'A- \widehat
A'\widehat A \bigr)\Evtex _{h}\bigl[\bigl(h-\widehat
h_{J}^{\text{\textsc r}}\bigr) (X)b^{K}(W)\bigr]
\\
&\quad = 2\Evtex _{h}\bigl[\bigl(h-\widehat h_{J}^{\text{\textsc r}}
\bigr) (X)b^{K}(W)\bigr]'A'(A- \widehat A)
\Evtex _{h}\bigl[\bigl(h-\widehat h_{J}^{\text{\textsc r}}\bigr)
(X)b^{K}(W)\bigr]
\\
& \qquad {}-\Evtex _{h}\bigl[\bigl(h-\widehat h_{J}^{\text{\textsc r}}
\bigr) (X)b^{K}(W)\bigr]'(A- \widehat A)' (A-
\widehat A)\Evtex _{h}\bigl[\bigl(h-\widehat h_{J}^{\text{\textsc r}}
\bigr) (X)b^{K}(W)\bigr]=:2 T_{1}-T_{2},
\end{align*}
We first consider the term $T_{1}$ using the decomposition:
%
\begin{align}
\label{T1:dec:h_est} T_{1} ={}&\Evtex _{h}\bigl[\bigl(h-\widehat
h_{J}^{\text{\textsc r}}\bigr) (X)b^{K}(W)\bigr]'A'(
\widehat A-A)\Evtex _{h}\bigl[\Pi _{J}\bigl(h-\widehat
h_{J}^{\text{\textsc r}}\bigr) (X) b^{K}(W)\bigr]
\nonumber
\\
& {}+\Evtex _{h}\bigl[\bigl(h-\widehat h_{J}^{\text{\textsc r}}
\bigr) (X)b^{K}(W)\bigr]'A'(\widehat A-A)\nonumber
\\
&{}\times
\Evtex _{h}\bigl[\bigl(h-\widehat h_{J}^{\text{\textsc r}}-\Pi
_{J}\bigl(h-\widehat h_{J}^{
\text{\textsc r}}\bigr)\bigr)
(X)b^{K}(W)\bigr]. 
\end{align}
Consider the first summand on the right-hand side of equation (\ref{T1:dec:h_est}).
By Assumption~5(ii),
$\sup_{h\in \mathcal H_{1}(\delta ^{\circ}\text{\textsf r}_{n})}\mathrm{P}_{h}
 (\zeta _{J}\mathcal C_{h}\|\widehat h_{J}^{\text{\textsc r}}-\Pi _{
\mathcal H_0}h\|_{L^{2}(X)}> C )\to 0$ and consequently may assume that
$\widehat h_{J}^{\text{\textsc r}}\in \mathcal H_{0,J}(h):=\{\phi \in
\mathcal H_{0,J}: \|\phi -\Pi _{\mathcal H_0}h\|_{L^{2}(X)}\leq [
\zeta _{J}\mathcal C_{h}]^{-1} \}$. We calculate
\begin{align*}
&\sup_{\phi \in \mathcal H_{0,J}(h)} \bigl\llvert \bigl(\bigl(G_{b}^{-1/2}S
\bigr)_{l}^{-} \Evtex \bigl[(h-\phi ) (X)\widetilde{b}^{K}(W)\bigr] \bigr)'G \bigl(\bigl(G_{b}^{-1/2}S
\bigr)_{l}^{-}-\bigl( \widehat G_{b}^{-1/2}
\widehat S\bigr)_{l}^{-}\widehat G_{b}^{-1/2}G_{b}^{1/2}
\bigr)
\\
&\qquad {}\times \Evtex \bigl[(h-\phi ) (X)\widetilde{b}^{K}(W)\bigr] \bigr\rrvert
\\
&\quad =\sup_{\phi \in \mathcal H_{0,J}(h)} \biggl\llvert \bigl\langle Q_{J} (h-
\phi ), \psi ^{J}\bigr\rangle _{L^{2}(X)}'
\bigl(G_{b}^{-1/2}S\bigr)^{-}_{l}
\\
&\qquad {}\times  \biggl(
\frac{1}{n}\sum_{i} \Pi _{J}(h-\phi )
(X_{i})\widetilde{b}^{K}(W_{i})- \Evtex \bigl[\Pi
_{J}(h-\phi ) (X)\widetilde{b}^{K}(W)\bigr] \biggr) \biggr
\rrvert
\\
&\qquad  {}+\sup_{\phi \in \mathcal H_{0,J}(h)} \biggl\llvert \bigl\langle Q_{J} (h-
\phi ),\psi ^{J}\bigr\rangle _{L^{2}(X)}'
\bigl(G_{b}^{-1/2}S\bigr)^{-}_{l}
G_{b}^{-1/2} S \bigl(\bigl(\widehat G_{b}^{-1/2}
\widehat S\bigr)_{l}^{-}\widehat G_{b}^{-1/2}G_{b}^{1/2}-
\bigl(G_{b}^{-1/2}S\bigr)^{-}_{l} \bigr)
\\
&\qquad {} \times \biggl(\frac{1}{n}\sum_{i} \Pi
_{J}(h- \phi ) (X_{i})\widetilde{b}^{K}(W_{i})-
\Evtex \bigl[\Pi _{J}(h-\phi ) (X) \widetilde{b}^{K}(W)\bigr]
\biggr) \biggr\rrvert =:T_{11}+T_{12}.
\end{align*}%
Consider $T_{11}$, which coincides with the term $T_{32}$ in the proof
of Lemma~\ref{lemma:quad:fctl:est_h} and thus, we have
$\Evtex |T_{11}|\lesssim n^{-1/2}s_{J}^{-1} \mathcal C_{h} (\|h-
\mathcal H_{0}\|_{L^{2}(X)}+J^{-p/d_{x}} )$. To establish an upper bound
for $T_{12}$, we infer from
\citet [Lemma F.10(c)]{ChenChristensen2017} that
\begin{align*}
\llvert T_{12} \rrvert ^{2}\leq {}&\sup_{\phi \in \mathcal H_{0,J}(h)}
\bigl\llVert \bigl\langle Q_{J} (h-\phi ),\psi ^{J}\bigr
\rangle _{L^{2}(X)}'\bigl(G_{b}^{-1/2}S
\bigr)^{-}_{l} \bigr\rrVert ^{2}
\\
& {}\times \bigl\llVert G_{b}^{-1/2} S \bigl(\bigl(\widehat
G_{b}^{-1/2} \widehat S\bigr)_{l}^{-}
\widehat G_{b}^{-1/2}G_{b}^{1/2}-
\bigl(G_{b}^{-1/2}S\bigr)^{-}_{l} \bigr)
\bigr\rrVert ^{2}
\\
&{} \times \sup_{\phi \in \mathcal H_{0,J}(h)} \biggl\llVert \frac{1}{n} \sum
_{i} \Pi _{J}(h-\phi ) (X_{i})b^{K}(W_{i})-
\Evtex \bigl[\Pi _{J}(h-\phi ) (X) b^{K}(W)\bigr] \biggr
\rrVert ^{2}
\\
\lesssim{}& \sup_{\phi \in \mathcal H_{0,J}(h)} \bigl\llVert \bigl\langle
Q_{J} (h- \phi ),\psi ^{J}\bigr\rangle _{L^{2}(X)}'
\bigl(G_{b}^{-1/2}S\bigr)^{-}_{l} \bigr
\rrVert ^{2} \times n^{-1}s_{J}^{-2}\zeta
_{J}^{2}(\log J)\times n^{-1}\zeta
_{J}^{2} \mathcal C_{h}^{2}
\\
\lesssim{}& n^{-1}s_{J}^{-2} \mathcal
C_{h}^{2} \bigl( \llVert h-\Pi _{
\mathcal H_{0}} h \rrVert
_{L^{2}(X)}^{2}+J^{-2p/d_{x}} \bigr)
\end{align*}%
wpa1 uniformly for
$h\in \mathcal H_{1}(\delta ^{\circ}\text{\textsf r}_{n})$, where the last equation
is due to $s_{J}^{-1} \zeta _{J}^{2}\sqrt{(\log J)/ n}=O(1)$ from Assumption~4(i). Consider the second summand on the right-hand side
of equation \eqref{T1:dec:h_est}. Following the upper bound of
$T_{12}$, we obtain
\begin{align*}
&\sup_{\phi \in \mathcal H_{0,J}(h)} \bigl\llvert \Evtex \bigl[(h-\phi )
(X)b^{K}(W)\bigr]'A' G (\widehat A-A)\Evtex
\bigl[\bigl(h-\phi -\Pi _{J}(h-\phi )\bigr) (X) b^{K}(W)
\bigr] \bigr\rrvert ^{2}
\\
&\quad \leq \sup_{\phi \in \mathcal H_{0,J}(h)} \bigl\llVert \bigl\langle Q_{J}
(h- \phi ),\psi ^{J}\bigr\rangle _{L^{2}(X)}'
\bigl(G_{b}^{-1/2}S\bigr)^{-}_{l} \bigr
\rrVert ^{2}
\\
&\qquad {}\times  \bigl\llVert G_{b}^{-1/2}
S' \bigl(\bigl(\widehat G_{b}^{-1/2}\widehat S
\bigr)_{l}^{-} \widehat G_{b}^{-1/2}G_{b}^{1/2}-
\bigl(G_{b}^{-1/2}S\bigr)^{-}_{l} \bigr)
\bigr\rrVert ^{2}
\\
&\qquad {} \times \sup_{\phi \in \mathcal H_{0,J}(h)} \bigl\llVert \bigl\langle T\bigl(h- \phi
-\Pi _{J}(h-\phi )\bigr), \widetilde{b}^{K}\bigr\rangle
_{L^{2}(W)} \bigr\rrVert ^{2}
\\
&\quad \lesssim \sup_{\phi \in \mathcal H_{0,J}(h)} \bigl\llVert \bigl\langle
Q_{J} (h- \phi ),\psi ^{J}\bigr\rangle _{L^{2}(X)}'
\bigl(G_{b}^{-1/2}S\bigr)^{-}_{l} \bigr
\rrVert ^{2} \sup_{\phi \in \mathcal H_{0,J}(h)} \bigl\llVert \Pi
_{K}T\bigl(h-\phi -\Pi _{J}(h- \phi )\bigr) \bigr\rrVert
_{L^{2}(W)}^{2}
\\
&\qquad {} \times n^{-1}s_{J}^{-2}\zeta
_{J}^{2}(\log J)
\\
&\quad \lesssim n^{-1}s_{J}^{-2} \bigl( \llVert h-\Pi
_{\mathcal H_{0}} h \rrVert _{L^{2}(X)}^{2}+J^{-2p/d_{x}} \bigr)
\end{align*}%
wpa1 uniformly for
$h\in \mathcal H_{1}(\delta ^{\circ}\text{\textsf r}_{n})$, using that
$s_{J}^{-2}\|\Pi _{K} T(h-\Pi _{\mathcal H_0}h-\Pi _{J}(h-\Pi _{
\mathcal H_0}h))\|_{L^{2}(W)}^{2}\lesssim \|h-\Pi _{\mathcal H_0}h-
\Pi _{J}(h-\Pi _{\mathcal H_0}h)\|_{L^{2}(X)}^{2}$ by Assumption~4(i) and
$\zeta _{J}^{2}(\log J)\|h-\Pi _{J} h\|_{L^{2}(X)}^{2}=O(1)$ by Assumption~4(iii).

We now consider the term $T_{2}$ using the decomposition
\begin{align*}
T_{2}\leq{}& 2 \sup_{\phi \in \mathcal H_{0,J}(h)} \bigl\llvert \Evtex \bigl[
\Pi _{J}(h- \phi ) (X)b^{K}(W)\bigr]'(\widehat
A-A)' G (\widehat A-A)\Evtex \bigl[\Pi _{J}(h- \phi ) (X)
b^{K}(W)\bigr] \bigr\rrvert
\\
& {}+2 \sup_{\phi \in \mathcal H_{0,J}(h)} \bigl\llvert \Evtex \bigl[\Pi
_{J}^{
\perp}(h-\phi ) (X)b^{K}(W)
\bigr]'(\widehat A-A)' G (\widehat A-A)
\\
&{}\times \Evtex \bigl[\Pi
_{J}^{
\perp}(h-\phi ) (X) b^{K}(W)\bigr] \bigr\rrvert
\\
=:{}&2T_{21}+2T_{22},
\end{align*}
where $\Pi _{J}^{\perp}=\text{id}-\Pi _{J}$ is the projection. We bound
$T_{21}$ as follows:
\begin{align*}
T_{21}\leq{}& \sup_{\phi \in \mathcal H_{0,J}(h)} \biggl\llvert \bigl\langle
\Pi _{J} (h- \phi ),\psi ^{J}\bigr\rangle _{L^{2}(X)}'
\bigl(\bigl(\widehat G_{b}^{-1/2} \widehat S
\bigr)_{l}^{-}\widehat G_{b}^{-1/2}S-
I_{J} \bigr)'\bigl(\widehat G_{b}^{-1/2}
\widehat S\bigr)_{l}^{-}\widehat G_{b}^{-1/2}
\\
&{} \times \biggl(\frac{1}{n}\sum_{i} \Pi
_{J}(h- \phi ) (X_{i})\widetilde{b}^{K}(W_{i})-
\Evtex \bigl[\Pi _{J}(h-\phi ) (X) \widetilde{b}^{K}(W)\bigr]
\biggr) \biggr\rrvert
\\
\leq{}& \sup_{\phi \in \mathcal H_{0,J}(h)} \bigl\llVert \bigl\langle \Pi _{J}
(h- \phi ),\psi ^{J}\bigr\rangle _{L^{2}(X)} \bigr\rrVert \llVert
S-\widehat S \rrVert \bigl\llVert \bigl( \widehat G_{b}^{-1/2}
\widehat S\bigr)_{l}^{-}\widehat G_{b}^{-1/2}
\bigr\rrVert ^{2}
\\
& {}\times \biggl\llVert \frac{1}{n}\sum_{i} \Pi
_{J}(h- \phi ) (X_{i})\widetilde{b}^{K}(W_{i})-
\Evtex \bigl[\Pi _{J}(h-\phi ) (X) \widetilde{b}^{K}(W)\bigr]
\biggr\rrVert
\\
\lesssim{}& \sup_{\phi \in \mathcal H_{0,J}(h)} \bigl\llVert \Pi _{J} (h-\phi
) \bigr\rrVert _{L^{2}(X)} \times n^{-1/2}s_{J}^{-2}
\zeta _{J}\sqrt{\log J}\times n^{-1/2}\zeta _{J}
\mathcal C_{h}
\\
\lesssim{}& n^{-1/2}s_{J}^{-1} \mathcal
C_{h} \bigl( \llVert h-\Pi _{\mathcal H_{0}} h \rrVert
_{L^{2}(X)}+J^{-p/d_{x}} \bigr)
\end{align*}
wpa1 uniformly for
$h\in \mathcal H_{1}(\delta ^{\circ}\text{\textsf r}_{n})$. For $T_{22}$, we note
that uniformly in $h\in \mathcal H$ and
$\phi \in \mathcal H_{0,J}(h)$,
$\|\Evtex [\Pi _{J}^{\perp}( h-\phi )(X)\widetilde{b}^{K}(W)]\|=\|\Pi _{K} T
\Pi _{J}^{\perp }(h-\phi )\|_{L^{2}(W)}\lesssim s_{J} J^{-p/d_{x}}$ by
Assumption~2(iv). Thus, following the upper bound derivations
of $T_{21}$, we obtain
$T_{22}\lesssim n^{-1/2}s_{J}^{-1} J^{-p/d_{x}}$ wpa1 uniformly for
$h\in \mathcal H_{1}(\delta ^{\circ}\text{\textsf r}_{n})$.
\end{proof}

\begin{lemma}
\label{lemma:unif:matrix:bound}
Let Assumptions 1(i)--(iii), 2(i), and
4 be satisfied. Then, using the notation
$S^{o}:=G_{b}^{-1/2} SG^{-1/2}$, we have for some constant $C>0$:
\begin{align*}
&\mathrm{(i)}\qquad  \mathrm{P} \biggl(\max_{J\in \mathcal I_{n}} \biggl\{
\frac{s_{J}^{2}\sqrt n}{\zeta _{J}\sqrt{\log J}}
\bigl\llVert \bigl(\widehat G_{b}^{-1/2} \widehat S\widehat
G^{-1/2}\bigr)^{-}_{l}\widehat G_{b}^{-1/2}
G_{b}^{1/2}- \bigl(S^{o}\bigr)^{-}_{l}
\bigr\rrVert \biggr\}>C \biggr)=o(1),
\\
&\mathrm{(ii)}\qquad  \mathrm{P} \biggl(\max_{J\in \mathcal I_{n}} \biggl\{
\frac{s_{J}^{2}\sqrt n}{\zeta _{J}\sqrt{\log J}}
\bigl\llVert S^{o} \bigl(\bigl( \widehat G_{b}^{-1/2}
\widehat S\widehat G^{-1/2}\bigr)^{-}_{l}\widehat
G_{b}^{-1/2} G_{b}^{1/2}-
\bigl(S^{o}\bigr)^{-}_{l} \bigr) \bigr\rrVert
\biggr\}>C \biggr)=o(1).
\end{align*}
\end{lemma}
\begin{proof}
The results can be established by following the same proof from
\citet [Lemma C.4]{chen2021} with their $(\tau _{J}, \sqrt{J})$ replaced
by our $(s_{J}^{-1},\zeta _{J})$.
\end{proof}

\begin{lemma}%
\label{lemma:est:matrices:H0}
Let Assumptions 1(i)--(iii), 2(i), and
4(i) hold. Then, we have
\begin{equation*}
\begin{split}&\mathrm{P}_{h} \biggl(\max_{J\in \mathcal I_{n}} \biggl\llvert
\frac{(\log \log J)^{-1/2}}{(n-1)V_{J}}\sum
_{i\neq i'} U_{i}(\Pi _{
\mathcal H_0}h)U_{i'}(
\Pi _{\mathcal H_0}h)b^{K}(W_{i})'
\bigl(A'A- \widehat A'\widehat A \bigr)b^{K}(W_{i'})
\biggr\rrvert >\frac{1-c_{0}}{8} \biggr)
\\
&\quad =o(1)
\end{split}\end{equation*}
uniformly for $h\in \mathcal H_{0}$, where
$U_{i}(\phi )=Y_{i}-\phi (X_{i})$ and $c_{0}$ is as in the proof of Theorem~4.1.
\end{lemma}
\begin{proof}
Let $I_{s_{J}}$ denote the $J$-dimensional identity matrix multiplied by
the vector $C_{0}(s_{1},\dots , s_{J})'$ for some sufficiently large constant
$C_{0}$ and where $s_{j}^{-1}$, $1\leq j\leq J$, are the nondecreasing
singular values of $AG_{b}^{1/2}=(G_{b}^{-1/2}S G^{-1/2})^{-}_{l}$. There
exists a unitary matrix $Q$ such that
\begin{align*}
&\sum_{i\neq i'} U_{i}(\Pi
_{\mathcal H_0}h)U_{i'}(\Pi _{\mathcal H_0}h)b^{K}(W_{i})'
\bigl(A'A-\widehat A'\widehat A \bigr)b^{K}(W_{i'})
\\
&\quad \leq \biggl\llVert \sum_{i} U_{i}(\Pi
_{\mathcal H_0}h)\widetilde{b}^{K}(W_{i})'Q
I_{s_{J}}^{-1} \biggr\rrVert ^{2} \bigl\llVert
I_{s_{J}}Q'G_{b}^{1/2}
\bigl(A' A-\widehat A' \widehat A\bigr)G_{b}^{1/2}QI_{s_{J}}
\bigr\rrVert
\\
&\quad = \sum_{i\neq i'}U_{i}(\Pi
_{\mathcal H_0}h)U_{i'}(\Pi _{\mathcal H_0}h) \widetilde{b}^{K}(W_{i})' Q I_{s_{J}}^{-2}Q'
\widetilde{b}^{K}(W_{i'}) \bigl\llVert I_{s_{J}}Q'G_{b}^{1/2}
\bigl(A' A-\widehat A'\widehat A\bigr)G_{b}^{1/2}QI_{s_{J}}
\bigr\rrVert
\\
&\qquad  {}+\sum_{i} \bigl\llVert U_{i}(\Pi
_{\mathcal H_0}h)\widetilde{b}^{K}(W_{i})Q
I_{s_{J}}^{-1} \bigr\rrVert ^{2} \bigl\llVert
I_{s_{J}}Q'G_{b}^{1/2}
\bigl(A' A-\widehat A' \widehat A\bigr)G_{b}^{1/2}QI_{s_{J}}
\bigr\rrVert .
\end{align*}
The fourth moment condition imposed in Assumption~2(i) implies
uniformly for $h\in \mathcal H_{0}$:
\begin{align*}
&\Evtex _{h} \max_{J\in \mathcal I_{n}} \biggl\llvert
\frac{1}{nV_{J}}\sum_{i} \bigl( \bigl
\llVert U_{i}(\Pi _{\mathcal H_0}h)\widetilde{b}^{K}(W_{i})QI_{s_{J}}^{-1}
\bigr\rrVert ^{2}-\Evtex _{h} \bigl\llVert U(\Pi
_{\mathcal H_0}h)\widetilde{b}^{K}(W)QI_{s_{J}}^{-1}
\bigr\rrVert ^{2} \bigr) \biggr\rrvert ^{2}
\\
&\quad \lesssim n^{-1}\zeta _{\overline J}^{2}\sum
_{J\in \mathcal I_{n}} V_{J}^{-2}s_{J}^{-4}
\\
&\quad \lesssim n^{-1}\zeta _{\overline J}^{2}\sum
_{J\in \mathcal I_{n}} \Biggl(\sum_{j=1}^{J}
s_{J}^{4} s_{j}^{-4}
\Biggr)^{-1} \lesssim n^{-1} \zeta _{\overline J}^{2}
\sum_{J\in \mathcal I_{n}} J^{-1}=o(1),
\end{align*}%
due to Lemma~B.3 and the definition of the index set
$\mathcal I_{n}$. Consequently, from the second moment condition imposed
in Assumption~2(i), we obtain uniformly for
$J\in \mathcal I_{n}$:
\begin{align*}
n^{-1}\sum_{i} \bigl\llVert
\bigl(Y_{i}-\Pi _{\mathcal H_0}h(X_{i})\bigr)\widetilde{b}^{K}(W_{i})Q I_{s_{J}}^{-1} \bigr\rrVert
^{2}\leq \overline\sigma ^{2}c_{0}^{-1}
\zeta _{J} \Biggl(\sum_{j=1}^{J}s_{j}^{-4}
\Biggr)^{1/2}\leq \overline\sigma ^{2} \underline\sigma
^{-2}c_{0}^{-1}\zeta _{J}V_{J}
\end{align*}
with probability approaching 1 (under $h\in \mathcal H_{0}$), by making
use of Lemma~B.3. Further, we obtain uniformly for
$h\in \mathcal H_{0}$:
\begin{eqnarray*}
&&\mathrm{P}_{h} \biggl(\max_{J\in \mathcal I_{n}} \biggl\llvert
\frac{(\log \log J)^{-1/2}}{(n-1)V_{J}}\sum
_{i,i'} U_{i}(\Pi _{
\mathcal H_0}h)U_{i'}(
\Pi _{\mathcal H_0}h)b^{K}(W_{i})'
\bigl(A'A- \widehat A'\widehat A \bigr)b^{K}(W_{i'})
\biggr\rrvert >\frac{1-c_{0}}{8} \biggr)
\\
&&\quad \leq \mathrm{P}_{h} \biggl(\max_{J\in \mathcal I_{n}} \biggl
\llvert \frac{(\log \log J)^{-1/2}}{(n-1)V_{J}}\sum
_{i\neq i'} U_{i}(\Pi _{
\mathcal H_0}h)U_{i'}(
\Pi _{\mathcal H_0}h)\widetilde{b}^{K}(W_{i})'Q
I_{s_{J}}^{-2} Q'\widetilde{b}^{K}(W_{i'})
\biggr\rrvert >\frac{1-c_{0}}{8} \biggr)
\\
&& \qquad {}+ \mathrm{P}_{h} \biggl(\max_{J\in \mathcal I_{n}} \bigl( \bigl
\llVert I_{s_{J}}Q G_{b}^{1/2} \bigl(A'A-
\widehat A'\widehat A\bigr) G_{b}^{1/2}QI_{s_{J}}
\bigr\rrVert \bigr)>\frac{1-c_{0}}{16} \biggr)
\\
&&\qquad  {}+ \mathrm{P}_{h} \biggl(\max_{J\in \mathcal I_{n}} \bigl(
\overline\sigma ^{2}\underline\sigma ^{-2}c_{0}^{-1}
\zeta _{J}(\log \log J)^{-1/2} \bigl\llVert I_{s_{J}}Q
G_{b}^{1/2}\bigl(A'A-\widehat A'
\widehat A\bigr)G_{b}^{1/2}Q'I_{s_{J}}
\bigr\rrVert \bigr)>\frac{1-c_{0}}{16} \biggr)
\\
&&\qquad {} +o(1)
\\
&&\quad =:T_{1}+T_{2}+T_{3}+o(1).
\end{eqnarray*}%
Note that $T_{1}$ is arbitrarily small for $C_{0}$ sufficiently large by
following Step~1 in the proof of Theorem~4.1. Consider
$T_{2}$. We make use of the inequality
\begin{equation*}
\begin{split}&\bigl\llVert I_{s_{J}}Q G_{b}^{1/2}\bigl(\widehat
A' \widehat A -A' A\bigr)G_{b}^{1/2}Q'I_{s_{J}}
\bigr\rrVert
\\
&\quad \leq 2 \bigl\llVert I_{s_{J}}Q G_{b}^{1/2}
(\widehat A-A )' AG_{b}^{1/2}Q'I_{s_{J}}
\bigr\rrVert + \bigl\llVert (\widehat A- A)G_{b}^{1/2}QI_{s_{J}}
\bigr\rrVert ^{2}.
\end{split}\end{equation*}
It is sufficient to consider the first summand on the right-hand side.
Note that $\|AG_{b}^{1/2}\*Q'I_{s_{J}}\|\leq C_{0}^{-1}$. Consequently, from
Lemma~\ref{lemma:unif:matrix:bound}(ii) we infer
\begin{align*}
&\mathrm{P} \biggl(\max_{J\in \mathcal I_{n}} \biggl\{ \frac{s_{J}^{2}
\sqrt n}{\zeta _{J}\sqrt{\log J}} \bigl\llVert I_{s_{J}} Q
G_{b}^{1/2} (\widehat A-A )'
AG_{b}^{1/2}Q'I_{s_{J}} \bigr\rrVert
\biggr\}>C \biggr)=o(1).
\end{align*}
Assumption~4(i), that is,
$s_{J}^{-1}\zeta _{J}^{2}\sqrt{(\log J)/n}=O(1)$ uniformly for
$J\in \mathcal I_{n}$, thus implies $T_{3}=o(1)$.
\end{proof}

\begin{proof}[Proof of Lemma~B.4]
It is sufficient to prove (ii). Let
$\Sigma =\Evtex _{h}[(Y-h(X))^{2} b^{K(J)}(W)\* b^{K(J)}(W)']$ and
$\widehat \Sigma =n^{-1}\sum_{i}  (Y_{i}-\widehat h_{J}(X_{i})
 )^{2}b^{K(J)}(W_{i}) b^{K(J)}(W_{i})'$. Then
$V_{J}= \|A \Sigma A' \|_{F}$ and
$\widehat{V}_{J}= \|\widehat A \widehat \Sigma \widehat A' \|_{F}$.
For all $J\in \mathcal I_{n}$, the triangular inequality implies
\begin{equation*}
\begin{split}\llvert \widehat{V}_{J}-V_{J} \rrvert &\leq \bigl\llVert
\widehat A \widehat \Sigma \widehat A' -A\Sigma A' \bigr
\rrVert _{F}
\\
&\leq 2 \bigl\llVert (\widehat A-A ) \widehat \Sigma
A' \bigr\rrVert _{F}+ \bigl\llVert (\widehat A - A )
\widehat \Sigma ^{1/2} \bigr\rrVert _{F}^{2}+ \bigl
\llVert A (\widehat \Sigma - \Sigma )A' \bigr\rrVert _{F}.
\end{split}\end{equation*}
In the remainder of this proof, it is sufficient to consider
$\|(\widehat A-A)\Sigma   A'\|_{F}+\|A(\widehat \Sigma -\Sigma )  A'
\|_{F}=:T_{1}+T_{2}$. Consider $T_{1}$. By Lemma~\ref{lemma:bound:var}, we have the upper bound
$\|G_{b}^{-1/2}\Sigma\* G_{b}^{-1/2}\|\leq \overline\sigma $. Below, we make
use of the inequality
$\|m_{1}m_{2}\|_{F}\leq \|m_{1}\| \|m_{2}\|_{F}$ for matrices
$m_{1}$ and $m_{2}$. Since the Frobenius norm is invariant under rotation,
we calculate uniformly for $J\in \mathcal I_{n}$ that
\begin{align*}
T_{1}&= \bigl\llVert \bigl(G_{b}^{1/2}
SG^{1/2}\bigr) (\widehat A-A )\Sigma A'AG_{b}^{1/2}
\bigr\rrVert
\\
&\leq \bigl\llVert \bigl(G_{b}^{1/2} SG^{1/2}\bigr)
(\widehat A- A)G_{b}^{1/2} \bigr\rrVert \bigl\llVert
G_{b}^{-1/2} \Sigma G_{b}^{-1/2} \bigr
\rrVert \bigl\llVert \bigl(G_{b}^{1/2} SG^{1/2}
\bigr)^{-2}_{l} \bigr\rrVert _{F}
\\
&\lesssim \frac{
\zeta _{J}}{s_{J}} \Biggl(\frac{\log (J)}{n}\sum
_{j=1}^{J}s_{j}^{-4}
\Biggr)^{1/2}
\end{align*}%
wpa1 uniformly for $h\in \mathcal H$, by making use of Lemma~\ref{lemma:unif:matrix:bound}(i) and the Schur decomposition as in the
proof of Lemma~B.3. From Assumption~4(i), that is,
$s_{J}^{-1}\zeta _{J}^{2}\sqrt{(\log J)/n}=O(1)$, uniformly for
$J\in \mathcal I_{n}$, we infer
$T_{1}/V_{J}=J^{-1/2}(\sum_{j=1}^{J}s_{j}^{-4})^{1/2}/V_{J}\to 0$ wpa1
uniformly for $h\in \mathcal H$, where the last equation is due to Lemma~B.3. Consider $T_{2}$. Again using Lemma~B.3, we obtain
$T_{2}\leq \underline\sigma ^{-2}\|G_{b}^{-1/2}(\widehat \Sigma -
\Sigma ) G_{b}^{-1/2}\|$ by using the upper bound as derived for
$T_{1}$. Further, evaluate
\begin{align*}
\bigl\llVert G_{b}^{-1/2}(\widehat \Sigma-\Sigma )
G_{b}^{-1/2} \bigr\rrVert ={}& \biggl\llVert \frac{1}{n}\sum
_{i} \bigl(\bigl(Y_{i}-\widehat
h_{J}(X_{i})\bigr)^{2}-\bigl(Y_{i}-h(X_{i})
\bigr)^{2} \bigr)\widetilde{b}^{K}(W_{i})
\widetilde{b}^{K}(W_{i})' \biggr\rrVert
\\
\leq{}& \biggl\llVert \frac{1}{n}\sum_{i} \bigl(
\widehat h_{J}(X_{i})-h(X_{i})
\bigr)^{2}\widetilde{b}^{K}(W_{i})\widetilde{b}^{K}(W_{i})' \biggr\rrVert
\\
& {}+2 \biggl\llVert \frac{1}{n}\sum_{i} \bigl(
\widehat h_{J}(X_{i})-h(X_{i}) \bigr)
\bigl(Y_{i}-h(X_{i})\bigr)\widetilde{b}^{K}(W_{i})
\widetilde{b}^{K}(W_{i})' \biggr\rrVert
\\
=:{}&T_{21}+T_{22}.
\end{align*}
Consider $T_{21}$. The definition of the unrestricted sieve NPIV estimator
in (2.5) implies uniformly for $J\in \mathcal I_{n}$:
\begin{align*}
T_{21}\leq{}& \biggl\llVert \frac{1}{n}\sum
_{i} \bigl(\widehat h_{J}(X_{i})-Q_{J}h(X_{i})
\bigr)^{2}\widetilde{b}^{K}(W_{i})\widetilde{b}^{K}(W_{i})' \biggr\rrVert
\\
&{} + \biggl\llVert
\frac{1}{n}\sum_{i} \bigl(Q_{J}h(X_{i})-h(X_{i})
\bigr)^{2} \widetilde{b}^{K}(W_{i})\widetilde{b}^{K}(W_{i})' \biggr\rrVert
\\
\leq{}& \zeta _{J}^{2} \biggl\llVert \widehat A\frac{1}{n}\sum
_{i}Y_{i}b^{K}(W_{i})-A
\Evtex _{h}\bigl[Yb^{K}(W)\bigr] \biggr\rrVert
^{2}\times \biggl\llVert \frac{1}{n}\sum_{i}
\psi ^{J}(X_{i}) \psi ^{J}(X_{i})'
\biggr\rrVert
\\
& {}+ \zeta _{J}^{2} \biggl\llVert \frac{1}{n}\sum
_{i} \bigl(Q_{J}h(X_{i})-h(X_{i})
\bigr)^{2} \biggr\rrVert
\\
\lesssim{}& \zeta _{\overline J}^{4}
s_{\overline J}^{-2} n^{-1}+\max_{J\in \mathcal I_{n}}
\bigl\{\zeta _{J}^{2} \llVert Q_{J}h-h \rrVert
_{L^{2}(X)} \bigr\}
\end{align*}%
 wpa1 uniformly for $h\in \mathcal H$, where
the right-hand side tends to zero. This follows by the rate condition imposed
in Assumption~4(i) and that
$\|Q_{J}h-h\|_{L^{2}(X)}=O(J^{-p/d_{x}})$ uniformly for
$J\in \mathcal I_{n}$ and $h\in \mathcal H$ by Lemma~B.1(ii).
Analogously, we obtain that $\max_{J\in \mathcal I_{n}}T_{22}$ vanishes
wpa1 uniformly for $h\in \mathcal H$.
\end{proof}

\begin{proof}[Proof of Lemma~B.5]
We first prove the lower bound. By the definition of the RES index set
$\widehat{\mathcal I}_{n}$, we have that any element
$J\in \widehat{\mathcal I}_{n}$ tends slowly to infinity as
$n\to \infty $. Let $\widehat j_{\max}\leq j_{\max}$ be the largest integer
such that $\underline J2^{\widehat j_{\max}}\leq \widehat J_{\max}$. Consequently,
the definition of the RES index set implies for all
$J\in \widehat{\mathcal I}_{n}$ that
\begin{align*}
\log (J)\leq \log \bigl(\underline J2^{\widehat j_{\max}}\bigr)=\widehat
j_{
\max}\log (2)+\log (\underline J)\leq \widehat j_{\max}+1=\#(
\widehat{\mathcal I}_{n})
\end{align*}
for $n$ sufficiently large. From the lower bounds for quantiles of the
chi-squared distribution established in
\citet [Theorem~5.2]{inglot2010}, we deduce for all
$J\in \widehat{\mathcal I}_{n}$ and $n$ sufficiently large:
\begin{align*}
\widehat \eta _{J}(\alpha )&= \frac{q \bigl(\alpha /\#(\widehat{
\mathcal I}_{n}), J \bigr)-J}{\sqrt{J}}
\\
&\geq \frac{q \bigl(\alpha /(
\log J), J \bigr)-J}{\sqrt{J}}
\\
&\geq \frac{\sqrt{\log \bigl((\log J)/\alpha \bigr)}}{4}+ \frac{2\log \bigl((
\log J)/\alpha \bigr)}{\sqrt{J}}
\\
&\geq \frac{\sqrt{\log \log (J)-\log (\alpha
)}}{4}
\end{align*}
using the lower bounds for quantiles of the chi-squared distribution established
in \citet [Theorem~5.2]{inglot2010}. We now consider the upper bound. From
the definition of $\#(\widehat{\mathcal I}_{n})$, we infer
$\#(\widehat{\mathcal I}_{n})=\widehat j_{\max}+1\leq \lceil \log _{2}(n^{1/3}/
\underline J)\rceil +1 \leq \log (n^{1/3}/\underline J)+1$ and thus
$\#(\widehat{\mathcal I}_{n})\leq \log (n)$. Consequently, we calculate
for all $J\in \widehat{\mathcal I}_{n}$ and $n$ sufficiently large:
\begin{align*}
\widehat \eta _{J}(\alpha )&\leq \frac{q \bigl(\alpha /(\log n), J
\bigr)-J}{\sqrt{J}}
\\
&\leq 2 \sqrt{\log \bigl((\log n)/\alpha \bigr)}+ \frac{2\log
\bigl((\log n)/\alpha \bigr)}{\sqrt{J}}
\\
&\leq 2\sqrt{\log \bigl((\log n)/\alpha \bigr)}\bigl(1+o(1)\bigr) \leq 4 \sqrt{
\log \log (n)-\log (\alpha )},
\end{align*}
where the second inequality is due to \citet [Lemma~1]{laurent2000}.
\end{proof}

\begin{proof}[Proof of Lemma~B.6]
Result~B.6(i) directly follows from
\citet [Theorem~3.4]{houdre2003}; see also
\citet [Theorem~3.4.8]{GNbook}. We next prove the bounds on
$\Lambda _{1}$, $\Lambda _{2}$, $\Lambda _{3}$, $\Lambda _{4}$ for Result~B.6(ii).%

For the bound on $\Lambda _{1}$, we recall the notation
$U_{i}^{J}=U_{i}Ab^{K}(W_{i})$ with $U_{ij}$ as its $j$th entry for $1\leq j \leq J$, and $U_{i}=Y_{i}-h(X_{i})$ for
$h\in \mathcal H_{0}$. Then, under $\mathcal H_{0}$, we have
\begin{align*}
\Evtex _{h}\bigl[R_{1}^{2}(Z_{1},Z_{2})
\bigr]&\leq \Evtex _{h} \bigl\llvert U_{1}b^{K}(W_{1})'A'
Ab^{K}(W_{2})U_{2} \bigr\rrvert ^{2}
\\
& =
\Evtex _{h} \bigl[\bigl(U^{J}\bigr)' \Evtex
_{h} \bigl[U^{J}\bigl(U^{J}\bigr)'
\bigr] U^{J} \bigr]
\\
&=\sum_{j,j'=1}^{J}\Evtex _{h}[U_{1j}
U_{1j'}]^{2}=V_{J}^{2}.
\end{align*}
For the bound on $\Lambda _{2}$, for any function $\nu $ and
$\kappa $ with $\|\nu \|_{L^{2}(Z)}\leq 1$ and
$\|\kappa \|_{L^{2}(Z)}\leq 1$, respectively, we obtain
\begin{align*}
&\bigl\llvert \Evtex _{h}\bigl[R_{1}(Z_{1},Z_{2})
\nu (Z_{1})\kappa (Z_{2})\bigr] \bigr\rrvert
\\
&\quad  \leq \bigl
\llvert \Evtex _{h}\bigl[U \1_{M} b^{K}(W)'
\nu (Z)\bigr]A' A \Evtex _{h}\bigl[U\1_{M}b^{K}(W)
\kappa (Z)\bigr] \bigr\rrvert
\\
&\quad \leq \bigl\llVert A \Evtex _{h}\bigl[U\1_{M}b^{K}(W)
\kappa (Z)\bigr] \bigr\rrVert \bigl\llVert A \Evtex _{h}\bigl[U
\1_{M}b^{K}(W) \nu (Z)\bigr] \bigr\rrVert
\\
&\quad \leq \bigl\llVert AG_{b}^{1/2} \bigr\rrVert ^{2}
\sqrt{\Evtex \bigl[ \bigl\llvert \Evtex _{h}\bigl[U\1_{M}
\kappa (Z)|W\bigr] \bigr\rrvert ^{2} \bigr]}\times \sqrt{\Evtex \bigl[
\bigl\llvert \Evtex _{h}\bigl[U\1_{M}\nu (Z)|W\bigr] \bigr
\rrvert ^{2} \bigr]}.
\end{align*}
Now observe
$\Evtex  [|\Evtex _{h}[U\1_{M}\kappa (Z)|W]|^{2} ]\leq \Evtex  [\Evtex _{h}[U^{2}|W]
\kappa ^{2}(Z) ]\leq \overline\sigma ^{2}$ by Assumption~2(i) and using that $\|\kappa \|_{L^{2}(Z)}\leq 1$, which yields
the upper bound by using $\| AG_{b}^{1/2}\|=s_{J}^{-1}$.%

For the bound on $\Lambda _{3}$, observe that, for any $z=(u,w)$,
\begin{align*}
\bigl\llvert \Evtex _{h}\bigl[R_{1}^{2}(Z_{1},z)
\bigr] \bigr\rrvert &\leq \Evtex _{h} \bigl\llvert U\1\bigl\{ \llvert U
\rrvert \leq M_{n} \bigr\}b^{K}(W)'A'
Ab^{K}(w)u\1\bigl\{ \llvert u \rrvert \leq M_{n}\bigr\}
\bigr\rrvert ^{2}
\\
&\leq \bigl\llVert Ab^{K}(w)u\1\bigl\{ \llvert u \rrvert \leq
M_{n}\bigr\} \bigr\rrVert ^{2} \Evtex _{h} \bigl
\llVert Ab^{K}(W)U \bigr\rrVert ^{2}
\\
&\leq \overline\sigma
^{2} M_{n}^{2} \zeta _{b,K}^{2}
\bigl\llVert AG_{b}^{1/2} \bigr\rrVert ^{4},
\end{align*}
again by using Assumption~2(i) and hence the upper bound on
$\Lambda _{3}$ follows.%

For the bound on $\Lambda _{4}$, observe that for any
$z_{1}=(u_{1},w_{1})$ and $z_{2}=(u_{2},w_{2})$, we get
\begin{align*}
\bigl\llvert R_{1}(z_{1},z_{2}) \bigr\rrvert &
\leq \bigl\llvert u_{1}\1\bigl\{ \llvert u_{1} \rrvert \leq
M_{n}\bigr\}b^{K}(w_{1})'A'
Ab^{K}(w_{2})u_{2}\1\bigl\{ \llvert
u_{2} \rrvert \leq M_{n}\bigr\} \bigr\rrvert
\\
&\leq \sup_{u,w} \bigl\llVert Ab^{K}(w)u\1\bigl\{
\llvert u \rrvert \leq M_{n}\bigr\} \bigr\rrVert ^{2}\leq
M_{n}^{2} \zeta _{b,K}^{2} \bigl\llVert
AG_{b}^{1/2} \bigr\rrVert ^{2},
\end{align*}
which completes the proof.
\end{proof}

\begin{proof}[Proof of Lemma~B.7]
It suffices to prove (ii) for a simple null
$\mathcal H_{0} =\{h_{0}\}$. For any
$h\in \mathcal H_{1}(\delta ^{\circ}\text{\textsf r}_{n})$, we denote
$B_{J}=(\|\Evtex _{h}[U^{J}]\|-\|h-h_{0}\|_{L^{2}(X)})^{2}$. Recall $J^{\circ}\leq J^{*} < 2J^{\circ}$, applying
$\|\Evtex _{h}[U^{J^{*}}]\|^{2}=\|Q_{J^{*}}(h-h_{0})\|_{L^{2}(X)}^{2}$ and
Lemma~B.1(i), we obtain:
$B_{J^{*}}=  (\|Q_{J^{*}}(h-h_{0})\|_{L^{2}(X)}-\|h-h_{0}\|_{L^{2}(X)}
  )^{2} \leq C_{B}  \text{\textsf r}_{n}^{2}$ for some constant
$C_{B}$. By the inequality
$\|\Evtex _{h}[U^{J^{*}}]\|^{2}\geq \|h-h_{0}\|_{L^{2}(X)}^{2}/2-B_{J^{*}}$,
we have uniformly for $h\in \mathcal H_{1}(\delta ^{\circ }\text{\textsf r}_{n})$:
\begin{align*}
&\mathrm{P}_{h} \bigl(n \widehat{D}_{J^{*}}(h_{0})
\leq 2c_{1}\sqrt{ \log \log n} V_{J^{*}} \bigr)
\\
&\quad =
\mathrm{P}_{h} \biggl( \bigl\llVert \Evtex _{h}
\bigl[U^{J^{*}}\bigr] \bigr\rrVert ^{2}-\widehat{D}_{J^{*}}(h_{0})>
\bigl\llVert \Evtex _{h}\bigl[U^{J^{*}}\bigr] \bigr\rrVert
^{2} - \frac{2c_{1}\sqrt{\log \log n} V_{J^{*}}}{n}
\biggr)&
\\
&\quad \leq \mathrm{P}_{h} \Biggl( \Biggl\llvert \frac{4}{n(n-1)}\sum
_{j=1}^{J^{*}} \sum
_{i< i'} \bigl( U_{ij}U_{i'j}- \Evtex
_{h}[U_{1j}]^{2} \bigr) \Biggr\rrvert >\rho
_{h} \Biggr)
\\
& \qquad {}+ \mathrm{P}_{h} \bigg(\bigg|\frac{4}{n(n-1)}\sum
_{i< i'} \bigl(Y_{i}-h_{0}(X_{i})
\bigr) \bigl(Y_{i'}-h_{0}(X_{i'})
\\
&\qquad {}\times b^{K^{*}}(W_{i})'
\bigl(A'A-\widehat A'\widehat A \bigr)b^{K^{*}}(W_{i'})\bigg|>
\rho _{h} \biggr)
\\
&\quad =T_{1}+T_{2},
\end{align*}%
where
$\rho _{h}=\|h-h_{0}\|_{L^{2}(X)}^{2}/2-2c_{1}n^{-1}\sqrt{\log \log n}V_{J^{*}}
-B_{J^{*}}$. To bound term $T_{1}$, we apply inequality
\eqref{bound:var:V} and Markov's inequality:
%
\begin{align}
\label{Markovs:bound} T_{1}\lesssim n^{-1}s_{J^{*}}^{-2}
\rho _{h}^{-2} \bigl( \llVert h-h_{0} \rrVert
_{L^{2}(X)}^{2}+\bigl(J^{*}\bigr)^{-2p/d_{x}}
\bigr)+n^{-2}V_{J^{*}}^{2}\rho _{h}^{-2}.
\end{align}
In the following, we distinguish between two cases. First, consider the
case where $n^{-2}V_{J^{*}}^{2}\rho _{h}^{-2}$ dominates the right-hand
side. For any $h\in \mathcal H_{1}(\delta ^{\circ }\text{\textsf r}_{n})$, we
have $\|h-h_{0}\|_{L^{2}(X)}\geq \delta ^{\circ}\text{\textsf r}_{n} $ and hence,
we obtain the lower bound
%
\begin{equation}
\label{lower:bound:typeII} \rho _{h}= \llVert h-h_{0} \rrVert
_{L^{2}(X)}^{2}/2-2c_{1}n^{-1}\sqrt{\log \log
n}V_{J^{*}} -B_{J^{*}} \geq \kappa _{0} \text{\textsf
r}_{n}^{2}, 
\end{equation}
where $\kappa _{0}:=(\delta ^{\circ})^{2}/2-C-C_{B}$ for some constant
$C>0$ and $\kappa _{0}>0$ whenever
$\delta ^{\circ}>\sqrt{2(C+C_{B})}$. From inequality
\eqref{Markovs:bound}, we infer
$T_{1}\lesssim n^{-2}V_{J^{*}}^{2}(J^{*})^{4p/d_{x}}=o(1)$. Second, consider
the case where
$n^{-1}s_{J^{*}}^{-2}\rho _{h}^{-2} (\|h-h_{0}\|_{L^{2}(X)}^{2}+(J^{*})^{-2p/d_{x}}
 )$ dominates. For any
$h\in \mathcal H_{1}(\delta ^{\circ}\text{\textsf r}_{n})$, we have
$\|h-h_{0}\|_{L^{2}(X)}^{2}\geq (\delta ^{\circ})^{2}\text{\textsf r}_{n}^{2}
\geq 5c_{1}n^{-1}V_{J^{*}}\sqrt{\log \log n}$ for $\delta ^{\circ}$ sufficiently
large and hence, we obtain
$\rho _{h}\geq \kappa _{1}  \|h-h_{0}\|_{L^{2}(X)}^{2}$ for some constant
$\kappa _{1}:=1/5-C_{B}/(\delta ^{\circ})^{2}$, which is positive for any
$\delta ^{\circ}>\sqrt{5 C_{B}}$. Under Assumption 3, inequality
\eqref{Markovs:bound} yields uniformly for
$h\in \mathcal H_{1}(\delta ^{\circ}\text{\textsf r}_{n})$ that
\begin{align*}
T_{1}\lesssim n^{-1}s_{J^{*}}^{-2} \bigl(
\llVert h-h_{0} \rrVert _{L^{2}(X)}^{-2}+ \llVert
h-h_{0} \rrVert ^{-4}_{L^{2}(X)} \bigl(J^{*}
\bigr)^{-2p/d_{x}} \bigr) \lesssim n^{-1}s_{J^{*}}^{-2}
\text{\textsf r}_{n}^{-2} =o(1).
\end{align*}
Finally, $T_{2}=o(1)$ uniformly for
$h\in \mathcal H_{1}(\delta ^{\circ}\text{\textsf r}_{n})$ by making use of Lemma~\ref{lemma:est:matrices}.
\end{proof}

\begin{proof}[Proof of Lemma~B.8]
Recall the definition of
$\overline J=\sup \{J:  \zeta ^{2}(J)\sqrt{(\log J)/n}\leq
\overline c  s_{J}\}$. Following the proof of
\citet [Lemma C.6]{chen2021}, using Weyl's inequality (see, e.g.,
\citet [Lemma F.1]{ChenChristensen2017}) together with
\citet [Lemma F.7]{ChenChristensen2017}, we obtain that
$|\widehat s_{ J}-s_{J}|\leq c_{0} s_{J}$ uniformly in
$J\in \mathcal I_{n}$ for some $0<c_{0}<1$ with probability approaching
1 uniformly for $h\in \mathcal H$.%

Proof of (i). By making use of the definition of $\widehat J_{\max}$ given
in (2.11), we obtain uniformly for $h\in \mathcal H$:
\begin{equation*}
\begin{split}\mathrm{P}_{h} (\widehat J_{\max}> \overline J ) &\leq
\mathrm{P}_{h} \biggl(\zeta ^{2}(\overline J)\sqrt{\log (
\overline J)/n}< \frac{3}{2}\widehat s_{\overline J} \biggr)
\\
&\leq
\mathrm{P}_{h} \biggl( \zeta ^{2}(\overline J)\sqrt{\log (
\overline J)/n}< \frac{3}{2}(1+c_{0})s_{
\overline J} \biggr)+o(1).
\end{split}\end{equation*}
The upper bound imposed on the growth of $\overline J$ is determined by
a sufficiently large constant $\overline c>0$ and hence, there exists a
constant $\underline c\geq 3(1+c_{0})/2$ such that
$s_{\overline J}^{-1}\zeta ^{2}(\overline J)\sqrt{\log (\overline J)/n}
\geq \underline c$. Consequently, we obtain
\begin{align*}
\mathrm{P}_{h} (\widehat J_{\max}> \overline J ) &\leq
\mathrm{P}_{h} \biggl(s_{\overline J}^{-1}\zeta
^{2}(\overline J) \sqrt{\log (\overline J)/n}< \frac{3}{2}(1+c_{0})
\biggr)+o(1)=o(1).
\end{align*}
Proof of (ii). From the definition of $J^\circ $ given in
(4.3), we have uniformly for $h\in \mathcal H$:
\begin{align*}
\mathrm{P}_{h} \bigl(J^\circ >\widehat J_{\max}
\bigr) &\leq \mathrm{P}_{h} \bigl(n^{-1}
\sqrt{\log \log n} \widehat J_{
\max}^{2p/d_{x}+1/2}\leq \nu_{\widehat J_{\max}}^{2} \bigr).
\end{align*}
By Assumption 3 there is a constant $c>0$ such that $\nu_J^{2} \leq s_J^{2} /c$ for all $J$. We infer as above for some constant
$0<c_{0}<1$ and uniformly in $J\in \mathcal I_{n}$, that $\nu_J^{2} \leq (1-c_0)^{-1}\widehat s_{J}^{2} $ with probability approaching 1, and hence uniformly for $h\in \mathcal H$:
\begin{align*}
\mathrm{P}_{h} \bigl(J^\circ >\widehat J_{\max}
\bigr) &\leq \mathrm{P}_{h} \bigl((1-c_{0})n^{-1}
\sqrt{\log \log n} \widehat J_{
\max}^{2p/d_{x}+1/2}\leq \widehat
s_{\widehat J_{\max}}^{2} \bigr)+o(1).
\end{align*}
Consider the case $\zeta (J)=\sqrt{J}$. The definition of
$\widehat J_{\max}$ in (2.11) yields uniformly for
$h\in \mathcal H$:
\begin{align*}
\mathrm{P}_{h} \bigl(J^\circ >\widehat J_{\max}
\bigr) &\leq \mathrm{P}_{h} \bigl((1-c_{0})\sqrt{\log \log
n} \widehat J_{\max}^{2p/d_{x}-3/2} \leq (\log \overline J) \bigr)+o(1)
\\
&\leq \mathrm{P}_{h} \biggl((1-c_{0})\widehat
s_{\widehat J_{\max}} \sqrt n\leq \frac{2}{3} \sqrt{\log \overline J} \biggl( \frac{
\log \overline J}{\sqrt{\log \log n}} \biggr)^{1/(2p/d_{x}-3/2)} \biggr)+o(1)
\\
&\leq \mathrm{P}_{h} \biggl( (1-c_{0})^{2}s_{\overline J}
\sqrt{n}\leq \frac{2}{3} \sqrt{\log \overline J} \biggl( \frac{\log \overline J}{
\sqrt{\log \log n}} \biggr)^{1/(2p/d_{x}-3/2)} \biggr)+o(1)
\\
&\leq \mathrm{P}_{h} \biggl( \frac{(1-c_{0})^{2}}{
\overline c} \overline J\leq \frac{2}{3} \biggl( \frac{\log \overline J}{\sqrt{
\log \log n}} \biggr)^{1/(2p/d_{x}-3/2)} \biggr)+o(1),
\end{align*}%
where the last inequality follows from the definition of
$\overline J$, that is,
$s_{\overline J}\geq \overline c^{-1}\overline J \sqrt{\log (
\overline J)/n}$. From Assumption~4(iii), that is,
$p\geq 3d_{x}/4$, we infer
$\mathrm{P}_{h}(J^\circ >\widehat J_{\max})=o(1)$ and, in particular,
$\mathrm{P}_{h}(2J^\circ >\widehat J_{\max})=o(1)$ uniformly for
$h\in \mathcal H$. The proof of $\zeta (J)=J$ follows analogously using
the condition $p\geq 7d_{x}/4$.
\end{proof}
\end{appendix}

\end{document}